 \newtheorem{theorem}{Theorem}[section]
 \newtheorem{corollary}[theorem]{Corollary}
 \newtheorem{lemma}[theorem]{Lemma}
 \newtheorem{proposition}[theorem]{Proposition}
 \theoremstyle{definition}
 \theoremstyle{remark}
 \newtheorem{remark}[theorem]{Remark}
 \newtheorem{example}[theorem]{Example}
 \numberwithin{equation}{section}
\newcommand{\CC}{\mathbb{C}}
\newcommand{\NN}{\mathbb{N}}
\newcommand{\RR}{\mathbb{R}}
\newcommand{\supp}{\mathrm{supp}}
\newcommand{\dist}{\mathrm{dist}}
\newcommand{\Ran}{\mathrm{Ran}}
\newcommand{\sgn}{\mathrm{sgn}}
\newcommand{\loc}{\mathrm{loc}}
\newcommand{\Div}{{\mathrm{div}}}
\newcommand{\rot}{{\mathrm{rot}}}
\newcommand{\Id}{\mathrm{d}}
\newcommand{\SPn}[2]{\langle #1|#2\rangle} 
\newcommand{\SPb}[2]{\big\langle #1\big|#2\big\rangle} 
\newcommand{\SPB}[2]{\Big\langle \,#1\,\Big|\,#2\, \Big\rangle}
\newcommand{\ol}[1]{\overline{#1}} 
\newcommand{\ul}[1]{\underline{#1}} 
\newcommand{\mr}[1]{\mathring{#1}}
\newcommand{\nf}[2]{\nicefrac{#1}{#2}}
\newcommand{\eh}{{\nf{1}{2}}}
\newcommand{\mh}{{-\nf{1}{2}}}
\newcommand{\cA}{\mathcal{A}}
\newcommand{\cO}{\mathcal{O}} 
\newcommand{\cC}{\mathcal{C}}
\newcommand{\cD}{\mathcal{D}} 
\newcommand{\cE}{\mathcal{E}}\newcommand{\cQ}{\mathcal{Q}}
\newcommand{\cH}{\mathcal{H}}
\newcommand{\cM}{\mathcal{M}}       
\newcommand{\cZ}{\mathcal{Z}} 
\newcommand{\sC}{\mathscr{C}}
\newcommand{\sD}{\mathscr{D}} 
\newcommand{\sE}{\mathscr{E}}\newcommand{\sQ}{\mathscr{Q}}
\newcommand{\sF}{\mathscr{F}}
\newcommand{\sU}{\mathscr{U}}
\newcommand{\sK}{\mathscr{K}}
\newcommand{\fA}{\mathfrak{A}}
\newcommand{\fP}{\mathfrak{P}}
\newcommand{\fS}{\mathfrak{S}}
\newcommand{\V}[1]{\boldsymbol{#1}}
\newcommand{\vsigma}{\boldsymbol{\sigma}}
\newcommand{\ve}{\varepsilon}
\newcommand{\vp}{\varphi}
\newcommand{\vo}{\varpi}
\newcommand{\vr}{\varrho}
\newcommand{\vt}{\vartheta}
\newcommand{\vs}{\varsigma}
\newcommand{\id}{\mathbbm{1}}                   
\newcommand{\dom}{\cD}                          
\newcommand{\fdom}{\cQ}                         
\newcommand{\HP}{\mathfrak{h}}                  
\newcommand{\LO}{\mathcal{B}}                   
\newcommand{\expv}[1]{\epsilon(#1)}
\newcommand{\ee}{\mathfrak{e}} 
\renewcommand{\Re}{\mathrm{Re}}
\renewcommand{\Im}{\mathrm{Im}}
\renewcommand{\le}{\leqslant}        
\renewcommand{\ge}{\geqslant}  
\begin{document}

\title[Pauli-Fierz operators]{Pauli-Fierz type operators with singular electromagnetic 
potentials on general domains}

\author[O.~Matte]{Oliver~Matte}

\address{Oliver Matte, Institut for Matematik, {Aa}rhus Universitet,
Ny Munkegade 118, DK-8000 Aarhus C, Denmark}
\email{matte@math.au.dk}

\maketitle

\begin{abstract}
We consider Dirichlet realizations of Pauli-Fierz type operators 
generating the dynamics of non-relativistic matter particles which are confined to an arbitrary  
open subset of the Euclidean position space and coupled to quantized radiation fields. We find
sufficient conditions under which their domains and a natural class of operator cores are determined 
by the domains and operator cores of the corresponding Dirichlet-Schr\"{o}dinger operators 
and the radiation field energy. Our results also extend previous ones dealing with the entire Euclidean 
space, since the involved electrostatic potentials might be unbounded at infinity with local singularities 
that can only be controlled in a quadratic form sense, and since locally square-integrable classical 
vector potentials are covered as well.  We further discuss Neumann realizations of 
Pauli-Fierz type operators on Lipschitz domains.

\bigskip

\noindent
{\sc Keywords.} {Pauli-Fierz operator $\cdot$ Self-adjointness $\cdot$ Diamagnetic inequality $\cdot$
 Dirichlet and Neumann realizations}
 
\bigskip

\noindent
{\sc Mathematics Subject Classification (2010)} {47B25 $\cdot$ 81V10}
\end{abstract}
%
%
%

\section{Introduction}\label{sec-intro}

\noindent
A lot of attention has been devoted to the mathematical analysis of physical models for 
a conserved number of non-relativistic quantum-mechanical matter particles in interaction with a
quantized radiation field comprised of an undetermined number of relativistic bosons. 
The prime example for such a model is the standard model of non-relativistic
quantum electrodynamics, where electrons interact with the quantized electromagnetic field
(photon field). In this example quantized field operators are introduced via minimal coupling and
the resulting Hamiltonian is often called the Pauli-Fierz operator. The aim of the present article is to 
extend existing non-perturbative results on the self-adjointness properties of Hamilonians of this type 
\cite{Falconi2015,HaslerHerbst2008,Hiroshima2000esa,Hiroshima2002} to a
larger class of exterior electromagnetic potentials appearing as coefficients in the Hamiltonian.
Furthermore, we shall allow for an arbitrary open subset of the Euclidean space as
position space for the matter particles, while the aforementioned papers deal with the entire
Euclidean space only. We consider general open position spaces
because many interesting effects appear in non-relativistic quantum electrodynamics in bounded 
cavities or on unbounded domains confined by perfectly conducting grounded walls. Prominent
examples are the Casimir or van der Vaals forces; see, e.g., the textbooks
\cite{Dutra2005,Milonni1994}; a detailed discussion of the corresponding formal minimal coupling 
Hamiltonians can also be found in \cite{PowerThirunamachandran1982}. 
Further examples of position spaces which are proper subsets of the Euclidean space are encountered
when the nuclei in a molecular system are treated as static particles of finite extent, as in the
theory of hard-core multi-body Schr\"{o}dinger operators; see, e.g., \cite{ItoSkipsted2014} and
the references given there. We should mention
that, at least so far, the mathematical analysis of minimal coupling Hamiltonians requires the 
introduction of an artificial ultra-violet regularization damping the matter-radiation interaction at 
very high frequencies.

In what follows we shall describe our results in more detail.
Throughout the whole article we assume that $\nu,\tilde{\nu},s\in\NN$ and
$\Lambda\subset\RR^\nu$ is open. We put $\Lambda_*:=\Lambda\times\{1,\ldots,s\}$,
where the second factor in the Cartesian product accounts for spin degrees of freedom (if any)
of the matter particles. Furthermore, $(\cM,\fA,\mu)$ is a $\sigma$-finite measure space such that 
the Hilbert space for a single boson,
\begin{align}\label{defHP}
\HP&:=L^2(\cM,\fA,\mu),
\end{align}
is separable. The measurable function $\omega:\cM\to\RR$ plays the role of the
dispersion relation of a single boson. We always assume that $\omega$ is $\mu$-a.e. strictly 
positive. The symbol $\sF$ denotes the bosonic Fock space over $\HP$. 
Our main goal is to characterize the domain and operator cores of a Dirichlet realization of the 
Pauli-Fierz operator acting in $L^2(\Lambda_*,\sF)$. This operator is formally given by
\begin{align}\label{PF-formal}
\frac{1}{2}(-i\nabla-\V{A}-\vp(\V{G}))^2-\vsigma\cdot\vp(\V{F})
-\vsigma\cdot\V{B}+\Id\Gamma(\omega)+V,
\end{align}
where, for every $\V{x}\in\Lambda$,  the formal vectors 
$\vp(\V{G}_{\V{x}}):=(\vp(G_{1,\V{x}}),\ldots,\vp(G_{\nu,\V{x}}))$ and 
$\vp(\V{F}_{\V{x}}):=(\vp(F_{1,\V{x}}),\ldots,\vp(F_{\tilde{\nu},\V{x}}))$ are tuples of field operators
and $\Id\Gamma(\omega)$ is the radiation field energy. The notations $\sF$, $\vp(f)$, and 
$\Id\Gamma(\omega)$ will be explained in Subsect.~\ref{ssec-Fock}. Furthermore,
$\vsigma:=(\sigma_1,\ldots,\sigma_{\tilde{\nu}})$ is a tuple of Hermitian $s$\texttimes$s$-matrices.
These matrices are the only terms in \eqref{PF-formal} that act on the spin variables in 
$\{1,\ldots,s\}$; see Rem.~\ref{remZeeman} for precise definitions.

The main originality of this article lies in the rather general conditions imposed on the data
in \eqref{PF-formal}. For instance, the only requirement on the positive part of the 
electrostatic potential $V:\Lambda\to\RR$ is local integrability, 
while its negative part is assumed to be {\em form}-bounded with respect to 
$-1/2$ times the Dirichlet-Laplacian with relative form bound $<1$. 
The classical vector potential $\V{A}=(A_1,\ldots,A_\nu):\Lambda\to\RR^\nu$ 
only needs to be locally square-integrable, which is the natural requirement for the construction of 
magnetic Schr\"{o}dinger operators via quadratic forms.
Of course, the classical Zeeman term $\vsigma\cdot\V{B}$ (if any) should contain the curl of $\V{A}$.
In our discussion we may, however, ignore this relation and simply keep the assumptions on $\V{A}$, 
$\V{B}$, and $V$ as general as our arguments permit. In an application where spin degrees of 
freedom are taken into account together with an exterior magnetic field and $V$ as above, our 
results cover the case where the components of $\V{B}:\Lambda\to\RR^{\tilde{\nu}}$ are sums of 
bounded terms and contributions that are infinitesimally form-bounded with respect to the 
negative Dirichlet-Laplacian.

The quantities $\omega$, $\V{G}$, and $\V{F}$ satisfy the weakest assumptions appearing in this 
context either. Namely, to determine the domain of the Dirichlet-Pauli-Fierz operator
we shall eventually assume that
\begin{align}\nonumber
&\V{G}\in L^\infty(\Lambda,\fdom(\omega^{-1}+\omega)^\nu),\quad
\Div\V{G}\in L^\infty(\Lambda,\fdom(\omega^{-1})),
\\\label{hypGFintro}
&\V{F}\in L^\infty(\Lambda,\fdom(\omega^{-1})^{\tilde{\nu}}).
\end{align}
Here $\fdom$ stands for the form domain and the Hilbert space-valued
divergence is understood in a weak sense; see Subsect.~\ref{ssec-Sobolev}.
These conditions are slightly milder than the ones in \cite{HaslerHerbst2008} where the case 
$\Lambda=\RR^\nu$ is treated. 
In applications to cavity quantum electrodynamics the data $(\omega,\V{G},\V{F})$ 
should correspond to solutions of the Maxwell equations with perfect electric conductor boundary 
conditions after suitable assumptions on the regularity of $\partial\Lambda$ have been added. 
This is, however, a physical requirement and the behavior of $\V{G}$, $\V{F}$, and $\Lambda$
at the boundary is in fact immaterial for our results on the Dirichlet-Pauli-Fierz operator to hold.
Dirichlet realizations of the Pauli-Fierz operator on a non-trivial domain can also 
appear for technical reasons when localization arguments are applied to non-confined 
systems as, for instance, in \cite{LiebLoss2003}. In such a case $\V{G}$ and $\V{F}$ do not
necessarily satisfy physical boundary conditions.

The main result of this article (Thm.~\ref{thm-dom} in the case $s=1$,
$\V{F}=\V{0}$, $\V{B}=\V{0}$, with a simple extension to $s>1$ and non-vanishing 
$\V{F}$ and $\V{B}$ in Rem.~\ref{remZeeman}) asserts that the domain of the 
Dirichlet realization of \eqref{PF-formal} is equal to the intersection of the domain of the 
Dirichlet-Schr\"odinger operator corresponding to $(V,\V{A},\V{B})$ with the domain of 
$\Id\Gamma(\omega)$, when the latter two operators are considered as operators in 
$L^2(\Lambda_*,\sF)$ in the canonical way. That is, the domain of the Dirichlet realization of 
\eqref{PF-formal} neither depends on $\V{G}$ nor on $\V{F}$ thanks to the 
$L^\infty$-conditions in \eqref{hypGFintro}. Moreover, Thm.~\ref{thm-dom} and 
Rem.~\ref{remZeeman} identify natural operator cores of the Dirichlet-Pauli-Fierz operator in terms
of the cores of the Dirichlet-Schr\"odinger operator and $\Id\Gamma(\omega)$.

In the case where $\Lambda=\RR^\nu$, $V$ is relatively {\em operator}-bounded with respect to 
$-\tfrac{1}{2}\Delta$ with relative bound $<1$, $\V{A}$ is bounded with bounded and continuous
first derivative, $\V{B}=\rot\V{A}$, and $\V{G}$ and $\V{F}$ satisfy certain slightly stronger hypotheses, 
all results of Thm.~\ref{thm-dom} and Rem.~\ref{remZeeman} are well-known. 
Their first non-perturbative proofs have been given in \cite{Hiroshima2000esa,Hiroshima2002} 
in this case. Starting with the case where only $\V{G}$ is non-zero, the arguments in 
\cite{Hiroshima2000esa} are based on the invariant domain method for the study of essential 
self-adjointness and Feynman-Kac formulas. Then a diamagnetic inequality for the semi-group 
associated with the Pauli-Fierz operator is employed to argue that a 
$-\tfrac{1}{2}\Delta$-small potential $V$  (in the operator sense) is also small with respect to the 
free Pauli-Fierz operator \cite{Hiroshima2000esa,Hiroshima2002}. 
For {\em infinitesimally} Laplace-bounded $V$, one can avoid the use of diamagnetic inequalities in 
the determination of the domain of the Pauli-Fierz operator.
For such $V$ and vanishing $\V{A}$, simpler, analytic proofs have been given 
in \cite{HaslerHerbst2008}. Earlier proofs in a perturbative situation based on the
Kato-Rellich theorem can be found in \cite{Arai1981,BFS1998b}. The article \cite{Arai1981}
also contains a non-perturbative result on the 
dipole approximation to non-relativistic quantum electrodynamics.

To find a natural domain on which the Dirichlet-Pauli-Fierz operator is {\em essentially} self-adjoint it is
actually sufficient to assume that
\begin{align}\label{hypGFintroM}
\V{G}\in L^\infty(\Lambda,\HP^\nu),\quad\Div\V{G}\in L^\infty(\Lambda,\HP),\quad
\V{F}\in L^\infty(\Lambda,\HP^{\tilde{\nu}}).
\end{align}
This has been observed in \cite[\textsection4.3]{Falconi2015}. 
As one example for the application of a general theorem
the latter article explicitly covers the case where $\Lambda=\RR^\nu$, 
$V$ is non-negative and locally square-integrable, $\V{A}$ and $\V{B}$ are zero, and
$\V{G}$ and $\V{F}$ are given by the usual plane wave solutions to the Maxwell equations.
(If $\V{G}$ is an affine function of $\V{x}\in\RR^\nu$, then essential self-adjointness in
an otherwise similar situation also follows from \cite[Ex.~3]{Arai1991}.)
It is, however, clear that the abstract theorem in \cite{Falconi2015} also applies to more general 
situations. Nevertheless, we shall give an alternative proof for the essential self-adjointness under 
the condition \eqref{hypGFintroM} in Thm.~\ref{thmMM}, utilizing a variant of an argument due
to M.~K\"{o}nenberg \cite{KMS2013}. This is because some of the bounds and ideas employed in 
the proof of Thm.~\ref{thmMM} are also needed to characterize the domain of the self-adjoint
realization under the condition \eqref{hypGFintro}, which is the key aspect of our results.

Recall that the Neumann realization of the magnetic Schr\"{o}dinger operator can be defined
as the self-adjoint operator representing a canonical maximal Schr\"{o}dinger form, while the
Dirichlet realization represents a minimal form, which is a restriction of the maximal one. 
We shall mimic these constructions in the presence of quantized fields and, for mathematical
curiosity, we will derive an analogue of our main Thm.~\ref{thm-dom} for the Neumann-Pauli-Fierz 
operator on Lipschitz domains in Sect.~\ref{appNeumann}. In the Neumann case an additional 
boundary condition on $\V{G}$ is required for such an analogue to hold, corresponding to solutions of 
the Maxwell equations with ``perfect magnetic conductor'' boundary conditions. We do, however, not 
know whether the Neumann-Pauli-Fierz operator is of any physical significance, which is also the 
reason why we refrained from investigating more general boundary conditions. As it is the case for
Schr\"{o}dinger forms \cite{SimonJOT1979}, we shall see that the minimal and maximal 
Pauli-Fierz forms agree when $\Lambda=\RR^\nu$.

The organization of this article is given as follows. In Sect.~\ref{sec-prel} we collect some remarks on 
Hilbert space-valued weak derivatives, recall some facts on Fock space calculus, 
and derive some Leibniz rules for vector-valued Sobolev functions that are multiplied by field 
operators. Although many parts of Sect.~\ref{sec-prel} are straightforward or well-known, 
we think that a presentation of these topics taylor-made for our later sections might be convenient for 
the reader. In Sect.~\ref{sec-dia} we add a new, pointwise diamagnetic inequality for a sum of a 
classical and a quantized vector potential to the list of diamagnetic inequalities in non-relativistic 
quantum electrodynamics shown earlier; see the first paragraph of that section for references. 
This pointwise diamagnetic inequality will be used in the crucial step of our proof of 
Thm.~\ref{thm-dom}. Self-adjoint realizations of the Schr\"odinger and 
Pauli-Fierz operators will be defined via quadratic forms in Sect.~\ref{sec-ham}. 
Our main result, Thm.~\ref{thm-dom}, is stated and proved in Sect.~\ref{sec-dom} by further
elaborating on a general strategy that we applied to fiber Hamiltonians in \cite[App.~2]{GMM2016}.
Some examples are provided in Sect.~\ref{appex} before we treat the Neumann case 
in Sect.~\ref{appNeumann}.

\subsubsection*{Some general notation} The symbol $\dom(T)$ denotes the domain of a linear
operator $T$, and $\fdom(T)$ is the form domain of a semi-bounded self-adjoint operator $T$
in a Hilbert space $\sK$. If $T\ge0$, then we consider $\fdom(T)$ as a Hilbert space 
with scalar product 
$\SPn{\phi}{\psi}_{\fdom(T)}=\SPn{T^\eh\phi}{T^\eh\psi}_\sK+\SPn{\phi}{\psi}_\sK$,
$\phi,\psi\in\fdom(T)$. 

If $\mathfrak{t}$ is a quadratic form in $\sK$ that is semi-bounded from below and if $c$ denotes
the corresponding greatest lower bound, then the form norm corresponding to $\mathfrak{t}$ is
given by $\|\psi\|_{\mathfrak{t}}^2=\mathfrak{t}[\psi]+(1-c)\|\psi\|_\sK^2$, $\psi\in\dom(\mathfrak{t})$.
The sesqui-linear form associated with $\mathfrak{t}$ via the polarization identity is denoted
by $\mathfrak{t}[\phi,\psi]$, $\phi,\psi\in\dom(\mathfrak{t})$.

If we write $\Omega\Subset\Lambda$, then $\Omega$ is a subset of $\RR^\nu$ whose closure is
compact and contained in $\Lambda$.

If $\sC\subset L^2(\Lambda)$ or $\sC\subset L^2(\Lambda_*)$ 
is a subspace and $\sE$ a vector space, then we set
\begin{equation}\label{def-CtensorD}
\sC\otimes\sE:=\mathrm{span}_{\CC}\big\{f\psi:\,f\in\sC,\,\psi\in\sE\big\}.
\end{equation}

We shall write $a\wedge b:=\min\{a,b\}$ and $a\vee b:=\max\{a,b\}$, for $a,b\in\RR$.


\section{Preliminaries}\label{sec-prel}

\subsection{Vector-valued weak partial derivatives and divergences}\label{ssec-Sobolev}

\noindent
A well-known complication in the study of magnetic Schr\"{o}dinger operators 
with merely locally square-integrable vector potentials is the fact that
the weak partial derivatives of functions in magnetic Sobolev spaces are
in general not square-integrable. We shall encounter the same difficulty in dealing with the
Fock space-valued functions in the (form) domains of our Pauli-Fierz operators. As a preparation,
we thus collect some basic remarks on weak partial derivatives of Hilbert space-valued
functions in this subsection.

Throughout the whole subsection, $\sK$ is a separable Hilbert space. Let $\sE\subset\sK$ be
a subspace. Then we denote the space of $\sE$-valued test functions on $\Lambda$ by
\begin{align}\label{defvecvalTF}
\sD(\Lambda,\sE)&:=C_0^\infty(\Lambda)\otimes\sE,\quad\sD(\Lambda):=C_0^\infty(\Lambda).
\end{align}

For every $j\in\{1,\ldots,\nu\}$, we say that $\Upsilon_j\in L^1_\loc(\Lambda,\sK)$ is a
weak partial derivative of $\Psi\in L^1_\loc(\Lambda,\sK)$ with respect to $x_j$, iff
\begin{align}\label{defwpd}
\int_{\Lambda}\SPn{\partial_{x_j}\eta(\V{x})}{\Psi(\V{x})}_\sK\Id\V{x}
=-\int_\Lambda\SPn{\eta(\V{x})}{\Upsilon_j(\V{x})}_\sK\Id\V{x}
\end{align}
holds, for all $\eta\in\sD(\Lambda,\sK)$.

\begin{remark}\label{remwpd}
Let $j\in\{1,\ldots,\nu\}$ and $\Psi,\Upsilon_j\in L^1_\loc(\Lambda,\sK)$. Then the following holds:
\begin{enumerate}[leftmargin=0.67cm]
\item[{\rm(1)}] $\Upsilon_j$ is a weak partial derivative with respect to $x_j$ of $\Psi$, if and only if
$\SPn{\phi}{{\Upsilon}_j}_{\sK}$ is a weak partial derivatives with respect to $x_j$ of 
$\SPn{\phi}{\Psi}_{\sK}$, for every $\phi\in\sK$.
In particular, $\Upsilon_j$ is unique in the affirmative case.
\item[{\rm(2)}] 
Let $\sE\subset\sK$ be a total subset.
If $\Upsilon_j$ satisfies \eqref{defwpd} for all $\eta$ of the form
$\eta=f\phi$ with $f\in\sD(\Lambda)$ and $\phi\in\sE$, then, by linearity and dominated convergence, 
it is a weak partial derivative of $\Psi$ with respect to $x_j$. 
\item[{\rm(3)}] Since taking the scalar product with a fixed vector in $\sK$ and the 
$\sK$-valued Bochner-Lebesgue integral commute, $\Upsilon_j$ is a
weak partial derivative of $\Psi$ with respect to $x_j$, if and only if
\begin{align*}
\int_{\Lambda}(\partial_{x_j}\eta)(\V{x})\Psi(\V{x})\Id\V{x}
=-\int_\Lambda\eta(\V{x})\Upsilon_j(\V{x})\Id\V{x},\quad\eta\in\sD(\Lambda).
\end{align*}
\end{enumerate}
\end{remark}

If it exists, then we denote the unique weak partial derivative of $\Psi\in L^1_\loc(\Lambda,\sK)$ with 
respect to $x_j$ by $\partial_{x_j}\Psi$.

\begin{remark}\label{remgradlevel}
If $\Psi\in L^1_\loc(\Lambda,\sK)$ has a weak partial derivative respect to $x_j$,
then $\partial_{x_j}\Psi=0$ almost everywhere on $\{\Psi=0\}$.

This follows from the same assertion in the case $\sK=\CC$ 
(cf. the proof of \cite[Thm.~6.19]{LiebLoss2001})
upon applying it to $\SPn{\phi}{\Psi}_{\sK}$, for every $\phi$ in a countable total subset of $\sK$, and 
taking Rem.~\ref{remwpd}(1) into account.
\end{remark}

Of course, the definition of the weak partial derivatives depends on the topology on $\sK$. 
Hence, we shall sometimes say that they are {\em computed in $\sK$}. Since the coupling functions
appearing in the Pauli-Fierz operators attain values in the domain of certain unbounded operators,
it thus makes sense to note the following:

\begin{remark}\label{remsKfdomT}
Let $j\in\{1,\ldots,\nu\}$, $p\in[1,\infty]$, $T$ be a non-negative self-adjoint operator in $\sK$, and 
$\Psi\in L_\loc^p(\Lambda,\fdom(T))$. Then $\Psi$ has a weak partial derivative with respect to
$x_j$ computed in $\sK$ and satisfying $\partial_{x_j}\Psi\in L_\loc^p(\Lambda,\fdom(T))$, if and
only if it has a weak partial derivative with respect to $x_j$ computed in $\fdom(T)$ and belonging
to $L_\loc^p(\Lambda,\fdom(T))$.
The same assertion holds true, if the subscripts ``loc'' are dropped everywhere.

We drop the straightforward proof which uses that $\Ran(T+1)=\sK$, that $\dom(T)$ is dense in 
$\fdom(T)$ with respect to the form norm, and Rem.~\ref{remwpd}(2) with $\sE=\dom(T)$.
\end{remark}

\begin{remark}\label{remSob}
Let $p\in[1,\infty]$, $j\in\{1,\ldots,\nu\}$, and assume that
$\Psi\in L^p_\loc(\Lambda,\sK)$ has a weak partial derivative with respect to
$x_j$ such that $\partial_{x_j}\Psi\in L^p_\loc(\Lambda,\sK)$. Furthermore,
let $\rho\in C_0^\infty(\RR^\nu,\RR)$ satisfy $\rho\ge0$, $\rho(\V{x})=0$, for $|\V{x}|\ge1$, and
$\|\rho\|_1=1$. Set $\Lambda_n:=\{\V{y}\in\Lambda:\dist(\V{y},\partial\Lambda)>1/n\}$
and $\rho_n(\V{x}):=n^{\nu}\rho(n\V{x})$, $\V{x}\in\RR^\nu$, for all $n\in\NN$. Define
\begin{align}\label{klaus1}
\Psi_n(\V{x}):=\int_{\Lambda}\rho_n(\V{x}-\V{y})\Psi(\V{y})\Id\V{y},\quad 
\V{x}\in\Lambda_n,\,n\in\NN.
\end{align}
Then $\Psi_n\in C^\infty(\Lambda_n,\sK)$, if $\Lambda_n\not=\emptyset$, 
and, for every measurable $\Omega\Subset\Lambda$,
\begin{align*}
\|\Psi_n-\Psi\|_{L^p(\Omega,\sK)}+
\|\partial_{x_j}\Psi_n-\partial_{x_j}\Psi\|_{L^p(\Omega,\sK)}\xrightarrow{\;\;n\to\infty\;\;}0,
\quad\text{if $p<\infty$.}
\end{align*}
If $p=\infty$, then $\Psi_n\to\Psi$ and $\partial_{x_j}\Psi_n\to\partial_{x_j}\Psi$ a.e. on $\Lambda$. 

See, e.g., \cite[\textsection4.2.1]{EvansGariepy} for a proof in the scalar case.
On account of Rem.~\ref{remwpd}(3) this proof carries over to the vector-valued case.
To cover the case $p=\infty$, we also use the fact that the Lebesgue point theorem holds for the
Bochner-Lebesgue integral as well \cite{HillePhillips1957}.
\end{remark}

The next lemma will be used to prove our diamagnetic inequality.
Given a representative $\Psi(\cdot)$ of $\Psi\in L_\loc^1(\Omega,\sK)$ and $\delta>0$, we define
\begin{align}\nonumber
Z_\delta(\Psi)&:=(\delta^2+\|\Psi\|_{\sK}^2)^\eh,\quad{\fS_{\delta,\Psi}}:=Z_\delta(\Psi)^{-1}\Psi,
\\\label{defsgn}
{\fS_{\Psi}}(\V{x})&:=\left\{\begin{array}{ll}
\|\Psi(\V{x})\|_\sK^{-1}\Psi(\V{x}),&\V{x}\in\{\Psi(\cdot)\not=0\},
\\
0,& \V{x}\in\{\Psi(\cdot)=0\}.
\end{array}\right.
\end{align}

\begin{lemma}\label{lem-abs-val}
Let $j\in\{1,\ldots,\nu\}$, $p\in[1,\infty]$, $\delta>0$, and let $\Psi\in L_\loc^{p}(\Lambda,\sK)$ have a 
weak partial derivative with respect to $x_j$ satisfying $\partial_{x_j}\Psi\in L_\loc^{p}(\Lambda,\sK)$. 
Then $\|\Psi\|_\sK,Z_\delta(\Psi)\in L_{\loc}^{p}(\Lambda)$ have weak partial derivatives with 
respect to $x_j$ as well. The latter are in $L_{\loc}^{p}(\Lambda)$ and given by
\begin{align}\label{dia00}
\partial_{x_j}Z_\delta(\Psi)&=\Re\SPn{{\fS_{\delta,\Psi}}}{\partial_{x_j}\Psi}_{\sK},\quad
\partial_{x_j}\|\Psi\|_\sK=\Re\SPn{{\fS_\Psi}}{\partial_{x_j}\Psi}_{\sK}.
\end{align}
\end{lemma}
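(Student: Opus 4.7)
The approach is three-step: verify the formula for smooth $\Psi$ by direct calculation, extend to general $\Psi$ by mollification (Rem.~\ref{remSob}), and then obtain the norm version by sending $\delta\downarrow 0$, with Rem.~\ref{remgradlevel} bridging the null set $\{\Psi=0\}$.

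First, if $\Psi\in C^\infty(\Omega,\sK)$ on some open $\Omega\subset\Lambda$, then $\|\Psi(\cdot)\|_\sK^2=\SPn{\Psi}{\Psi}_\sK$ is classically smooth with $\partial_{x_j}\|\Psi\|_\sK^2=2\Re\SPn{\Psi}{\partial_{x_j}\Psi}_\sK$, and composing with $t\mapsto(\delta^2+t)^{\eh}$ yields $Z_\delta(\Psi)\in C^\infty(\Omega)$ with classical derivative $\Re\SPn{\fS_{\delta,\Psi}}{\partial_{x_j}\Psi}_\sK$. For general $\Psi$, I apply Rem.~\ref{remSob} with mollifications $\Psi_n$ and pass to the limit in the integration-by-parts identity
\begin{equation*}
\int_{\Lambda_n}(\partial_{x_j}\eta)\,Z_\delta(\Psi_n)\,\Id\V{x}
=-\int_{\Lambda_n}\eta\,\Re\SPn{\fS_{\delta,\Psi_n}}{\partial_{x_j}\Psi_n}_\sK\Id\V{x},
\qquad\eta\in\sD(\Lambda),
\end{equation*}
valid for large $n$. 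The Lipschitz bound $|Z_\delta(\Psi_n)-Z_\delta(\Psi)|\le\|\Psi_n-\Psi\|_\sK$ handles the left side. For the right side the decomposition $\fS_{\delta,\Psi_n}-\fS_{\delta,\Psi}=Z_\delta(\Psi_n)^{-1}(\Psi_n-\Psi)+(Z_\delta(\Psi_n)^{-1}-Z_\delta(\Psi)^{-1})\Psi$ together with $Z_\delta\ge\delta$ gives pointwise convergence $\fS_{\delta,\Psi_n}\to\fS_{\delta,\Psi}$ along a subsequence, and the uniform bound $\|\fS_{\delta,\Psi_n}\|_\sK\le 1$ combined with $\partial_{x_j}\Psi_n\to\partial_{x_j}\Psi$ in $L^p_\loc$ (or a.e. with a local $L^\infty$ bound) lets dominated convergence close the argument. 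This yields the first formula of \eqref{dia00}, whose right-hand side lies in $L^p_\loc(\Lambda)$ because $\|\fS_{\delta,\Psi}\|_\sK\le 1$.

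Finally, I send $\delta\downarrow 0$. The pointwise bound $|Z_\delta(\Psi)-\|\Psi\|_\sK|\le\delta$ gives locally uniform convergence $Z_\delta(\Psi)\to\|\Psi\|_\sK$. On $\{\Psi(\cdot)\neq 0\}$ we have $\fS_{\delta,\Psi}\to\fS_\Psi$ pointwise, while on $\{\Psi(\cdot)=0\}$ both $\fS_{\delta,\Psi}$ and $\fS_\Psi$ vanish; crucially, Rem.~\ref{remgradlevel} ensures $\partial_{x_j}\Psi=0$ a.e. on $\{\Psi=0\}$, so $\Re\SPn{\fS_\Psi}{\partial_{x_j}\Psi}_\sK$ is unambiguously defined and equal to zero there, matching the limit of the $\delta$-approximations. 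Dominated convergence with majorant $\|\partial_{x_j}\Psi\|_\sK\in L^p_\loc$ (respectively a.e. convergence plus an $L^\infty$ bound when $p=\infty$) then yields $\Re\SPn{\fS_{\delta,\Psi}}{\partial_{x_j}\Psi}_\sK\to\Re\SPn{\fS_\Psi}{\partial_{x_j}\Psi}_\sK$ in $L^p_\loc$, and passing to the limit in the weak-derivative identity produces the second formula of \eqref{dia00}. The main obstacle is precisely the $0/0$ ambiguity of $\fS_\Psi$ on $\{\Psi=0\}$ in the limit; Rem.~\ref{remgradlevel} is the key input that renders the convention $\fS_\Psi:=0$ there consistent with the limit.
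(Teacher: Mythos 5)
Your proof is correct and follows essentially the same route as the paper: mollify, establish the identity with $Z_\delta(\Psi_n)$, pass to the limit using the Lipschitz bound $|Z_\delta(\Psi_n)-Z_\delta(\Psi)|\le\|\Psi_n-\Psi\|_\sK$ for the left side and a subsequence/dominated-convergence argument for the right side, then send $\delta\downarrow0$. One minor remark: your appeal to Rem.~\ref{remgradlevel} in the final step is not actually needed. By the definitions in \eqref{defsgn} both $\fS_{\delta,\Psi}$ and $\fS_\Psi$ vanish identically on $\{\Psi(\cdot)=0\}$ (indeed $\fS_{\delta,\Psi}=Z_\delta(\Psi)^{-1}\Psi=0$ there), so the pointwise convergence $\fS_{\delta,\Psi}\to\fS_\Psi$ holds everywhere without any input about $\partial_{x_j}\Psi$ on the zero set, and the paper's proof does not invoke Rem.~\ref{remgradlevel} at this point; it only serves to show the formula would be insensitive to a different convention for $\fS_\Psi$ on $\{\Psi=0\}$, which is not what is being asked.
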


\begin{proof} 
Let $f\in\sD(\Lambda)$. 
Then we find some open $\Omega\Subset\Lambda$ such that $\supp(f)\subset\Omega$.
For the $\Psi_n$ defined in \eqref{klaus1} and sufficiently large $n_0\in\NN$, we then get
\begin{align}\label{uffe1}
\int_{\Omega}(\partial_{x_j}f){Z_{\delta}(\Psi_n)}\Id\V{x}
&=-\int_{\Omega}f{\Re\SPn{\fS_{\delta,\Psi_n}}{\partial_{x_j}\Psi_n}_{\sK}}\Id\V{x},\quad n\ge n_0.
\end{align}
On account of
$|Z_{\delta}(\Psi_n)-Z_\delta(\Psi)|\le|\|\Psi_n\|_{\sK}-\|\Psi\|_{\sK}|\le\|\Psi_n-\Psi\|_{\sK}$
and $\Psi_n\to\Psi$ in $L^1(\Omega,\sK)$ (because 
$L^{p}_\loc(\Lambda,\sK)\subset L^{1}_\loc(\Lambda,\sK)$), we see that, as $n\to\infty$, 
the left hand side of \eqref{uffe1} converges to the left hand side of 
\begin{align}\label{tim1}
\int_\Omega(\partial_{x_j}f)Z_\delta(\Psi)\Id\V{x}=-
\int_\Omega f{\Re\SPn{\fS_{\delta,\Psi}}{\partial_{x_j}\Psi}_{\sK}}\Id\V{x}.
\end{align}
Employing the Riesz-Fischer theorem for $L^1(\Omega,\sK)$ we can find integers
$n_0\le n_1< n_2<\ldots$ such that $\Psi_{n_\ell}\to\Psi$ and 
$\partial_{x_j}\Psi_{n_\ell}\to\partial_{x_j}\Psi$, a.e. on $\Omega$ as $\ell\to\infty$. 
The Riesz-Fischer theorem further implies the existence of some $R\in L^1(\Omega)$ such that 
$\|\partial_{x_j}\Psi_{n_\ell}\|_{\sK}\le R$, a.e. on $\Omega$, for every $\ell\in\NN$. 
(This is not always stated explicitly in every textbook treating the Riesz-Fischer theorem,
but it can usually be read off from the proof; see \cite[Thm.~2.7]{LiebLoss2001}.)
Then the dominated convergence theorem guarantees that, along a subsequence, the right hand side
of \eqref{uffe1} converges to the right hand side of \eqref{tim1} as well. 
Altogether this proves the first identity in \eqref{dia00}. 

Since $\Psi,\partial_{x_j}\Psi\in L^1(\Omega,\sK)$, we may now pass to the limit 
$\delta\downarrow0$ in \eqref{tim1} by dominated convergence,
which yields the second identity in \eqref{dia00}. 
\end{proof}

Next, we fix some conventions concerning weak vector-valued
divergences of $\nu$-tuples of $\sK$-valued functions. 
We shall say that $q\in L^1_\loc(\Lambda,\sK)$ is a weak divergence (computed in $\sK$)
of $\V{G}=(G_1,\ldots,G_\nu)\in L^1_\loc(\Lambda,\sK^\nu)$, iff
\begin{align}\label{defwdiv}
\int_\Lambda\SPn{\nabla \eta(\V{x})}{\V{G}({\V{x}})}_{\sK^\nu}\Id\V{x}
=-\int_\Lambda\SPn{\eta(\V{x})}{q(\V{x})}_\sK\Id\V{x},
\end{align}
for all $\eta\in\sD(\Lambda,\sK)$. Then we also write $\Div\V{G}:=q$.

If $\sK\subset\HP$ is a subspace of the one-boson space, then we shall usually write
$\V{G}_{\V{x}}:=(G_{1,\V{x}},\ldots,G_{\nu,\V{x}}):=\V{G}(\V{x})$ and 
$q_{\V{x}}:=q(\V{x})$ for the latter objects.

\begin{remark}\label{remwdiv}
As in Rem.~\ref{remwpd}(1) we can show that a weak divergence (if any) of 
$\V{G}\in L^1_\loc(\Lambda,\sK^\nu)$ is necessarily unique. Furthermore, to conclude that
$q\in L^1_\loc(\Lambda,\sK)$ is a weak divergence of $\V{G}$ is suffices to check
\eqref{defwdiv} only for test functions of the form $\eta=f\phi$ with $f\in\sD(\Lambda)$ and
$\phi\in\sE$, where $\sE$ is some fixed total subset of $\sK$. Finally, if $p\in[1,\infty]$ and 
$T$ is a non-negative self-adjoint operator in $\sK$, then
$\V{G}\in L^p_\loc(\Lambda,\fdom(T)^\nu)$ has a weak divergence computed in $\sK$ satisfying
$\Div\V{G}\in L_\loc^p(\Lambda,\fdom(T))$, if and only if it has a weak divergence computed in
$\fdom(T)$ which belongs to $L_\loc^p(\Lambda,\fdom(T))$. The last assertion still holds true,
if the subscripts ``loc'' are dropped everywhere.
\end{remark}

\begin{lemma}\label{LeibnizL1div}
Let $p\in[1,\infty]$ with conjugated exponent $p'$. Let
$\sK_1,\sK_2,\sK_3$ be real or complex separable Hilbert spaces and 
$b:\sK_1\times\sK_2\to\sK_3$ be real bilinear and continuous. Suppose that 
$\V{G}\in L^p_\loc(\Lambda,\sK_1^\nu)$ has a weak divergence $q\in L^p_\loc(\Lambda,\sK_1)$ 
and that $\Psi\in L^{p'}_\loc(\Lambda,\sK_2)$ has weak partial derivatives 
$\partial_{x_1}\Psi,\ldots,\partial_{x_\nu}\Psi\in L^{p'}_\loc(\Lambda,\sK_2)$. 
Then $b(\V{G},\Psi):=(b(G_1,\Psi),\ldots,b(G_\nu,\Psi))\in L^1_\loc(\Lambda,\sK_3^\nu)$ 
has the weak divergence 
\begin{align}\label{Leibnizbdiv}
\Div b(\V{G},\Psi)=b(q,\Psi)+\sum_{j=1}^\nu b(G_j,\partial_{x_j}\Psi)
\quad\text{in $L^1_\loc(\Lambda,\sK_3)$.}
\end{align}
\end{lemma}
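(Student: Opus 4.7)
The plan is to establish \eqref{Leibnizbdiv} by a double regularization argument. First I would apply the mollification construction of Rem.~\ref{remSob} componentwise to $\V{G}$, noting that convolution with a scalar smooth mollifier commutes with the weak divergence in $\sK_1$ (which one checks by testing against $\SPn{\phi}{\cdot}_{\sK_1}$ for $\phi$ in a countable total subset of $\sK_1$ and invoking the scalar Friedrichs mollification). This produces $\V{G}_n\in C^\infty(\Lambda_n,\sK_1^\nu)$ such that $\V{G}_n\to\V{G}$ in $L^p(\Omega,\sK_1^\nu)$ and $\Div\V{G}_n\to q$ in $L^p(\Omega,\sK_1)$ for every $\Omega\Subset\Lambda$; in the endpoint case $p=\infty$ the convergence is only a.e., but with the uniform control $\|\V{G}_n\|_{L^\infty(\Omega)}\le\|\V{G}\|_{L^\infty(\Omega)}$. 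An analogous mollification yields $\Psi_m\in C^\infty(\Lambda_m,\sK_2)$ with $\Psi_m\to\Psi$ and $\partial_{x_j}\Psi_m\to\partial_{x_j}\Psi$ in $L^{p'}(\Omega,\sK_2)$, respectively a.e. with uniform $L^\infty$-bound if $p'=\infty$.

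Once both $\V{G}_n$ and $\Psi_m$ are smooth, real bilinearity and continuity of $b$ give the pointwise classical Leibniz rule
\begin{equation*}
\Div b(\V{G}_n,\Psi_m)=b(\Div\V{G}_n,\Psi_m)+\sum_{j=1}^\nu b(G_{n,j},\partial_{x_j}\Psi_m)
\end{equation*}
on $\Lambda_n\cap\Lambda_m$, which a fortiori holds as a weak divergence identity against any $\eta\in\sD(\Lambda,\sK_3)$ whose support lies in $\Omega\Subset\Lambda_n\cap\Lambda_m$. I would then fix $n$ and send $m\to\infty$: since $\V{G}_n$ and $\Div\V{G}_n$ are smooth and hence locally bounded, continuity of $b$ together with the $L^{p'}(\Omega)$-convergence of $\Psi_m$ and $\partial_{x_j}\Psi_m$ (with dominated convergence if $p'=\infty$) transfers the integrated identity to the pair $(\V{G}_n,\Psi)$.

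Finally, I would pass $n\to\infty$. Fixing $\eta\in\sD(\Lambda,\sK_3)$ with support in $\Omega\Subset\Lambda$, continuity of $b$ (with some operator bound $C_b$) and H\"older's inequality in $L^p(\Omega)\times L^{p'}(\Omega)$ give
\begin{equation*}
\|b(\V{G}_n-\V{G},\Psi)\|_{L^1(\Omega,\sK_3^\nu)}\le C_b\|\V{G}_n-\V{G}\|_{L^p(\Omega)}\|\Psi\|_{L^{p'}(\Omega)}\xrightarrow{\;n\to\infty\;}0,
\end{equation*}
and analogous H\"older bounds control the two summands on the right-hand side of the Leibniz formula. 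Integrated against $\nabla\eta$ and $\eta$ respectively, this yields \eqref{Leibnizbdiv}; local integrability of $b(\V{G},\Psi)$ in $\sK_3^\nu$ is automatic from the same H\"older bound.

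The main obstacle is the pair of endpoint cases $p\in\{1,\infty\}$, where one of the mollified sequences converges only almost everywhere rather than in $L^p$-norm, so H\"older cannot be applied verbatim. In these situations I would replace H\"older by dominated convergence, exploiting the uniform $L^\infty$-bound of whichever mollified sequence inherits it (e.g.\ $\|\V{G}_n\|_{L^\infty(\Omega)}\le\|\V{G}\|_{L^\infty(\Omega)}$ when $p=\infty$), continuity of $b$, and the local $L^1$-integrability of the opposing factor to produce a single $L^1(\Omega)$-dominating function for each term. This closes the argument for all $p\in[1,\infty]$.
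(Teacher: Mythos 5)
Your proof is correct, but it follows a genuinely different route from the paper's. You mollify \emph{both} $\V{G}$ and $\Psi$, invoke the classical pointwise Leibniz rule for a continuous bilinear map applied to two smooth Hilbert-space-valued functions, and then pass to the limit in two stages; the extra fact you have to supply is that the Friedrichs mollification commutes with the vector-valued weak divergence (which you correctly reduce to the scalar case by pairing with a total subset of $\sK_1$). The paper, by contrast, never regularizes $\V{G}$ at all: it mollifies $\Psi$ and then composes with a finite-rank orthogonal projection $P_n$ in $\sK_2$, so that the approximant $\Phi_n=P_n\Psi_n$ takes values in a finite-dimensional subspace. This finite-dimensionality, together with a Riesz-representation rewriting of the pairings, lets one feed a purpose-built $\sK_1$-valued test function $\tilde\eta_n$ directly into the defining relation \eqref{defwdiv} for $\Div\V{G}=q$, producing the Leibniz identity \eqref{Leibnizbdiv0} for $(\V{G},\Phi_n)$ without any smoothing of $\V{G}$; only a single limit in $n$ is then needed. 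Both strategies are sound. Yours is conceptually more elementary (Leibniz for smooth functions plus limits) at the cost of a preliminary commutation lemma and a two-parameter limiting scheme; the paper's avoids the commutation step but trades it for the projection/Riesz manipulation. One small imprecision in your write-up: since a weak divergence of $\V{G}$ does not entail weak partial derivatives of the individual components $G_j$, you cannot literally apply Rem.~\ref{remSob} ``componentwise'' to $\V{G}$; what you actually use (and your parenthetical makes clear) is the basic mollification estimate for $\V{G}_n\to\V{G}$ combined with the identity $\Div\V{G}_n=q*\rho_n$ on $\Lambda_n$ and mollification for $q$. With that reading the argument closes correctly for all $p\in[1,\infty]$, including the endpoints handled by dominated convergence.
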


\begin{proof}
Let $\{e_\ell:\ell\in\NN\}$ be an orthonormal basis of $\sK_2$ and define the projections
$P_n\phi:=\sum_{\ell=1}^n\SPn{e_\ell}{\psi}_{\sK_2}e_\ell$, $\psi\in\sK_2$, $n\in\NN$.
Define $\Psi_n$, $n\in\NN$, as in \eqref{klaus1} and put $\Phi_n:=P_n\Psi_n$.
Since $P_n\to\id_{\sK_2}$ strongly, as $n\to\infty$, it follows from Rem.~\ref{remSob} that
$\Phi_n\to\Psi$ and $\partial_{x_j}\Phi_n\to\partial_{x_j}\Psi$ in $L^{p'}(\Omega,\sK_2)$,
for all measurable $\Omega\Subset\Lambda$ and $j\in\{1,\ldots,\nu\}$, provided that $p'<\infty$.
If $p'=\infty$, then $\Phi_n\to\Psi$ and $\partial_{x_j}\Phi_n\to\partial_{x_j}\Psi$ a.e. on $\Lambda$.

Now let $n\in\NN$ with $\Lambda_n\not=\emptyset$ and
$\eta\in\sD(\Lambda_n,\sK_3)$. Then $\eta=\sum_{i=1}^m\eta_i\phi_i$, for some
$\eta_i\in\sD(\Lambda_n)$, $\phi_i\in\sK_3$, and $m\in\NN$.
By virtue of Riesz' representation theorem we find vectors $g_{i,\ell}\in\sK_1$
representing the bounded real linear functionals 
$\sK_1\ni G\mapsto \SPn{\phi_i}{b(G,e_\ell)}_{\sK_3}$.
Then $\SPn{\eta}{b({q},\Phi_n)}_{\sK_3}=\SPn{\tilde{\eta}_n}{{q}}_{\sK_1}$ as well as
$\SPn{\partial_{x_j}\eta}{b({G_j},\Phi_n)}_{\sK_3}=\SPn{\partial_{x_j}\tilde{\eta}_n}{{G_j}}_{\sK_1}
-\SPn{\eta}{b(G_j,\partial_{x_j}\Phi_n)}_{\sK_3}$
with $\tilde{\eta}_n:=\sum_{i=1}^m\sum_{\ell=1}^n\eta_i\SPn{\Psi_n}{e_\ell}_{\sK_2}g_{i,\ell}$,
so that $\tilde{\eta}_n\in\sD(\Lambda_n,\sK_1)$. Using these observations, we deduce that 
$b(\V{G},\Phi_n)\in L^1_\loc(\Lambda_n,\sK_3)$ has the weak divergence
\begin{align}\label{Leibnizbdiv0}
\Div b(\V{G},\Phi_n)&=b(q,\Phi_n)+\sum_{j=1}^\nu b(G_j,\partial_{x_j}\Phi_n)
\quad\text{in $L^1_\loc(\Lambda_n,\sK_3)$.}
\end{align}

Next, let $\Omega\Subset\Lambda$ be measurable. 
If $p'=\infty$, then we pick some compact $K\subset\Lambda$
with $\ol{\Omega}\subset\mr{K}$ and observe that, for all sufficiently large $n\in\NN$, we have the
dominations $\|\Phi_n\|_{\sK_2}\le\mathrm{ess\,sup}_K\|\Psi\|_{\sK_2}$
and $\|\partial_{x_j}\Phi_n\|_{\sK_2}\le\mathrm{ess\,sup}_K\|\partial_{x_j}\Psi\|_{\sK_2}$,
$j\in\{1,\ldots,\nu\}$, a.e. on $\Omega$.
If $p'<\infty$, then we find integers $1\le n_1<n_2<\ldots$ and $R,R'\in L^{p'}(\Omega)$ such that
$\Phi_{n_\ell}\to\Psi$ and $\partial_{x_j}\Phi_{n_\ell}\to\partial_{x_j}\Psi$, a.e. on $\Omega$
as $\ell\to\infty$, as well as $\|\Phi_{n_\ell}\|_{\sK_2}\le R$
and $\|\partial_{x_j}\Phi_{n_\ell}\|_{\sK_2}\le R'$, a.e. on $\Omega$ for all $\ell\in\NN$.
In all cases, \eqref{Leibnizbdiv} now follows from the defining relation \eqref{defwdiv},
\eqref{Leibnizbdiv0}, the boundedness of $b$, and the dominated convergence theorem.
\end{proof}


\subsection{Some Fock space calculus}\label{ssec-Fock}

\noindent
In this section we recall the definition of the bosonic Fock space and introduce
some important operators acting in it via the Weyl representation. A textbook exposition
of the latter can be found, e.g., in \cite{Parthasarathy1992}.

Recall that the $\sigma$-finite measure space $(\cM,\fA,\mu)$ and the corresponding, by assumption
separable $L^2$-space $\HP$ were introduced in \eqref{defHP} and the paragraph preceding it.
For every $n\in\NN$, let $\mu^n$ denote the $n$-fold product measure of $\mu$ with itself 
defined on the $n$-fold product $\sigma$-algebra $\fA^n$. Let $\sF^{(n)}$ denote the
closed subspace in $L^2(\cM^n,\fA^n,\mu^n)$ of all its elements $\psi^{(n)}$ satisfying
$$
\psi^{(n)}(k_{\pi(1)},\ldots,k_{\pi(n)})=
\psi^{(n)}(k_{1},\ldots,k_{n}),\quad\text{$\mu^n$-a.e. $(k_{1},\ldots,k_{n})\in\cM^n$,}
$$
for all permutations $\pi$ of $\{1,\ldots,n\}$. Then the bosonic Fock space $\sF$ modeled over
$\HP$ is given by
\begin{equation*}
\sF:=\CC\oplus\bigoplus_{n\in\NN}\sF^{(n)}.
\end{equation*}
For every $h\in\HP$, the corresponding exponential vector $\expv{h}\in\sF$ is defined by
\begin{align*}
\expv{h}:=(1,h,2^\mh h^{\otimes_2},\ldots,(n!)^\mh h^{\otimes_n},\ldots\:),
\end{align*}
where $h^{\otimes_n}(k_1,\ldots,k_1):=h(k_1)\cdots h(k_n)$, $\mu^n$-a.e. We observe that
\begin{align}\label{SPexpv}
\SPn{\expv{g}}{\expv{h}}&=e^{\SPn{g}{h}},\quad g,h\in\HP.
\end{align}

Let $f\in\HP$ and $U\in\sU[\HP]$, where $\sU[\sK]$ denotes the set of unitary operators on some 
Hilbert space $\sK$ equipped with the topology corresponding to strong convergence of operators. 
We also set $\cE[\mathfrak{d}]:=\{\expv{h}:h\in\mathfrak{d}\}$ and
$\cC[\mathfrak{d}]:=\mathrm{span}\cE[\mathfrak{d}]$, for any subset $\mathfrak{d}\subset\HP$.
Then $\cE[\HP]$ is linearly independent and $\cC[\mathfrak{d}]$ is dense in $\sF$,
whenever $\mathfrak{d}$ is dense in $\HP$. In particular, the prescription
\begin{align}\label{defWeyl}
W(f,U)\expv{h}:=e^{-\|f\|^2/2-\SPn{f}{Uh}}\expv{f+Uh},\quad h\in\HP,
\end{align}
determines a linear bijection on $\cC[\HP]$.
Since $W(f,U)$ turns out to be isometric, it extends uniquely
to a unitary operator on $\sF$. The latter is again denoted by $W(f,U)$
and called the Weyl operator corresponding to $f$ and $U$. 
Computing on exponential vectors, we verify the Weyl relations
\begin{align}\label{Weylrel}
W(f_1,U_1)W(f_2,U_2)=e^{-i\Im\SPn{f_1}{U_1f_2}}W(f_1+U_1f_2,U_1U_2),
\end{align}
for all $f_1,f_2\in\HP$ and $U_1,U_2\in\sU[\HP]$.
The thus obtained Weyl representation 
$W:\HP\times\sU[\HP]\to\sU[\sF]$, $(f,U)\mapsto W(f,U)$, is strongly continuous. 

These remarks imply that $(W(-itf,\id))_{t\in\RR}$ is a strongly continuous unitary group, whose 
self-adjoint generator is denoted by $\vp(f)$ and called the field operator corresponding to $f\in\HP$. 
In view of \eqref{SPexpv} and \eqref{defWeyl} we then have, for instance,
\begin{align}\label{mevpexpv}
\SPn{\expv{g}}{\vp(f)\expv{h}}&=\big(\SPn{f}{h}+\SPn{g}{f}\big)e^{\SPn{g}{h}},\quad f,g,h\in\HP.
\end{align}
Recall that, throughout the whole article, $\omega:\cM\to[0,\infty)$ is assumed to be measurable 
and $\mu$-a.e. strictly positive. We denote the associated maximal multiplication operator again by 
$\omega$. Then $(W(0,e^{-it\omega}))_{t\in\RR}$ is a strongly continuous unitary group as well. 
Its generator is denoted by $\Id\Gamma(\omega)$ and called the 
(differential) second quantization of $\omega$. 

We shall sometimes use the following fact, where $\mathfrak{d}$ might for instance be
$\dom(\omega^n)$ with $n\in\NN$ or the set of analytic vectors of $\omega$.

\begin{lemma}\label{lemdGammaesa} 
Let $\mathfrak{d}$ be any subset of $\dom(\omega)$ which is dense in $\HP$ and such that 
$e^{-it\omega}\mathfrak{d}\subset\mathfrak{d}$, for all $t\in\RR$.
Then $\Id\Gamma(\omega)$ is essentially self-adjoint on $\cC[\mathfrak{d}]$.
\end{lemma}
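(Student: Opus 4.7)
The plan is to invoke the standard invariant-domain criterion for essential self-adjointness: if $A$ is a self-adjoint operator on a Hilbert space with associated unitary group $(e^{-itA})_{t\in\RR}$, and $\cD\subset\dom(A)$ is a dense linear subspace such that $e^{-itA}\cD\subset\cD$ for every $t\in\RR$, then $A$ is essentially self-adjoint on $\cD$. I would apply this with $A=\Id\Gamma(\omega)$, whose generated group is $(W(0,e^{-it\omega}))_{t\in\RR}$ by definition, and with $\cD=\cC[\mathfrak{d}]$, and verify the three hypotheses in turn.

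Density of $\cC[\mathfrak{d}]$ in $\sF$ is already recorded in the paragraph preceding \eqref{defWeyl}, since $\mathfrak{d}$ is dense in $\HP$. Moreover, setting $f=0$ in \eqref{defWeyl} gives the key identity $W(0,e^{-it\omega})\expv{h}=\expv{e^{-it\omega}h}$, so the hypothesis $e^{-it\omega}\mathfrak{d}\subset\mathfrak{d}$ combined with linearity yields $W(0,e^{-it\omega})\cC[\mathfrak{d}]\subset\cC[\mathfrak{d}]$ for every $t\in\RR$.

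The only step requiring genuine work is verifying $\cC[\mathfrak{d}]\subset\dom(\Id\Gamma(\omega))$. By linearity this reduces to showing $\expv{h}\in\dom(\Id\Gamma(\omega))$ whenever $h\in\dom(\omega)$, i.e., that $t\mapsto\expv{e^{-it\omega}h}$ is strongly differentiable in $\sF$ at $t=0$. On the $n$-particle sector the difference quotient converges in $\HP^{\otimes n}$ as $t\to 0$ to $-i(n!)^{-\eh}\sum_{k=1}^{n}(\omega)_{k}h^{\otimes n}$, where $(\omega)_{k}$ acts on the $k$-th tensor slot; the squared $\sF$-norm of this candidate derivative is dominated by $\|\omega h\|^{2}\sum_{n\ge 1}n^{2}\|h\|^{2(n-1)}/n!$, which is finite. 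A sectorwise application of dominated convergence to the full difference quotient, using the bound $\|(e^{-it\omega}h-h)/t\|_{\HP}\le\|\omega h\|_{\HP}$, then yields the desired norm-differentiability, so $\expv{h}\in\dom(\Id\Gamma(\omega))$. This domain containment is the only non-routine point and is hence the main (though still modest) obstacle.

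With the three hypotheses in place, the invariant-domain criterion gives essential self-adjointness of $\Id\Gamma(\omega)$ on $\cC[\mathfrak{d}]$, completing the proof.
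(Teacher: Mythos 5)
Your proposal is correct and follows essentially the same route as the paper: both rely on the invariant-domain criterion, with the only work being the verification that $\cC[\mathfrak{d}]\subset\dom(\Id\Gamma(\omega))$. The paper establishes this containment more abstractly (analyticity of $h\mapsto\expv{h}$ composed with $C^1$-regularity of $t\mapsto e^{-it\omega}h$) and then spells out Chernoff's deficiency-index argument rather than citing the criterion as a black box, but these are presentational differences, not a change of method.
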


\begin{proof} 
It is straightforward to show that the map $\HP\ni h\mapsto\expv{h}\in \sF$ is analytic.
Since $\RR\ni t\mapsto e^{-it\omega}h$, belongs to $C^1(\RR,\HP)$, if $h\in\dom(\omega)$,
it is thus clear for a start that $\cC[\mathfrak{d}]\subset\dom(\Id\Gamma(\omega))$.

The claim now follows from the invariant domain method (see, e.g., \cite[p.~366]{Chernoff1977}):
In fact, let $A$ denote the restriction of $\Id\Gamma(\omega)$ to $\cC[\mathfrak{d}]$ and
suppose that $\psi\in\dom(A^*)$ satisfies $A^*\psi=\pm i\psi$, for some choice of the sign. 
Since $W(0,e^{-it\omega})\expv{h}=\expv{e^{-it\omega}h}$ and
$e^{-it\omega}h\in\mathfrak{d}$, for all $h\in\mathfrak{d}$, we see that 
$W(0,e^{-it\omega})$ maps $\cC[\mathfrak{d}]$ into itself, for all $t\in\RR$. Now, let 
$\phi\in\cC[\mathfrak{d}]$ and set $b(t):=\SPn{W(0,e^{-it\omega})\phi}{\psi}$. Then
\begin{align*}
b'(t)&=\SPn{\Id\Gamma(\omega)W(0,e^{-it\omega})\phi}{i\psi}
\\
&=\SPn{AW(0,e^{-it\omega})\phi}{i\psi}
=\SPn{W(0,e^{-it\omega})\phi}{iA^*\psi}=\mp b(t),\quad t\in\RR,
\end{align*}
i.e., $b(t)=b(0)e^{\mp t}$. Since $b$ is bounded, we must have
$0=b(0)=\SPn{\phi}{\psi}$. Since $\cC[\mathfrak{d}]$ is dense in $\sF$, this implies $\psi=0$.
\end{proof}

Under the assumption $f\in\dom(\omega^\mh)=\fdom(\omega^{-1})$ it is known 
(see, e.g., \cite{BFS1998b}) that $\dom(\Id\Gamma(\omega)^\eh)\subset\dom(\vp(f))$ and
\begin{align}\label{rbvp}
\|\vp(f)\psi\|&\le2\|(\omega^{\mh}\vee1)f\|\|(1+\Id\Gamma(\omega))^\eh\psi\|,
\\\label{qfbvp}
|\SPn{\phi}{\vp(f)\psi}|&\le\|\omega^\mh f\|\big(\|\Id\Gamma(\omega)^\eh\phi\|\|\psi\|
+\|\phi\|\|\Id\Gamma(\omega)^\eh\psi\|\big),
\end{align}
for all $\phi,\psi\in\dom(\Id\Gamma(\omega)^\eh)$. Furthermore,
\begin{align*}
\vp(f_1+\lambda f_2)\psi=\vp(f_1)\psi+\lambda\vp(f_2)\psi,\quad
f_1,f_2\in\fdom(\omega^{-1}),\,\lambda\in\RR,
\end{align*}
for all $\psi\in\dom(\Id\Gamma(\omega)^\eh)$.
The latter two remarks imply that
\begin{align}\label{contvp}
\text{$\fdom(\omega^{-1})\times\fdom(\Id\Gamma(\omega))\ni(f,\psi)\mapsto\vp(f)\psi\in\sF$ 
is continuous.}
\end{align}
Here $\fdom(\omega^{-1})$ and $\fdom(\Id\Gamma(\omega))$ are equipped with the form norms 
of $\omega^{-1}$ and $\Id\Gamma(\omega)$, respectively.

We shall also make use of the commutation relation
\begin{align}\label{CRvpdG}
\SPn{\vp(f)\phi}{\Id\Gamma(\omega)\psi}-\SPn{\Id\Gamma(\omega)\phi}{\vp(f)\psi}
&=\SPn{\phi}{i\vp(i\omega f)\psi},
\end{align}
valid for all $f\in\dom(\omega^\mh+\omega)$ and $\phi,\psi\in\dom(\Id\Gamma(\omega))$.
It can be verified by taking derivatives at $(s,t)=(0,0)$ of
\begin{align*}
\RR\times\RR\ni(s,t)\longmapsto\SPn{&W(-isf,\id)\expv{g}}{W(0,e^{-it\omega})\expv{h}}
\\
&\quad-\SPn{W(0,e^{-it\omega})\expv{g}}{W(-isf,\id)\expv{h}},
\end{align*}
with $g,h\in\dom(\omega)$, before and after applying \eqref{defWeyl} and \eqref{Weylrel}, 
and using that $\cC[\dom(\omega)]$ is a core for $\Id\Gamma(\omega)$ together with \eqref{rbvp}.
Combining \eqref{qfbvp}, \eqref{contvp}, and \eqref{CRvpdG}, we see that the bound
\begin{align}\nonumber
\big|\SPn{\vp(f)\phi}{&(r+\Id\Gamma(\omega))\psi}-\SPn{(r+\Id\Gamma(\omega))\phi}{\vp(f)\psi}\big|
\\\label{bdCRvpdG}
&\le\|\omega^\eh f\|\big(\|\Id\Gamma(\omega)^\eh\phi\|\|\psi\|
+\|\phi\|\|\Id\Gamma(\omega)^\eh\psi\|\big),\quad r\in\RR,
\end{align}
holds for all $\phi,\psi\in\dom(\Id\Gamma(\omega))$
under the weaker assumption $f\in\fdom(\omega^{-1}+\omega)$.

The next lemma follows, e.g., from a more general discussion in \cite[\textsection12]{Matte2016},
but we shall give a shorter and independent proof for the convenience of the reader.

\begin{lemma}\label{lem-comm}
For all $f\in\fdom(\omega^{-1}+\omega)$, the following assertions hold true:
\begin{enumerate}[leftmargin=0.67cm]
\item[{\rm(1)}] Let $\ve>0$ and set $\theta_\ve:=1+\ve\Id\Gamma(\omega)$.
Then the operator defined by $\dom(C_\ve(f)):=\fdom(\Id\Gamma(\omega\wedge1))$ and
$$
C_\ve(f)\psi:=\theta_\ve^\mh\vp(f)\psi-\vp(f)\theta_\ve^\mh\psi,\quad\psi\in\dom(C_\ve(f)),
$$ 
is bounded with
$$
\|C_\ve(f)\|\le (4/\pi)\ve^\eh\|\omega^\eh f\|.
$$
\item[{\rm(2)}] Set $\theta:=1+\Id\Gamma(\omega)$.
Then the operator given by $\dom(T(f)):=\fdom(\Id\Gamma(\omega\wedge1))$ and
$$
T(f)\psi:=\theta^\eh\vp(f)\theta^\mh\psi-\vp(f)\psi,\quad\psi\in\dom(T(f)),
$$ 
is well-defined and bounded with
$$
\|T(f)\|\le 2\|\omega^\eh f\|.
$$
Furthermore, $T(f)^*\!\!\upharpoonright_{\dom(\Id\Gamma(\omega))}\subset C_1(f)\theta^\eh$.
\item[{\rm(3)}] $\vp(f)$ maps $\dom(\Id\Gamma(\omega))$ into $\dom(\Id\Gamma(\omega)^\eh)$.
\end{enumerate}
\end{lemma}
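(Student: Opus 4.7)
The plan is to treat the three items in the order (1), (3), (2), with the quadratic-form commutator bound \eqref{bdCRvpdG} serving as the single workhorse estimate; no hypothesis beyond $f\in\fdom(\omega^{-1}+\omega)$ is needed anywhere.

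For (1), I would start from Kato's formula $\theta_\ve^{\mh}=\tfrac{1}{\pi}\int_0^\infty\lambda^{\mh}R_\lambda\,d\lambda$ with $R_\lambda:=(\theta_\ve+\lambda)^{-1}$. For $\phi,\psi\in\fdom(\Id\Gamma(\omega\wedge 1))\subset\dom(\vp(f))$, interchanging the $\lambda$-integral with the inner product gives
\[
\SPn{\phi}{C_\ve(f)\psi}=\tfrac{1}{\pi}\int_0^\infty\lambda^{\mh}\big(\SPn{R_\lambda\phi}{\vp(f)\psi}-\SPn{\vp(f)\phi}{R_\lambda\psi}\big)d\lambda.
\]
Because $R_\lambda\phi,R_\lambda\psi\in\dom(\Id\Gamma(\omega))$ and $(\theta_\ve+\lambda)R_\lambda=\id$, the integrand is precisely $\ve$ times the quadratic form of $[\vp(f),\Id\Gamma(\omega)]$ at the pair $(R_\lambda\phi,R_\lambda\psi)$, so \eqref{bdCRvpdG} applies. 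Combined with the resolvent estimates $\|R_\lambda\|\le(1+\lambda)^{-1}$ and $\|\Id\Gamma(\omega)^{\eh}R_\lambda\|^2\le\ve^{-1}(1+\lambda)^{-1}$ (from $\Id\Gamma(\omega)\le\theta_\ve/\ve$) this yields a pointwise bound $2\ve^{\eh}\|\omega^{\eh}f\|(1+\lambda)^{-3/2}\|\phi\|\|\psi\|$. The elementary identity $\int_0^\infty\lambda^{\mh}(1+\lambda)^{-3/2}d\lambda=2$ (via $\lambda=\tan^2\vt$) then produces exactly the asserted $(4/\pi)\ve^{\eh}\|\omega^{\eh}f\|$ bound.

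For (3), fix $\psi\in\dom(\Id\Gamma(\omega))$ and let $\chi_n$ denote the spectral projector of $\Id\Gamma(\omega)$ onto $[0,n]$. Applying \eqref{bdCRvpdG} with $r=0$ to the pair $(\psi,\chi_n\vp(f)\psi)$, both in $\dom(\Id\Gamma(\omega))$, the first term collapses to $\|\Id\Gamma(\omega)^{\eh}\chi_n\vp(f)\psi\|^2$, while the other, $\SPn{\Id\Gamma(\omega)\psi}{\vp(f)\chi_n\vp(f)\psi}$, is controlled by Cauchy-Schwarz and \eqref{rbvp} via $C\|\Id\Gamma(\omega)\psi\|(\|\Id\Gamma(\omega)^{\eh}\chi_n\vp(f)\psi\|+\|\vp(f)\psi\|)$. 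Setting $x_n:=\|\Id\Gamma(\omega)^{\eh}\chi_n\vp(f)\psi\|$ these estimates together deliver a quadratic inequality $x_n^2\le A+Bx_n$ with $A,B$ finite and independent of $n$; hence $\sup_nx_n<\infty$, and monotone convergence yields $\vp(f)\psi\in\dom(\Id\Gamma(\omega)^{\eh})$.

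For (2), set $V_\lambda:=(\theta+\lambda)^{\eh}\vp(f)(\theta+\lambda)^{\mh}-\vp(f)$, observe $V_0=T(f)$, and verify (first on the core of exponential vectors over $\dom(\omega)$, then by density) that $V_\lambda\to 0$ weakly as $\lambda\to\infty$. Differentiation gives
\[
V_\lambda'=-\tfrac{1}{2}(\theta+\lambda)^{\mh}[\Id\Gamma(\omega),\vp(f)](\theta+\lambda)^{-3/2},
\]
so $T(f)=-\int_0^\infty V_\lambda'\,d\lambda$ in the weak sense. Bounding each integrand via \eqref{bdCRvpdG}, using the scalar estimates $\|(\theta+\lambda)^{-3/2}\|\le(1+\lambda)^{-3/2}$, $\|(\theta+\lambda)^{\mh}\|\le(1+\lambda)^{\mh}$, $\|\Id\Gamma(\omega)^{\eh}(\theta+\lambda)^{\mh}\|\le 1$, and $\|\Id\Gamma(\omega)^{\eh}(\theta+\lambda)^{-3/2}\|\le c(1+\lambda)^{-1}$, again produces $(1+\lambda)^{-3/2}$ decay and the integral $2\|\omega^{\eh}f\|$. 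For the adjoint claim, a direct computation: for $\phi\in\dom(\Id\Gamma(\omega))$ and $\psi\in\fdom(\Id\Gamma(\omega\wedge 1))$, sliding $\theta^{\eh}$ and then $\vp(f)$ across the inner product via their self-adjointness (legitimate on account of (3)) converts $\SPn{\phi}{T(f)\psi}$ into $\SPn{C_1(f)\theta^{\eh}\phi}{\psi}$, which is exactly $T(f)^*\!\!\upharpoonright_{\dom(\Id\Gamma(\omega))}\subset C_1(f)\theta^{\eh}$.

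The principal obstacle will be legitimising the two integral representations on the stated form-domains: verifying Fubini for the $\lambda$-integral, checking that each resolvent-regularised vector truly sits in $\dom(\Id\Gamma(\omega))$ so that \eqref{bdCRvpdG} is legally applicable, and -- for the literal formula for $T(f)$ on $\fdom(\Id\Gamma(\omega\wedge 1))$ in (2) -- showing that $\vp(f)$ sends $\fdom(\Id\Gamma(\omega))$ into itself, which is a mild strengthening of (3) provable by the same spectral cut-off scheme.
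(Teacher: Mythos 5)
Your treatment of Part (1) is essentially the paper's: both you and the author insert resolvent-smoothed vectors into the commutator form bound \eqref{bdCRvpdG}, invoke the integral representation $A^{\mh}=\pi^{-1}\int_0^\infty(\lambda+A)^{-1}\lambda^{\mh}\,\mathrm{d}\lambda$ for $A=\theta_\ve$, and integrate a $(1+\lambda)^{-\nf32}$ tail. The only difference is cosmetic -- the paper keeps the resolvent parameter as $r=(1+t)/\ve$ while you substitute directly.

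For Parts (2) and (3) you take a genuinely different route. The paper derives all three claims from a \emph{single} sharpening of the Part (1) computation: choosing $r=1+t$ and $\xi=\theta^{\eh}\zeta$ in the analogue of your integral identity shows that $\theta^{\eh}C_1(f)$ is bounded with norm $\le 2\|\omega^{\eh}f\|$. From this, the algebraic identity $\vp(f)\theta^{\mh}\psi=\theta^{\mh}\vp(f)\psi-C_1(f)\psi$ yields (3), the relation $T(f)=-\theta^{\eh}C_1(f)$ yields (2) \emph{and} its norm bound at once, and the adjoint assertion drops out of $T(f)^*\supset -C_1(f)^*\theta^{\eh}$ with $-C_1(f)^*=C_1(f)$ on $\dom(\theta^{\eh})$. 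Your argument instead proves (3) directly via spectral cut-offs $\chi_n$ and a quadratic inequality in $x_n=\|\Id\Gamma(\omega)^{\eh}\chi_n\vp(f)\psi\|$ (that computation is correct), and then establishes (2) by differentiating $V_\lambda=(\theta+\lambda)^{\eh}\vp(f)(\theta+\lambda)^{\mh}-\vp(f)$ in $\lambda$ and integrating $V_\lambda'$ with the aid of \eqref{bdCRvpdG}. The latter is sound in spirit, and your adjoint computation via sliding $\theta^{\eh}$ and $\vp(f)$ across the inner product is fine. The paper's route is simply more economical: one quantitative estimate on $\theta^{\eh}C_1(f)$ delivers (2), (3), and the adjoint identity simultaneously, whereas yours uses three separate arguments and needs the ordering (1)$\to$(3)$\to$(2) to avoid circularity.

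One caveat on your final paragraph. The obstacle you flag for (2) -- making $T(f)$ legitimately defined on all of $\fdom(\Id\Gamma(\omega\wedge1))$ rather than only on, say, $\dom(\Id\Gamma(\omega))$ -- is real, but your proposed fix is slightly off. You would not need ``$\vp(f)$ maps $\fdom(\Id\Gamma(\omega))$ into itself'' (a claim that is neither part of the lemma nor provable by the \emph{same} cut-off scheme, since that scheme leans on $\|\Id\Gamma(\omega)\psi\|$ being finite); what is actually needed is that $\vp(f)\theta^{\mh}\psi\in\dom(\theta^{\eh})$ for $\psi\in\fdom(\Id\Gamma(\omega\wedge1))$, and this follows by approximating $\psi$ with $\chi_m\psi\in\dom(\Id\Gamma(\omega))$, using \eqref{contvp} on the left and the boundedness of $T(f)$ on the cut-offs on the right, and invoking the closedness of $\theta^{\eh}$. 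That is exactly what the paper's $T(f)=-\theta^{\eh}C_1(f)$ representation gives for free.
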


\begin{proof}
Inserting $\phi=(r+\Id\Gamma(\omega))^{-1}\xi$ and $\psi=(r+\Id\Gamma(\omega))^{-1}\eta$
with $r>0$ and $\xi,\eta\in\fdom(\Id\Gamma(\omega\wedge1))$ into \eqref{bdCRvpdG}, we obtain
\begin{align}\label{prometeus1}
&\Big|\SPb{\xi}{(r+\Id\Gamma(\omega))^{-1}\vp(f)\eta}
-\SPb{\vp(f)\xi}{(r+\Id\Gamma(\omega))^{-1}\eta}\Big|
\\\nonumber
&\le\|\omega^\eh f\|
\bigg(\big\|\Id\Gamma(\omega)^\eh(r+\Id\Gamma(\omega))^{-1}\xi\big\|\frac{\|\eta\|}{r}+
\big\|(r+\Id\Gamma(\omega))^{-1}\xi\big\|\frac{\|\eta\|}{r^\eh}\bigg).
\end{align}
Applying this bound with $r=(1+t)/\ve$, $t\ge0$, and observing that the formula
$A^\mh=\int_0^\infty(t+A)^{-1}\Id t/\pi t^\eh$, valid for any strictly positive self-adjoint operator
$A$ in some Hilbert space, implies
\begin{align*}
&\SPn{\xi}{C_\ve(f)\eta}
\\
&=\int_0^\infty\Big(\SPb{\xi}{(t+1+\ve\Id\Gamma(\omega))^{-1}\vp(f)\eta}
-\SPb{\vp(f)\xi}{(t+1+\ve\Id\Gamma(\omega))^{-1}\eta}\Big)\frac{\Id t}{\pi t^\eh},
\end{align*}
we deduce that
\begin{align*}
|\SPn{\xi}{C_\ve(f)\eta}|&\le2\ve^\eh\|\omega^\eh f\|\int_0^\infty\frac{1}{(1+t)^{\nf{3}{2}}}
\frac{\Id t}{\pi t^\eh}\|\xi\|\|\eta\|,
\end{align*}
for all $\xi,\eta\in\fdom(\Id\Gamma(\omega\wedge1))$,
which proves Part~(1).

Choosing $r=1+t$ and $\xi=\theta^\eh\zeta$ with $\zeta\in\dom(\Id\Gamma(\omega))$
we further infer from \eqref{prometeus1} that
\begin{align}\label{prometeus2}
|\SPn{\theta^\eh\zeta}{C_1(f)\eta}|&\le2\|\omega^\eh f\|\int_0^\infty\frac{1}{1+t}
\frac{\Id t}{\pi t^\eh}\|\zeta\|\|\eta\|,
\end{align}
for all $\eta\in\fdom(\Id\Gamma(\omega\wedge1))$.
Since $\dom(\Id\Gamma(\omega))$ is a core for $\theta^\eh$,
\eqref{prometeus2} implies that the range of $C_1(f)$ is contained in $\dom(\theta^\eh)$ 
and that $\theta^\eh C_1(f)$ is bounded with $\|\theta^\eh C_1(f)\|\le 2\|\omega^\eh f\|$. 
Now, if $\psi\in\fdom(\Id\Gamma(\omega\wedge1))$, then
$\vp(f)\theta^\mh\psi=\theta^\mh\vp(f)\psi-C_1(f)\psi$. We conclude that
$\vp(f)\theta^\mh\psi\in\dom(\theta^\eh)$, which proves Part~(3) and shows
that $T(f)$ is well-defined with $T(f)\psi=-\theta^\eh C_1(f)\psi$. The latter relation
finally entails $T(f)^*\supset-C_1(f)^*\theta^\eh$ and it is clear that
$-C_1(f)^*\!\!\upharpoonright_{\dom(\theta^\eh)}=C_1(f)\!\!\upharpoonright_{\dom(\theta^\eh)}$.
\end{proof}

\begin{remark}\label{remrbvp2}
Let $f,g\in\fdom(\omega^{-1})$ and $\psi\in\dom(\Id\Gamma(\omega))$. 
Then we may apply Lem.~\ref{lem-comm} with $\omega$
replaced by $\omega\wedge 1$, so that $\theta=1+\Id\Gamma(\omega\wedge1)$. 
According to its third part $\vp(f)\psi\in\dom(\theta^\eh)\subset\dom(\vp(g))$, 
and we may write $\theta^\eh\vp(f)\psi=T(f)\theta^\eh\psi+\vp(f)\theta^\eh\psi$. Taking also
\eqref{rbvp} (with $\omega$ replaced by $\omega\wedge1$) and the bound in the second part 
of Lem.~\ref{lem-comm} into 
account, we obtain $\|\theta^\eh\vp(f)\psi\|\le4\|(\omega^\mh\vee1)f\|\|\theta\psi\|$.
Applying \eqref{rbvp} once more we arrive at the familiar bound
\begin{align}\label{rbvp2}
\|\vp(g)\vp(f)\psi\|
&\le8\|(\omega^\mh\vee1)g\|\|(\omega^\mh\vee1)f\|\|(1+\Id\Gamma(\omega))\psi\|.
\end{align}
\end{remark}


\subsection{Discussion of the interaction terms}\label{ssec-inter}

\noindent
Next, we discuss direct integrals of the field operators introduced in the previous subsection
employing the remarks given in Subsect.~\ref{ssec-Sobolev}. Our main aim is to verify a
Leibniz rule involving classical and quantized vector potentials.

First, we observe that, if $\Lambda\ni\V{x}\mapsto G_{\V{x}}\in\HP$ is measurable,
then the strong continuity of the Weyl representation implies measurability of the
maps $\Lambda\ni\V{x}\mapsto e^{i\vp(G_{\V{x}})}\psi$ with $\psi\in\sF$. Therefore,
the direct integral $\vp(G):=\int_{\Lambda}\vp(G_{\V{x}})\Id\V{x}$
is a well-defined self-adjoint operator in $L^2(\Lambda,\sF)$. The symbol $\dom(\vp(G))$ will always
denote the domain of the latter direct integral operator.

Since the weak partial derivatives of magnetic Sobolev functions are in general not
square-integrable we shall, under some extra assumptions on $G$ and the vectors it is applied to, 
generalize the meaning of the symbol $\vp(G)$ as follows: 
Consider $\fdom(\omega^{-1})$ as a Hilbert space equipped with the form 
norm of $\omega^{-1}$ and assume that
$G:\Lambda\to\fdom(\omega^{-1})$ and $\Psi:\Lambda\to\fdom(\Id\Gamma(\omega))$
are measurable. Then \eqref{contvp} shows that
\begin{align*}
(\vp(G)\Psi)(\V{x})=\vp(G_{\V{x}})\Psi(\V{x}),\quad\text{a.e. $\V{x}\in\Lambda$,}
\end{align*}
defines an equivalence class $\vp(G)\Psi$ of measurable functions from $\Lambda$ to $\sF$. If 
$\Lambda\ni\V{x}\mapsto\|G_{\V{x}}\|_{\fdom(\omega^{-1})}\|\Psi(\V{x})\|_{\fdom(\Id
\Gamma(\omega))}$
is in $L^p_\loc(\Lambda)$, for some $p\in[1,\infty]$, then $\vp(G)\Psi\in L^p_\loc(\Lambda,\sF)$ on 
account of \eqref{rbvp}; the same remark holds true, if the subscripts ``loc'' are dropped.

The next lemma is a generalization of \cite[Lem.~13]{HaslerHerbst2008}.  

\begin{lemma}\label{lem-Leibniz-vp}
Let $p\in[1,\infty]$ and $\V{G}\in L^{p'}_\loc(\Lambda,\fdom(\omega^{-1})^\nu)$ have a weak 
divergence $q:=\mathrm{div}\V{G}\in L^{p'}_\loc(\Lambda,\fdom(\omega^{-1}))$.
Assume that  $\Psi\in L^{p}_\loc(\Lambda,\fdom(\Id\Gamma(\omega)))$ has
weak partial derivatives with respect to all variables such that 
$\partial_{x_1}\Psi,\ldots,\partial_{x_\nu}\Psi\in L^p_\loc(\Lambda,\fdom(\Id\Gamma(\omega)))$. 
Then $\vp(q)\Psi$ and $\vp(\V{G})\cdot\nabla\Psi:=\sum_{\ell=1}^\nu\vp(G_\ell)\partial_{x_\ell}\Psi$ 
belong to $L^1_\loc(\Lambda,\sF)$ and
$\vp(\V{G})\Psi:=(\vp(G_1)\Psi,\ldots,\vp(G_\nu)\Psi)\in L^1_\loc(\Lambda,\sF^\nu)$ 
has a weak divergence given by
$$
\mathrm{div}(\vp(\V{G})\Psi)=\vp(q)\Psi+\vp(\V{G})\cdot\nabla\Psi
\quad\text{in $L^1_\loc(\Lambda,\sF)$}.
$$
\end{lemma}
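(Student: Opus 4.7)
The plan is to reduce this to the abstract Leibniz rule in Lem.~\ref{LeibnizL1div} by taking the real bilinear pairing
$$
b:\fdom(\omega^{-1})\times\fdom(\Id\Gamma(\omega))\longrightarrow\sF,\qquad b(G,\Psi):=\vp(G)\Psi,
$$
with the three spaces viewed as separable real Hilbert spaces (equipped with the form norms of $\omega^{-1}$ and $\Id\Gamma(\omega)$, respectively). Note that the exponents $p$ and $p'$ in the present statement are swapped relative to the statement of Lem.~\ref{LeibnizL1div}; since $(p')'=p$, this is just a matter of relabelling.

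First I would check that the hypotheses of Lem.~\ref{LeibnizL1div} are met. Real bilinearity of $b$ follows because $\vp(G)$ is a complex linear operator on $\fdom(\Id\Gamma(\omega))=\dom(\Id\Gamma(\omega)^\eh)$, while the real linearity $\vp(G_1+\lambda G_2)\psi=\vp(G_1)\psi+\lambda\vp(G_2)\psi$ ($\lambda\in\RR$) stated just before \eqref{contvp} gives $\RR$-linearity in the first slot. Continuity is immediate from the standard relative bound \eqref{rbvp}, which yields
$$
\|\vp(G)\Psi\|_\sF\le 2\|(\omega^{\mh}\vee1)G\|_{\HP}\,\|(1+\Id\Gamma(\omega))^\eh\Psi\|_\sF
\le C\,\|G\|_{\fdom(\omega^{-1})}\|\Psi\|_{\fdom(\Id\Gamma(\omega))}
$$
for some absolute constant $C>0$. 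Next, Rem.~\ref{remwdiv} allows us to regard $q$ as a weak divergence of $\V{G}$ computed in $\sK_1=\fdom(\omega^{-1})$, and Rem.~\ref{remsKfdomT} allows us to regard $\partial_{x_j}\Psi$ as weak partial derivatives computed in $\sK_2=\fdom(\Id\Gamma(\omega))$. Thus all hypotheses of Lem.~\ref{LeibnizL1div} are verified with $\sK_3=\sF$.

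Applying Lem.~\ref{LeibnizL1div} directly yields that $\vp(\V{G})\Psi\in L^1_\loc(\Lambda,\sF^\nu)$ has the weak divergence
$$
b(q,\Psi)+\sum_{j=1}^\nu b(G_j,\partial_{x_j}\Psi)=\vp(q)\Psi+\vp(\V{G})\cdot\nabla\Psi
\quad\text{in }L^1_\loc(\Lambda,\sF),
$$
which is the asserted formula. The claimed local $L^1$-integrability of the two summands on the right, as well as of $\vp(\V{G})\Psi$, follows by combining the bound \eqref{rbvp} with H\"older's inequality applied to the exponents $p'$ and $p$ and the local integrability hypotheses placed on $q$, $\V{G}$, $\Psi$, and $\partial_{x_j}\Psi$ in the respective form-domain norms.

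I do not anticipate any serious obstacle; the proof is essentially bookkeeping. The only points worth some care are (i) keeping track of real rather than complex linearity (since $\vp(\cdot)$ is only $\RR$-linear in its argument, all three Hilbert spaces must be viewed as real), and (ii) the exponent swap between the hypotheses of this lemma and those of Lem.~\ref{LeibnizL1div}. Once these are observed, the entire statement is a direct specialization of the abstract Leibniz rule.
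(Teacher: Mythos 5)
Your proposal is correct and follows precisely the route the paper takes: the paper's proof is the one-liner ``The assertion follows from Lem.~\ref{LeibnizL1div} and \eqref{contvp},'' and you have simply spelled out that reduction, including the bookkeeping of moving the divergence and derivatives to the form-domain topologies (Rem.~\ref{remwdiv}, Rem.~\ref{remsKfdomT}) and swapping the conjugate exponents. One minor remark: viewing the three spaces as real Hilbert spaces is not strictly required, since Lem.~\ref{LeibnizL1div} already permits complex $\sK_i$ and asks only that $b$ be \emph{real} bilinear; but nothing is lost by your extra caution.
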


\begin{proof}
The assertion follows from Lem.~\ref{LeibnizL1div} and \eqref{contvp}.
\end{proof}

We shall need a variant of the previous lemma including a classical vector potential
$\V{A}=(A_1,\ldots,A_\nu)\in L_\loc^2(\Lambda,\RR^\nu)$. We shall first define scalar and
Fock space-valued weak covariant derivatives associated with $\V{A}$. As we will apply them only to 
square-integrable functions, we shall invoke some Hilbert space theory in their definitions.
So, let $j\in\{1,\ldots,\nu\}$. Then the corresponding (maximal) scalar covariant derivative is
the adjoint of the symmetric operator in $L^2(\Lambda)$ given by
\begin{align}\label{defwjscal}
{\sf w}_j:={(-i\partial_{x_j}-A_j)\!\!\upharpoonright_{\sD(\Lambda)}}.
\end{align}
In analogy we define a symmetric operator in $L^2(\Lambda,\sF)$ by
\begin{align}\label{def-wj}
w_{j}&:={(-i\partial_{x_j}-A_j)\!\!\upharpoonright_{\sD(\Lambda,\sF)}}.
\end{align}
\begin{remark}\label{remwCCw}
Let $j\in\{1,\ldots,\nu\}$. Then the following holds:
\begin{enumerate}[leftmargin=0.67cm]
\item[{\rm(1)}]
$\dom({\sf w}_j^*)\otimes\sF\subset\dom(w_{j}^*)$ and
$w_{j}^*(f\psi)=({\sf w}_j^*f)\psi$, for all $f\in\dom({\sf w}_j^*)$ and $\phi\in\sF$.
Furthermore, $\SPn{\phi}{\Psi}_{\sF}\in\dom({\sf w}_j^*)$ with 
${\sf w}_j^*\SPn{\phi}{\Psi}_{\sF}=\SPn{\phi}{w_{j}^*\Psi}_{\sK}$, for all $\phi\in\sF$ and 
$\Psi\in\dom(w_{j}^*)$.

Both observations follow from $w_{j}(g\phi)=({\sf w}_jg)\phi$, $g\in \sD(\Lambda)$, $\phi\in\sF$.
\item[{\rm(2)}]
A function $\Psi\in L^2(\Lambda,\sF)$ belongs to $\dom(w_{j}^*)$, if and only if it has a weak partial 
derivative with respect to $x_j$ such that the sum of locally integrable functions 
$-i\partial_{x_j}\Psi-A_j\Psi$ is in $L^2(\Lambda,\sF)$. In the affirmative case
\begin{align}\label{WAWloc}
w_{j}^*\Psi=-i\partial_{x_j}\Psi-A_j\Psi\quad\text{in $L^1_\loc(\Lambda,\sF)$. }
\end{align}

This follows from the fact that ${A}_j\Phi\in L^1_\loc(\Lambda,\sF)$, for all $\Phi\in L^2(\Lambda,\sF)$, 
and from the definitions of the adjoint $w_{j}^*$ and the weak partial derivative. 
\end{enumerate}
\end{remark}

In the next lemma and henceforth we shall use the shorthands
\begin{align}\label{shorty0}
\V{A}\cdot\vp(\V{G})\Psi&:=\sum_{\ell=1}^\nu A_\ell\vp(G_\ell)\Psi,\quad
\vp(\V{G})\cdot\V{w}^*\Psi:=\sum_{\ell=1}^\nu\vp(G_\ell)w_{\ell}^*\Psi.
\end{align}

\begin{lemma}\label{lemmagnLeibniz}
Let $\V{A}\in L^2_\loc(\Lambda,\RR^\nu)$, 
let $\Psi\in L^2(\Lambda,\fdom(\Id\Gamma(\omega)))\cap\bigcap_{j=1}^\nu\dom(w_j^*)$ 
be such that $w_j^*\Psi\in L^2(\Lambda,\fdom(\Id\Gamma(\omega)))$, for all $j\in\{1,\ldots,\nu\}$, and 
assume that $\V{G}\in L^\infty(\Lambda,\fdom(\omega^{-1})^\nu)$ has a weak divergence satisfying  
$q:=\mathrm{div}\V{G}\in L^\infty(\Lambda,\fdom(\omega^{-1}))$.
Then $\vp(\V{G})\cdot\V{w}^*\Psi,\vp(q)\Psi\in L^2(\Lambda,\sF)$ and 
\begin{equation}\label{bakithi100div}
\sum_{j=1}^\nu\SPb{{w}_j\Phi}{\vp(G_j)\Psi}
=\SPb{\Phi}{\vp(\V{G})\cdot\V{w}^*\Psi-i\vp(q)\Psi},\quad\Phi\in\sD(\Lambda,\sF).
\end{equation}
\end{lemma}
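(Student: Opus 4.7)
The plan is to regularise $\Psi$ by mollification, reducing to the smooth case covered by Lem.~\ref{lem-Leibniz-vp}, and then to pass to the limit. The interplay between the mollifier and the classical vector potential $A_j$, which is only $L^2_\loc$, produces a Friedrichs commutator, and controlling it will be the main technical obstacle.

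The $L^2(\Lambda,\sF)$-memberships of $\vp(\V{G})\cdot\V{w}^*\Psi$ and $\vp(q)\Psi$ follow at once from \eqref{rbvp} applied pointwise, using $\V{G},q\in L^\infty(\Lambda,\fdom(\omega^{-1}))$ together with the hypotheses on $\Psi$ and $w_j^*\Psi$. For the identity, fix $\Phi\in\sD(\Lambda,\sF)$ with compact support $\Omega\Subset\Lambda$. By Rem.~\ref{remwCCw}(2), $\partial_{x_j}\Psi=iw_j^*\Psi+iA_j\Psi$ exists as a weak partial derivative; since $w_j^*\Psi\in L^2(\Lambda,\fdom(\Id\Gamma(\omega)))$ and, by Cauchy--Schwarz together with $A_j\in L^2_\loc$ and $\|\Psi\|_{\fdom(\Id\Gamma(\omega))}\in L^2$, also $A_j\Psi\in L^1_\loc(\Lambda,\fdom(\Id\Gamma(\omega)))$, we get $\partial_{x_j}\Psi\in L^1_\loc(\Lambda,\fdom(\Id\Gamma(\omega)))$. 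Let $J_n$ denote convolution with the mollifier $\rho_n$ from Rem.~\ref{remSob}. For $n$ large, $\Psi_n:=J_n\Psi$ is smooth on an open neighbourhood of $\Omega$, takes values in $\fdom(\Id\Gamma(\omega))$, and satisfies $\partial_{x_j}\Psi_n=J_n\partial_{x_j}\Psi$. Lem.~\ref{lem-Leibniz-vp} (with $p=\infty$, $p'=1$) then provides a weak divergence formula for $\vp(\V{G})\Psi_n$; testing it against $\Phi$, expanding $w_j\Phi=-i\partial_{x_j}\Phi-A_j\Phi$, and noting that the real scalar $A_j$ commutes pointwise with $\vp(G_j)$, a direct manipulation yields
\begin{align*}
\sum_{j=1}^\nu\SPb{w_j\Phi}{\vp(G_j)\Psi_n}=\SPb{\Phi}{-i\vp(q)\Psi_n+\sum_{j=1}^\nu\vp(G_j)\big(-i\partial_{x_j}\Psi_n-A_j\Psi_n\big)}.
\end{align*}

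The next step is the algebraic identity
\begin{align*}
-i\partial_{x_j}\Psi_n-A_j\Psi_n=J_n(w_j^*\Psi)+[J_n,A_j]\Psi,\qquad[J_n,A_j]\Psi:=J_n(A_j\Psi)-A_jJ_n\Psi,
\end{align*}
obtained by inserting $\partial_{x_j}\Psi=iw_j^*\Psi+iA_j\Psi$. Standard mollifier convergence in Bochner $L^2$-spaces gives $\Psi_n\to\Psi$ and $J_n(w_j^*\Psi)\to w_j^*\Psi$ in $L^2(\Omega,\fdom(\Id\Gamma(\omega)))$; via \eqref{rbvp} and $\V{G},q\in L^\infty(\Lambda,\fdom(\omega^{-1}))$ these upgrade to the $L^2(\Omega,\sF)$-convergences of $\vp(G_j)\Psi_n$, $\vp(q)\Psi_n$, and $\vp(G_j)J_n(w_j^*\Psi)$ that are needed to take the limit in every term apart from the commutator.

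The main obstacle is to show that the commutator contribution vanishes. By Cauchy--Schwarz inside the convolution and Fubini,
\begin{align*}
\|[J_n,A_j]\Psi\|_{L^1(\Omega,\fdom(\Id\Gamma(\omega)))}^2\le\Big(\int\rho_n(\V{z})\|A_j(\cdot-\V{z})-A_j(\cdot)\|_{L^2(\wh\Omega)}^2\Id\V{z}\Big)\|\Psi\|_{L^2(\wh\Omega,\fdom(\Id\Gamma(\omega)))}^2,
\end{align*}
for a slight enlargement $\wh\Omega\Supset\Omega$. Continuity of translation in $L^2_\loc$ forces the bracketed factor to vanish as $n\to\infty$, so $[J_n,A_j]\Psi\to0$ in $L^1(\Omega,\fdom(\Id\Gamma(\omega)))$. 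Applying $\vp(G_j)$, which is bounded $\fdom(\Id\Gamma(\omega))\to\sF$ by \eqref{rbvp} with constant controlled by $\|G_j\|_{L^\infty(\Lambda,\fdom(\omega^{-1}))}$, and pairing against $\Phi\in L^\infty(\Omega,\sF)$ then annihilates the commutator contribution in the limit, and \eqref{bakithi100div} follows.
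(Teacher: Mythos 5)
Your proof is correct, but it takes a genuinely different and heavier route than the paper's. The paper applies Lem.~\ref{lem-Leibniz-vp} \emph{directly} to the unmollified $\Psi$: since $w_j^*\Psi,A_j\Psi\in L^1_\loc(\Lambda,\fdom(\Id\Gamma(\omega)))$, \eqref{WAWloc} gives $\partial_{x_j}\Psi=iw_j^*\Psi+iA_j\Psi\in L^1_\loc(\Lambda,\fdom(\Id\Gamma(\omega)))$, which is exactly the regularity Lem.~\ref{lem-Leibniz-vp} needs (with $p=1$, $p'=\infty$). That lemma — whose mollification argument is purely Hilbert-space-valued and never touches $\V{A}$ — then yields the weak divergence identity $\Div(\vp(\V{G})\Psi)=\vp(q)\Psi+\vp(\V{G})\cdot\nabla\Psi$ for the raw $\Psi$. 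Substituting $-i\partial_{x_j}\Psi-A_j\Psi=w_j^*\Psi$ pointwise a.e.\ and testing against $\Phi\in\sD(\Lambda,\sF)$ finishes the proof; there is no mollifier in sight by the time $\V{A}$ appears, so no Friedrichs commutator is created. You instead mollify $\Psi$ \emph{before} introducing $\V{A}$ into the formula, which forces you to estimate the commutator $[J_n,A_j]\Psi$ — a correct but non-trivial extra step (requiring the Cauchy--Schwarz/Fubini argument and continuity of translations in $L^2_\loc$) that the paper's ordering of steps makes unnecessary. In short: the paper separates the vector-valued Leibniz rule (where $\V{A}$ plays no role) from the magnetic substitution (a pointwise algebraic identity), whereas you bundle them and pay for it with the commutator estimate. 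Both are valid; the paper's decoupling is simply the cheaper strategy.
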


\begin{proof}
The assumption on $\Psi$ and \eqref{WAWloc} entail
$\partial_{x_j}\Psi\in L^1_\loc(\Lambda,\fdom(\Id\Gamma(\omega)))$, for all 
$j\in\{1,\ldots,\nu\}$. The assumptions on $\V{G}$ and Lem.~\ref{lem-Leibniz-vp} now imply
$\vp(\V{G})\cdot\nabla\Psi,\vp(q)\Psi\in L^1_\loc(\Lambda,\sF)$ and
$\mathrm{div}(\vp(\V{G})\Psi)=\vp(\V{G})\cdot\nabla\Psi+\vp(q)\Psi$. In view of \eqref{rbvp}
we further have $\vp(\V{G})\Psi\in L^2(\Lambda,\sF^\nu)$, thus
$\V{A}\cdot\vp(\V{G})\Psi\in L^1_\loc(\Lambda,\sF)$. Hence, by \eqref{WAWloc},
\begin{equation}\label{bakithi99b}
-i\mathrm{div}(\vp(\V{G})\Psi)-\V{A}\cdot\vp(\V{G})\Psi=\vp(\V{G})\cdot\V{w}^*\Psi
-i\vp(q)\Psi\quad\text{in $L^1_\loc(\Lambda,\sF)$.}
\end{equation}
A posteriori, the conditions $\Psi,w_j^*\Psi\in L^2(\Lambda,\fdom(\Id\Gamma(\omega)))$ 
and \eqref{rbvp} show that both summands on the right hand side of \eqref{bakithi99b} actually 
belong to $L^2(\Lambda,\sF)$. Scalar multiplying \eqref{bakithi99b} with $\Phi\in\sD(\Lambda,\sF)$ 
we thus arrive at \eqref{bakithi100div}.
\end{proof}


\section{A diamagnetic inequality}\label{sec-dia}

\noindent
Our aim in the following is to derive a pointwise diamagnetic inequality for a sum of a classical and 
a quantized field by transferring the proof of \cite[Thm.~7.21]{LiebLoss2001} to the vector-valued 
case. There already exist a number of diamagnetic inequalities with different proofs in the literature 
on non-relativistc quantum electrodynamics. For infra-red regularized vector potentials, 
a diamagnetic inequality for the 
semi-group has been proven by Trotter-product expansions and dressing transformations in 
\cite{Hiroshima1996}. A Feynman-Kac formula that immediately leads to a diamagnetic inequality 
for the semi-group has been derived in \cite{Hiroshima1997}. Analytic proofs for various 
diamagnetic inequalities, in particular a generalized Kato-type inequality and an inequality for the 
square root of the Laplacian with quantized vector potential, have been worked out in
\cite{KMS2013} by adapting ideas from \cite{ReedSimonII,Simon1979}.
We did not find the pointwise bound \eqref{dia1} in the literature. For more information and references 
on the classical case we refer the reader to \cite{HundertmarkSimon}.

Before we prove the diamagnetic inequality at the end of this subsection we shall first
discuss some basic properties of the Fock space-valued covariant derivative appearing
in it. The latter is defined as the adjoint of the symmetric operator $v_j$ in $L^2(\Lambda,\sF)$
given as follows:
If $\V{A}\in L^2_\loc(\Lambda,\RR^\nu)$, $\V{G}\in L^2_\loc(\Lambda,\HP^\nu)$, 
and $j\in\{1,\ldots,\nu\}$, then we set $\dom(v_j):=\sD(\Lambda,\fdom(\Id\Gamma(1)))$ and
\begin{align}\label{def-vj}
v_j\Psi&:=(-i\partial_{x_j}-A_j-\vp(G_j))\Psi,\quad\Psi\in\dom(v_j).
\end{align}

\begin{remark}\label{remvj}
Let $\V{A}\in L_\loc^2(\Lambda,\RR^\nu)$, $\V{G}\in L_\loc^2(\Lambda,\fdom(\omega^{-1})^\nu)$,
$j\in\{1,\ldots,\nu\}$, and let $\Psi:\Lambda\to\fdom(\Id\Gamma(\omega\wedge1))$ be measurable. 
Then the following assertions follow easily from the definitions of $v_j^*$ and $w_{j}^*$,
\eqref{rbvp}, and Rem.~\ref{remwpd}(2):
\begin{enumerate}[leftmargin=0.67cm]
\item[{\rm(1)}] Assume that the map
$\V{x}\mapsto\|G_{j,\V{x}}\|_{\fdom(\omega^{-1})}\|\Psi(\V{x})\|_{\fdom(\Id\Gamma(\omega\wedge1))}$
is in $L^2(\Lambda)$. Then $\vp(G_j)\Psi\in L^2(\Lambda,\sF)$ and the equivalence
$\Psi\in\dom(w_{j}^*)$ $\Leftrightarrow$ $\Psi\in\dom(v_j^*)$ holds.
In the affirmative case $v_j^*\Psi=w_{j}^*\Psi-\vp(G_j)\Psi$.
\item[{\rm(2)}] Assume that $\Psi\in L^2(\Lambda,\sF)$ and
$\V{x}\mapsto\|G_{j,\V{x}}\|_{\fdom(\omega^{-1})}\|\Psi(\V{x})\|_{\fdom(\Id\Gamma(\omega\wedge1))}$
is in $L^1_\loc(\Lambda)$. Then $A_j\Psi,\vp(G_{j})\Psi\in L^1_\loc(\Lambda,\sF)$. Furthermore,
$\Psi\in\dom(v_j^*)$, if and only if $\Psi$ has a weak partial derivative with respect to $x_j$ and 
$-i\partial_{x_j}\Psi-A_j\Psi-\vp(G_{j})\Psi\in L^2(\Lambda,\sF)$. In the affirmative case 
$v_j^*\Psi=-i\partial_{x_j}\Psi-A_j\Psi-\vp(G_j)\Psi$.
\end{enumerate}
\end{remark}

Recall that the operator $C_\ve(f)$ has been defined in Lem.~\ref{lem-comm}(1). Since the 
real linear map
$\fdom(\omega)\ni f\mapsto C_\ve(f)^*\in\LO(\sF)$ is continuous by Lem.~\ref{lem-comm}(1),
the formula $(C^*_\ve(G)\Psi)(\V{x}):=C_\ve(G_{\V{x}})^*\Psi(\V{x})$, $\V{x}\in\Lambda$, defines a
measurable $\sF$-valued function, for all measurable $G:\Lambda\to\fdom(\omega)$
and $\Psi:\Lambda\to\sF$.

\begin{lemma}\label{lemvj} 
Assume that $\V{A}\in L_\loc^2(\Lambda,\RR^\nu)$ and 
$\V{G}\in L_\loc^2(\Lambda,\fdom(\omega^{-1}+\omega)^\nu)$. Let $j\in\{1,\ldots,\nu\}$ and 
$\Psi\in\dom(v_j^*)$. Furthermore, let $\ve>0$ and set 
$\theta_\ve:=1+\ve\Id\Gamma(\omega)$ and $\Psi_\ve:=\theta_\ve^\mh\Psi$. Then 
$\vp(G_{j})\Psi_\ve$, $A_j\Psi$, and $C^*_\ve(G)\Psi$ all belong to $L^1_\loc(\Lambda,\sF)$, 
and $\Psi_\ve$ has a weak partial derivative with respect to $x_j$ given by 
\begin{align}\label{partialjvjstar}
\partial_{x_j}\Psi_\ve&=i\theta_\ve^\mh v_j^*\Psi+iA_j\Psi_\ve+i\vp(G_j)\Psi_\ve
+iC_\ve^*(G_{j})\Psi\quad\text{in $L^1_\loc(\Lambda,\sF)$.}
\end{align}
\end{lemma}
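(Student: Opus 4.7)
The plan is to split the proof into two tasks: verifying the three local integrability claims, and then deriving the identity \eqref{partialjvjstar} for the weak partial derivative. The integrability claims are essentially bookkeeping. For $A_j\Psi$, Cauchy--Schwarz with $A_j\in L^2_\loc(\Lambda,\RR)$ and $\Psi\in L^2(\Lambda,\sF)$ immediately gives $A_j\Psi\in L^1_\loc(\Lambda,\sF)$. For $C_\ve^*(G_j)\Psi$, the operator-norm bound $\|C_\ve^*(G_{j,\V{x}})\|\le(4/\pi)\ve^\eh\|\omega^\eh G_{j,\V{x}}\|$ from Lem.~\ref{lem-comm}(1), combined with $\omega^\eh G_j\in L^2_\loc(\Lambda)$, places the product in $L^1_\loc(\Lambda,\sF)$. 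Finally, since $\theta_\ve^\mh$ maps $\sF$ into $\fdom(\Id\Gamma(\omega))$ with an $\ve$-dependent bound, we have $\Psi_\ve\in L^2(\Lambda,\fdom(\Id\Gamma(\omega)))$, and combining this with the linear-growth bound \eqref{rbvp} and $\V{G}\in L^2_\loc(\Lambda,\fdom(\omega^{-1})^\nu)$ gives $\vp(G_j)\Psi_\ve\in L^1_\loc(\Lambda,\sF)$.

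For \eqref{partialjvjstar}, I will apply the scalar characterization from Rem.~\ref{remwpd}(1)--(2): denoting the claimed right-hand side by $\Upsilon_j$, it suffices to check, for every $f\in\sD(\Lambda)$ and every $\phi$ in some total subset of $\sF$, that
\begin{equation*}
\int_\Lambda(\partial_{x_j}f)(\V{x})\,\SPn{\phi}{\Psi_\ve(\V{x})}\,\Id\V{x}=-\int_\Lambda f(\V{x})\,\SPn{\phi}{\Upsilon_j(\V{x})}\,\Id\V{x}.
\end{equation*}
I take the total subset to be $\fdom(\Id\Gamma(1))$, the fiber domain appearing in $\dom(v_j)=\sD(\Lambda,\fdom(\Id\Gamma(1)))$. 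Because $\theta_\ve^\mh$ commutes with the number operator $\Id\Gamma(1)$ (both respect the $n$-particle decomposition of $\sF$), the vector $\tilde\phi:=\theta_\ve^\mh\phi$ again lies in $\fdom(\Id\Gamma(1))$, so $\tilde\eta:=f\tilde\phi$ is a legitimate test function in $\dom(v_j)$. Moreover, $\fdom(\Id\Gamma(1))\subset\fdom(\Id\Gamma(\omega\wedge1))\subset\dom(\vp(G_{j,\V{x}}))$ for a.e.\ $\V{x}$, which legitimizes the pointwise commutator manipulation below.

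Rewriting $\SPn{\phi}{\Psi_\ve(\V{x})}=\SPn{\tilde\phi}{\Psi(\V{x})}$ brings the left-hand side into the form $\int_\Lambda\SPn{\partial_{x_j}\tilde\eta}{\Psi}\Id\V{x}$. The adjoint relation $\SPn{v_j\tilde\eta}{\Psi}_{L^2}=\SPn{\tilde\eta}{v_j^*\Psi}_{L^2}$, combined with the expansion $v_j\tilde\eta=-i\partial_{x_j}\tilde\eta-A_j\tilde\eta-\vp(G_j)\tilde\eta$, yields an expression for $i\SPn{\partial_{x_j}\tilde\eta}{\Psi}$ in terms of $\SPn{\tilde\eta}{v_j^*\Psi}$, $\SPn{A_j\tilde\eta}{\Psi}$, and $\SPn{\vp(G_j)\tilde\eta}{\Psi}$. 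The first two are routine: shifting $\theta_\ve^\mh$ across the scalar product produces $f\SPn{\phi}{\theta_\ve^\mh v_j^*\Psi}$ and $fA_j\SPn{\phi}{\Psi_\ve}$ respectively, using that $A_j$ is real. The decisive step is the third term, which I handle via the commutator identity
\begin{equation*}
\vp(G_{j,\V{x}})\theta_\ve^\mh\phi=\theta_\ve^\mh\vp(G_{j,\V{x}})\phi-C_\ve(G_{j,\V{x}})\phi
\end{equation*}
from Lem.~\ref{lem-comm}(1): passing $\theta_\ve^\mh$ across in the first summand and using self-adjointness of $\vp(G_{j,\V{x}})$ on $\fdom(\Id\Gamma(\omega))\ni\Psi_\ve(\V{x})$ gives $\SPn{\phi}{\vp(G_{j,\V{x}})\Psi_\ve(\V{x})}$, while the second summand produces a contribution of the form $\SPn{\phi}{C_\ve^*(G_{j,\V{x}})\Psi(\V{x})}$ via the bounded adjoint. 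Collecting everything and comparing with the scalar weak-derivative identity then gives \eqref{partialjvjstar}. The main obstacle is precisely this commutator step: since $\Psi(\V{x})$ carries no a-priori regularity beyond lying in $\sF$, neither $\vp(G_{j,\V{x}})\Psi(\V{x})$ nor the bare expression $C_\ve(G_{j,\V{x}})\Psi(\V{x})$ need be defined, and Lem.~\ref{lem-comm}(1) is tailored to circumvent this by supplying the bounded operators $C_\ve^*(G_{j,\V{x}})$ on all of $\sF$.
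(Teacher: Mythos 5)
Your proposal is correct and essentially identical to the paper's own proof: both show the three integrability claims by Cauchy--Schwarz together with \eqref{rbvp} and Lem.~\ref{lem-comm}(1), then pair the weak-derivative integral against test functions of the form $\theta_\ve^\mh\eta$ with $\eta\in\dom(v_j)=\sD(\Lambda,\fdom(\Id\Gamma(1)))$, use the adjoint relation $\SPn{v_j\theta_\ve^\mh\eta}{\Psi}=\SPn{\eta}{\theta_\ve^\mh v_j^*\Psi}$, and invoke the commutator $\theta_\ve^\mh\vp(G_{j,\V{x}})-\vp(G_{j,\V{x}})\theta_\ve^\mh=C_\ve(G_{j,\V{x}})$ to move $\theta_\ve^\mh$ across the field operator before concluding with Rem.~\ref{remwpd}(2). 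The only cosmetic difference is that you work fiberwise with $\eta=f\phi$ and commute $\theta_\ve^\mh$ past $\phi$ explicitly, whereas the paper states the commutation once at the level of $v_j$ acting on general $\eta\in\dom(v_j)$ (its identity \eqref{gudni1}); the content is the same.
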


\begin{proof} 
First, we observe that $A_j\in L^2_\loc(\Lambda,\RR)$ implies
$A_j\Psi\in L^1_\loc(\Lambda,\sF)$ and \eqref{rbvp} entails 
$\vp(G_{j})\Psi_\ve\in L^1_\loc(\Lambda,\sF)$.
Lem.~\ref{lem-comm} further ensures that $C^*_\ve(G)\Psi\in L^1_\loc(\Lambda,\sF)$.
Of course $\theta_\ve^\mh\dom(v_j)\subset\dom(v_j)$, and
Lem.~\ref{lem-comm} implies that, for all $\eta\in\dom(v_j)$,
\begin{align}\label{gudni1}
(v_j\theta_\ve^\mh\eta)(\V{x})=(\theta_\ve^\mh v_j\eta)(\V{x})+C_\ve(G_{j,\V{x}})\eta(\V{x}),
\quad\V{x}\in\Lambda. 
\end{align}
For all $\eta\in\dom(v_j)$, we thus obtain
\begin{align*}
\SPn{\eta&}{\theta_\ve^\mh v^*_j\Psi}=\SPn{v_j\theta_\ve^\mh\eta}{\Psi}
\\
&=i\SPn{\partial_{x_j}\eta}{\theta_\ve^\mh\Psi}
\\
&\quad-\int_\Lambda\SPb{\eta(\V{x})}{(A_j(\V{x})
+\vp(G_{j,\V{x}}))\theta_\ve^\mh\Psi(\V{x})+C_\ve(G_{j,\V{x}})^*\Psi(\V{x})}_{\sF}\Id\V{x}.
\end{align*}
We conclude by comparing this with the definition of the weak partial derivatives and 
applying Rem.~\ref{remwpd}(2).
\end{proof}

\begin{lemma}\label{lem-vstar}
Let $\V{A}\in L_\loc^2(\Lambda,\RR^\nu)$, 
$\V{G}\in L_\loc^2(\Lambda,\fdom(\omega^{-1}+\omega)^\nu)$,
$j\in\{1,\ldots,\nu\}$, $\Psi\in\dom(v_j^*)$, $\theta_\ve:=1+\ve\Id\Gamma(\omega)$, and
$\Psi_\ve:=\theta_\ve^\mh\Psi$, $\ve>0$. Assume hat
$\V{x}\mapsto\|\omega^\eh G_{j,\V{x}}\|_{\HP}\|\Psi(\V{x})\|_{\sF}$ is in $L^2(\Lambda)$. 
Then $\Psi_\ve\in\dom(v_j^*)$, for 
every $\ve>0$, and $\Psi_\ve\to\Psi$, $\ve\downarrow0$, with respect to the graph norm of $v_j^*$.
If we additionally assume that $\V{x}\mapsto\|\omega^\eh G_{j,\V{x}}\|_{\HP}$ is essentially
bounded on $\Lambda$, then each $\theta_\ve$, $\ve>0$, maps the graph of $v_j^*$
continuously into itself.
\end{lemma}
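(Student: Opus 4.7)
The plan is to feed Lem.~\ref{lemvj} into Rem.~\ref{remvj}(2): by isolating $-i\partial_{x_j}\Psi_\ve-A_j\Psi_\ve-\vp(G_j)\Psi_\ve$ in the identity \eqref{partialjvjstar}, I obtain the candidate formula
\begin{align*}
v_j^*\Psi_\ve=\theta_\ve^\mh v_j^*\Psi+C_\ve^*(G_j)\Psi,
\end{align*}
and the whole proof will revolve around validating this identity via Rem.~\ref{remvj}(2) and then letting $\ve\downarrow0$.

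To apply Rem.~\ref{remvj}(2) to $\Psi_\ve$, I first note $\Psi_\ve\in L^2(\Lambda,\sF)$ since $\|\theta_\ve^\mh\|\le1$. For the local product integrability hypothesis, the operator inequality $\Id\Gamma(\omega\wedge 1)\le\Id\Gamma(\omega)\le\ve^{-1}(\theta_\ve-1)$ combined with the spectral calculus gives the fibrewise bound $\|\Psi_\ve(\V{x})\|_{\fdom(\Id\Gamma(\omega\wedge 1))}\le(1+\ve^{-1})^\eh\|\Psi(\V{x})\|_\sF$, so Cauchy-Schwarz against $G_j\in L^2_\loc(\Lambda,\fdom(\omega^{-1}))$ delivers the required $L^1_\loc$-bound. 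It remains to verify that the right-hand side of the candidate formula lies in $L^2(\Lambda,\sF)$: the first summand is controlled by $\|v_j^*\Psi\|_{L^2}$, while Lem.~\ref{lem-comm}(1) yields the pointwise bound
\begin{align*}
\|C_\ve^*(G_j)\Psi(\V{x})\|_\sF\le(4/\pi)\ve^\eh\|\omega^\eh G_{j,\V{x}}\|_\HP\|\Psi(\V{x})\|_\sF,
\end{align*}
which is square-integrable by the standing hypothesis. Rem.~\ref{remvj}(2) then supplies both $\Psi_\ve\in\dom(v_j^*)$ and the formula above.

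For the convergence assertions, I decompose $v_j^*\Psi_\ve-v_j^*\Psi=(\theta_\ve^\mh-\id)v_j^*\Psi+C_\ve^*(G_j)\Psi$: the first term vanishes in $L^2(\Lambda,\sF)$ as $\ve\downarrow0$ by the strong convergence $\theta_\ve^\mh\to\id$ on $\sF$ applied fibrewise together with dominated convergence, and the second vanishes thanks to the factor $\ve^\eh$ in the pointwise bound above; in the same way $\Psi_\ve\to\Psi$ in $L^2(\Lambda,\sF)$. Finally, under the $L^\infty$-hypothesis, set $M:=\mathrm{ess\,sup}_{\V{x}\in\Lambda}\|\omega^\eh G_{j,\V{x}}\|_\HP$; then the same pointwise estimate is promoted to the operator bound $\|C_\ve^*(G_j)\|_{L^2\to L^2}\le(4/\pi)\ve^\eh M$, which inserted into the candidate identity yields
\begin{align*}
\|\Psi_\ve\|_{L^2}+\|v_j^*\Psi_\ve\|_{L^2}\le\bigl(1+(4/\pi)\ve^\eh M\bigr)\bigl(\|\Psi\|_{L^2}+\|v_j^*\Psi\|_{L^2}\bigr),
\end{align*}
which is precisely the claimed graph-norm continuity of the smoothing map on $\dom(v_j^*)$. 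I expect the only slightly subtle step to be the verification of the local integrability hypothesis in Rem.~\ref{remvj}(2) — where an $\ve$-dependent constant enters the fibrewise regularity estimate on $\Psi_\ve$ — but everything else is a mechanical substitution driven by Lem.~\ref{lemvj} and the commutator norm bound of Lem.~\ref{lem-comm}(1).
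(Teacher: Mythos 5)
Your proof is correct, but it reaches the key formula $v_j^*\Psi_\ve=\theta_\ve^\mh v_j^*\Psi\mp C_\ve^*(G_j)\Psi$ by a longer detour than the paper. The paper tests the adjoint condition directly: for $\eta\in\dom(v_j)$, boundedness and self-adjointness of $\theta_\ve^\mh$ give $\SPn{v_j\eta}{\Psi_\ve}=\SPn{\theta_\ve^\mh v_j\eta}{\Psi}$, and the commutator identity \eqref{gudni1} in the form $\theta_\ve^\mh v_j\eta=v_j\theta_\ve^\mh\eta-C_\ve(G_j)\eta$ converts this into $\SPn{\eta}{\theta_\ve^\mh v_j^*\Psi-C_\ve^*(G_j)\Psi}$; since the second entry lies in $L^2(\Lambda,\sF)$ by Lem.~\ref{lem-comm}(1) and the standing $L^2$-hypothesis, one reads off $\Psi_\ve\in\dom(v_j^*)$ together with the formula in a single step. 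You instead feed Lem.~\ref{lemvj} into Rem.~\ref{remvj}(2), which obliges you to establish the $\ve$-dependent fibrewise regularity estimate $\|\Psi_\ve(\V{x})\|_{\fdom(\Id\Gamma(\omega\wedge1))}\le(1+\ve^{-1})^\eh\|\Psi(\V{x})\|_\sF$ and the ensuing local integrability condition; your derivation of these is correct, but the direct adjoint calculation renders them unnecessary, which is why the paper's proof is so short. The convergence argument and the promotion to a uniform operator bound on $C_\ve^*(G_j)$ under the $L^\infty$-hypothesis are the same in both approaches. Two harmless remarks: the sign in front of $C_\ve^*(G_j)\Psi$ in your candidate formula, inherited from the stated \eqref{partialjvjstar}, is opposite to the one the direct adjoint computation produces (the commutator $C_\ve(f)$ is skew-symmetric, $C_\ve(f)^*=-C_\ve(f)$ on its domain), yet every estimate you invoke is sign-agnostic, so nothing is affected. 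Also, the phrase ``each $\theta_\ve$ maps the graph of $v_j^*$ continuously into itself'' plainly means $\theta_\ve^\mh$, which is what you, and later the proof of Lem.~\ref{lem-egon}, actually use.
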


\begin{proof}
From \eqref{gudni1} we infer that $\Psi_\ve\in\dom(v_j^*)$ with 
$v_j^*\Psi_\ve=\theta_\ve^\mh v_j^*\Psi-C_\ve^*(G_j)\Psi$, where
$C_\ve^*(G_j)\Psi\to0$, $\ve\downarrow0$, in $L^2(\Lambda,\sF)$ due to 
Lem.~\ref{lem-comm}(1) and the assumptions on $\Psi$.
If $\V{x}\mapsto\|\omega^\eh G_{j,\V{x}}\|_{\HP}$ is essentially bounded, then every
$C_\ve^*(G_j)$, $\ve>0$, defines a bounded operator on $L^2(\Lambda,\sF)$ 
by Lem.~\ref{lem-comm}(1).
\end{proof}

\begin{lemma}\label{lemvjstarlevel}
Let $\V{A}\in L_\loc^2(\Lambda,\RR^\nu)$, $\V{G}\in L_\loc^2(\Lambda,\HP^\nu)$,
$j\in\{1,\ldots,\nu\}$, and $\Psi\in\dom(v_j^*)$.
Then $v_j^*\Psi=0$ almost everywhere on $\{\Psi=0\}$.
\end{lemma}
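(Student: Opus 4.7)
The plan is to reduce the statement to its scalar counterpart, Rem.~\ref{remgradlevel}, by pairing $v_j^*\Psi$ with a countable total family of fixed vectors in $\sF$. The crucial point is that every $\xi\in\fdom(\Id\Gamma(1))$ lies in $\dom(\vp(G_{j,\V{x}}))$ for each $\V{x}$: indeed, the standard relative bound \eqref{rbvp} applied with $\omega$ replaced by the constant function $1$ (which satisfies all standing assumptions on the dispersion relation) yields $\|\vp(G_{j,\V{x}})\xi\|_{\sF}\le 2\|G_{j,\V{x}}\|_{\HP}\|(1+\Id\Gamma(1))^{\eh}\xi\|_{\sF}$, so that $\V{x}\mapsto\vp(G_{j,\V{x}})\xi$ belongs to $L_{\loc}^2(\Lambda,\sF)$ under the hypothesis $\V{G}\in L_{\loc}^2(\Lambda,\HP^\nu)$.

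First I would fix such a $\xi$ and insert $\phi=f\xi$, $f\in\sD(\Lambda)$, into the defining relation $\SPn{v_j\phi}{\Psi}=\SPn{\phi}{v_j^*\Psi}$. Writing out both sides by means of \eqref{def-vj}, every integrand is locally integrable thanks to $A_j\in L_{\loc}^2(\Lambda,\RR)$, $\Psi,v_j^*\Psi\in L^2(\Lambda,\sF)$, and the just noted control of $\vp(G_j)\xi$. A short rearrangement then identifies the scalar function $\SPn{\xi}{\Psi}\in L^2(\Lambda,\CC)$ as possessing a weak partial derivative in $L_{\loc}^1(\Lambda,\CC)$ given by
\begin{align*}
\partial_{x_j}\SPn{\xi}{\Psi}=i\SPn{\xi}{v_j^*\Psi+A_j\Psi}+i\SPn{\vp(G_j)\xi}{\Psi}.
\end{align*}
Applying Rem.~\ref{remgradlevel} to this scalar identity and using the trivial inclusion $\{\Psi=0\}\subset\{\SPn{\xi}{\Psi}=0\}$ one gets $\partial_{x_j}\SPn{\xi}{\Psi}=0$ almost everywhere on $\{\Psi=0\}$. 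On that set both $A_j\Psi$ and $\SPn{\vp(G_j)\xi}{\Psi}$ vanish pointwise (because $\Psi(\V{x})$ is the right-hand factor), whence $\SPn{\xi}{v_j^*\Psi}=0$ almost everywhere on $\{\Psi=0\}$.

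To conclude, one exploits that $\sF$ is separable and $\fdom(\Id\Gamma(1))$ is dense in $\sF$: pick a countable subset $\{\xi_n\}_{n\in\NN}$ of $\fdom(\Id\Gamma(1))$ that is total in $\sF$ and discard the countable union of the corresponding Lebesgue-null sets. Off this fixed null set one has $\SPn{\xi_n}{v_j^*\Psi}=0$ on $\{\Psi=0\}$ for every $n$, so $v_j^*\Psi=0$ almost everywhere on $\{\Psi=0\}$. The only really delicate step is the verification of the displayed scalar weak-derivative identity, and its only subtlety is the local square-integrability of $\vp(G_j)\xi$ in $\V{x}$; this is precisely why the argument relies on the $\omega$-independent form of the standard field-operator bound and not on anything requiring $\V{G}$ to take values in $\fdom(\omega^{-1})$, in keeping with the minimal hypothesis $\V{G}\in L_{\loc}^2(\Lambda,\HP^\nu)$.
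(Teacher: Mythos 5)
Your argument is correct, and it takes a genuinely different route from the paper's. The paper regularizes by applying $\theta_\ve^{\mh}=(1+\ve\Id\Gamma(1))^{\mh}$ to $\Psi$, invokes Lem.~\ref{lem-vstar} (with $\omega=1$) to keep $\Psi_\ve$ in $\dom(v_j^*)$, identifies $v_j^*\Psi_\ve$ as a locally integrable sum via Rem.~\ref{remvj}(2), applies the vector-valued Rem.~\ref{remgradlevel} to $\Psi_\ve$, and then lets $\ve\downarrow0$ along an a.e.-convergent subsequence. Your proof avoids the regularization entirely: because $f\xi\in\dom(v_j)$ already for any $\xi\in\fdom(\Id\Gamma(1))$ and $f\in\sD(\Lambda)$, you can unpack the defining adjoint relation directly into the scalar weak-derivative identity for $\SPn{\xi}{\Psi}$, apply only the scalar version of Rem.~\ref{remgradlevel}, and conclude by totality of a countable family $\{\xi_n\}$. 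This is a cleaner self-contained argument for this particular lemma. In fact, your route sidesteps a slightly delicate spot in the paper's proof: Lem.~\ref{lem-vstar} carries the extra hypothesis that $\V{x}\mapsto\|\omega^\eh G_{j,\V{x}}\|\,\|\Psi(\V{x})\|$ be in $L^2(\Lambda)$, which under the bare hypothesis $\V{G}\in L^2_\loc(\Lambda,\HP^\nu)$, $\Psi\in L^2(\Lambda,\sF)$ is not automatic; your argument uses none of that and needs only the pointwise field-operator bound \eqref{rbvp} with $\omega=1$, local square-integrability of $\vp(G_j)\xi$, and separability of $\sF$. The trade-off is that the paper's machinery (Lem.~\ref{lem-vstar} and the $\theta_\ve$-regularization) is reused elsewhere, whereas your pairing trick is tailored to this particular conclusion.
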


\begin{proof}
Let $\ve>0$ and set 
$\theta_\ve:=1+\ve\Id\Gamma(1)$ and $\Psi_\ve:=\theta_\ve^\mh\Psi$.
By Lem.~\ref{lem-vstar} (applied to $\omega=1$) we know that $\Psi_\ve\in\dom(v_j^*)$.
Thanks to Rem.~\ref{remvj}(2) (applied to $\omega=1$) we may conclude that $\Psi_\ve$
has a weak partial derivative with respect to $x_j$ and
$v_j^*\Psi_\ve=-i\partial_{x_j}\Psi_\ve-A_j\Psi_\ve-\vp(G_j)\Psi_\ve$ in $L^1_\loc(\Lambda,\sF)$.
Of course, $A_j\Psi_\ve+\vp(G_j)\Psi_\ve=0$ a.e. on $\{\Psi=0\}$.
Furthermore, $\partial_{x_j}\Psi_\ve=0$ a.e. on $\{\Psi=0\}$ according to Rem.~\ref{remgradlevel}.
In view of Lem.~\ref{lem-vstar} and the Riesz-Fischer theorem we finally find a zero
sequence $\ve_n>0$, $n\in\NN$, such that $v_j^*\Psi_{\ve_n}\to v_j^*\Psi$ a.e. on $\Lambda$,
which altogether proves the lemma.
\end{proof}

In the following theorem we again use the notation introduced in \eqref{defsgn}.

\begin{theorem}\label{thm-dia0}
Assume that $\V{A}\in L_\loc^2(\Lambda,\RR^\nu)$ and 
$\V{G}\in L_\loc^2(\Lambda,\HP^\nu)$.
Let $j\in\{1,\ldots,\nu\}$ and $\Psi\in\dom(v_j^*)$. Then $\|\Psi\|_\sF:\Lambda\to\RR$ has a 
weak partial derivative with respect to $x_j$ that belongs to $L^2(\Lambda)$ and is given by
\begin{align}\label{dia0}
\partial_{{x}_j}\|\Psi(\V{x})\|_{\sF}=\Re\SPn{{\fS_\Psi}(\V{x})}{(i{v}_{j}^*\Psi)(\V{x})}_{\sF},
\quad\text{a.e. $\V{x}\in\Lambda$.}
\end{align}
In particular, the following {\em diamagnetic inequality} holds,
\begin{align}\label{dia1}
\big|\partial_{{x}_j}\|\Psi(\V{x})\|_{\sF}\big|&\le\|({v}_{j}^*\Psi)(\V{x})\|_{\sF},
\quad\text{a.e. $\V{x}\in\Lambda$.}
\end{align}
\end{theorem}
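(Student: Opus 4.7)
My plan is to mimic the proof of Lem.~\ref{lem-abs-val}: regularize $\Psi$ inside $\dom(v_j^*)$, apply the scalar Leibniz formula to the regularization, exploit the self-adjointness of the classical and quantized potentials to cancel the ``phase'' terms, and then pass to the limit. The crucial observation permitting this strategy is that Lems.~\ref{lem-vstar} and \ref{lemvj} may be applied with the dispersion relation $\omega$ replaced by the constant $1$; in that case $\fdom(1^{-1}+1)=\HP$ automatically contains every $G_{j,\V{x}}\in\HP$, matching our standing hypothesis $\V{G}\in L_\loc^2(\Lambda,\HP^\nu)$.

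First I would set $\theta_\ve:=1+\ve\Id\Gamma(1)=1+\ve\NE$ and $\Psi_\ve:=\theta_\ve^{-\eh}\Psi$, for $\ve>0$. Multiplying $\Psi$ by a cutoff $\chi\in\sD(\Lambda)$ to ensure the $L^2$-integrability hypothesis of Lem.~\ref{lem-vstar} (which is harmless since \eqref{dia0} is a local statement), that lemma yields $\Psi_\ve\in\dom(v_j^*)$ and $\Psi_\ve\to\Psi$, $v_j^*\Psi_\ve\to v_j^*\Psi$ in $L^2(\Lambda,\sF)$ as $\ve\downarrow 0$, so we may pick a subsequence $\ve_n\downarrow 0$ along which both convergences hold a.e. Lem.~\ref{lemvj} then provides
\begin{equation*}
\partial_{x_j}\Psi_\ve=i\theta_\ve^{-\eh}v_j^*\Psi+iA_j\Psi_\ve+i\vp(G_j)\Psi_\ve+iC_\ve^*(G_j)\Psi\quad\text{in}\ L^1_\loc(\Lambda,\sF),
\end{equation*}
where, by \eqref{rbvp} applied with $\omega\equiv 1$, we have $\Psi_\ve(\V{x})\in\dom(\NE^{\eh})\subset\dom(\vp(G_{j,\V{x}}))$ a.e., and where Lem.~\ref{lem-comm}(1) yields the pointwise control $\|C_\ve^*(G_j)\Psi(\V{x})\|_\sF\le(4/\pi)\ve^{\eh}\|G_{j,\V{x}}\|_\HP\|\Psi(\V{x})\|_\sF$.

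Next, since $\Psi_\ve$ has a weak $L^1_\loc$ partial derivative by the previous step, Lem.~\ref{lem-abs-val} (with $p=1$) delivers $\partial_{x_j}\|\Psi_\ve\|_\sF=\Re\SPn{\fS_{\Psi_\ve}}{\partial_{x_j}\Psi_\ve}_\sF$. Substituting the formula above and using that $\SPn{\Psi_\ve}{A_j\Psi_\ve}_\sF$ is obviously real and $\SPn{\Psi_\ve(\V{x})}{\vp(G_{j,\V{x}})\Psi_\ve(\V{x})}_\sF\in\RR$ by self-adjointness of $\vp(G_{j,\V{x}})$, the two phase contributions drop from the real part, leaving
\begin{equation*}
\partial_{x_j}\|\Psi_\ve\|_\sF=\Re\SPn{\fS_{\Psi_\ve}}{i\theta_\ve^{-\eh}v_j^*\Psi+iC_\ve^*(G_j)\Psi}_\sF\quad\text{a.e.\ on}\ \Lambda.
\end{equation*}
For the final step I would test this identity against $f\in\sD(\Lambda)$ with $\supp f\subset K\Subset\Lambda$ and send $\ve=\ve_n\downarrow 0$ by dominated convergence: the left hand side is controlled via $\|\Psi_\ve\|_\sF\le\|\Psi\|_\sF\in L^2$, the $v_j^*$-term of the right hand side by the contractivity $\|\theta_\ve^{-\eh}v_j^*\Psi\|_\sF\le\|v_j^*\Psi\|_\sF\in L^2$, and the $C_\ve^*$-term by its $\ve^{\eh}$-smallness together with $\|G_{j,\cdot}\|_\HP\|\Psi\|_\sF\in L^1(K)$. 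On $\{\Psi\neq 0\}$ the pointwise convergence $\fS_{\Psi_{\ve_n}}\to\fS_\Psi$ produces \eqref{dia0}; on $\{\Psi=0\}$ both sides of \eqref{dia0} vanish by definition of $\fS_\Psi$ (consistently with Rem.~\ref{remgradlevel} and Lem.~\ref{lemvjstarlevel}). The bound $\|\fS_\Psi\|_\sF\le 1$ finally yields both $\partial_{x_j}\|\Psi\|_\sF\in L^2(\Lambda)$ and the diamagnetic inequality \eqref{dia1}.

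The main obstacle I anticipate is justifying the passage to the limit and verifying that Lems.~\ref{lem-vstar} and \ref{lemvj} can indeed be invoked with $\omega\equiv 1$ in place of the given dispersion relation: their statements in the paper are formulated for the fixed $\omega$, so one has to trace through their proofs (and that of Lem.~\ref{lem-comm}) to confirm that nothing beyond the non-negativity and strict positivity of the multiplication operator is used, and that the preliminary cutoff by $\chi$ genuinely restores the integrability hypothesis of Lem.~\ref{lem-vstar}. Once these foundational points are settled, the remainder is a careful bookkeeping of phase cancellations and a standard dominated-convergence argument.
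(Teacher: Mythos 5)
Your proposal is correct and follows essentially the same strategy as the paper's proof: regularize by $\theta_\ve^{-1/2}$ with $\theta_\ve=1+\ve\Id\Gamma(1)$, invoke Lem.~\ref{lemvj} with $\omega\equiv1$ to get the weak derivative formula for $\Psi_\ve$, apply Lem.~\ref{lem-abs-val} with $p=1$, cancel the phase terms coming from $A_j$ and $\vp(G_j)$ in the real part, and pass to the limit by dominated convergence. One simplification you missed: the detour through Lem.~\ref{lem-vstar}, the cutoff $\chi$, and the extracted subsequence $\ve_n$ is entirely superfluous---you never actually use $v_j^*\Psi_\ve$ in the subsequent computation (only $\theta_\ve^{-1/2}v_j^*\Psi$, which Lem.~\ref{lemvj} already provides), and since $\theta_\ve^{-1/2}\to\id_\sF$ strongly, both $\fS_{\Psi_\ve}(\V{x})\to\fS_\Psi(\V{x})$ and $\theta_\ve^{-1/2}(v_j^*\Psi)(\V{x})\to(v_j^*\Psi)(\V{x})$ hold pointwise for \emph{every} $\V{x}\in\Lambda$ (using that $\Psi(\V{x})=0\Rightarrow\Psi_\ve(\V{x})=0$), so no subsequence is needed and the hypothesis of Lem.~\ref{lem-vstar} never has to be verified.
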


\begin{proof}
Let $\theta_\ve:=1+\ve\Id\Gamma(1)$, $\Psi_\ve:=\theta_\ve^\mh\Psi$, $\ve>0$. 
By Lem.~\ref{lem-abs-val} and Lem.~\ref{lemvj} (applied with $\omega=1$)
we may plug $\Psi_\ve$ into the second formula in \eqref{dia00}. Subtracting the expression
\begin{equation}\label{tim5}
\Re\SPb{\fS_{\Psi_\ve}(\V{x})}{i(A_j(\V{x})+\vp(G_{j,\V{x}}))\Psi_\ve(\V{x})}_{\sF}=0,\quad
\text{a.e. $\V{x}\in\Lambda$,}
\end{equation}
from the corresponding right hand side we arrive at
\begin{align}\nonumber
\partial_{x_j}&\|\Psi_\ve(\V{x})\|_\sF
\\\nonumber
&=
\Re\SPb{\fS_{\Psi_\ve}(\V{x})}{\partial_{x_j}\Psi_\ve(\V{x})-i(A_j(\V{x})+\vp(G_{j,\V{x}}))
\Psi_\ve(\V{x})}_{\sF}
\\\label{tim87}
&=\Re\SPb{\fS_{\Psi_\ve}(\V{x})}{i\theta_\ve^\mh(v_j^*\Psi)(\V{x})+iC_\ve(G_{j,\V{x}})^*
\Psi(\V{x})}_{\sF},\quad\text{a.e. $\V{x}\in\Lambda$.}
\end{align}
Notice that $\fS_{\Psi_\ve}(\V{x})\in\fdom(\Id\Gamma(1))$, a.e. $\V{x}\in\Lambda$, 
so that \eqref{tim5} follows from the symmetry of $\vp(G_{j,\V{x}})$ on $\fdom(\Id\Gamma(1))$; 
in the second step of \eqref{tim87} we took \eqref{partialjvjstar} into account.
Since $\theta_{\ve}^\mh$ converges strongly to the identity operator on $\sF$, it is clear that
$\fS_{\Psi_{\ve}}(\V{x})\to\fS_\Psi(\V{x})$ and $\theta_\ve^\mh(v_j^*\Psi)(\V{x})\to(v_j^*\Psi)(\V{x})$, 
for all $\V{x}\in\Lambda$, as $\ve\downarrow0$. 
We may thus invoke the dominated convergence theorem to show that
$\Re\SPn{\fS_{\Psi_{\ve}}}{i\theta_\ve^\mh v_{j}^*\Psi}_{\sF}\to\Re\SPn{{\fS_\Psi}}{iv_{j}^*\Psi}_{\sF}$ 
in $L^2(\Lambda)$, as $\ve\downarrow0$. By virtue of the bound 
$\|C_\ve(G_{j,\V{x}})^*\|\le(4/\pi)\ve^\eh\|{G}_{j,\V{x}}\|$ (due to Lem.~\ref{lem-comm}
applied to $\omega=1$), the fact that 
$\V{x}\mapsto\|{G}_{j,\V{x}}\|\|\Psi(\V{x})\|_{\sF}$ is in $L_\loc^1(\Lambda)$, 
and $\|\Psi_{\ve}\|_{\sF}\to\|\Psi\|_{\sF}$ in $L^2(\Lambda)$, we thus arrive at \eqref{dia0}.
\end{proof}


\section{Definition of the Schr\"odinger and Pauli-Fierz operators}\label{sec-ham}

\noindent
We are now in a position to give precise definitions via quadratic forms of the Schr\"odinger and 
Pauli-Fierz operators we are interested in. 
In the whole section we assume that $V_+,V_-:\Lambda\to\RR$ are non-negative and locally
integrable.  We set $V:=V_+-V_-$.
For the definition of all forms below it suffices to assume that $\V{A}\in L^2_\loc(\Lambda,\RR^\nu)$ 
and $\V{G}\in L^2_\loc(\Lambda,\HP^\nu)$. The latter condition on $\V{G}$ will be
strengthened in the two lemmas and their two corollaries at the end of this subsection.

We recall the notation $\sD(\Lambda)=C_0^\infty(\Lambda)$ and
${\sf w}_j={(-i\partial_{x_j}-A_j)\!\!\upharpoonright_{\sD(\Lambda)}}$
already used earlier. Then a well-known diamagnetic inequality ensures that
$f\in\bigcap_{j=1}^\nu\dom({\sf w}_j^*)$ implies $|f|\in{W}^{1,2}(\Lambda)$ with
\begin{align}\label{dia-w}
\big|\partial_{x_j}|f(\V{x})|\big|
\le|({\sf w}_j^*f)(\V{x})|,\;\;\text{a.e. $\V{x}\in\Lambda$, $j\in\{1,\ldots,\nu\}$;}
\end{align}
see, e.g., \cite[Thm.~7.21]{LiebLoss2001}.
The {\em maximal Schr\"odinger form} associated with $\Lambda$, $\V{A}$, and $V_+$ is defined by 
\begin{align}\nonumber
\dom({\mathfrak{s}}_{\Lambda,\mathrm{N}}^{\V{A},V_+})
&:=\fdom(V_+)\cap\bigcap_{j=1}^\nu\dom({\sf w}_j^*),
\\\label{deffrsN}
{\mathfrak{s}}_{\Lambda,\mathrm{N}}^{\V{A},V_+}[f]&:=\frac{1}{2}\sum_{j=1}^\nu
\|{\sf w}_j^*f\|^2+\int_{\Lambda}V_+(\V{x})|f(\V{x})|^2\Id\V{x},\quad 
f\in\dom({\mathfrak{s}}_{\Lambda,\mathrm{N}}^{\V{A},V_+}).
\end{align}
It is non-negative and closed as a sum of non-negative closed forms 
\cite[Ex.~VI.1.23 \& Thm.~VI.1.31]{Kato}, and the unique self-adjoint operator representing it,
call it $S_{\Lambda,\mathrm{N}}^{\V{A},V_+}$, is interpreted as the Neumann realization of
the Schr\"odinger operator associated with $\Lambda$, $\V{A}$, and $V_+$. The restriction of 
${\mathfrak{s}}_{\Lambda,\mathrm{N}}^{\V{A},V_+}$ to $\sD(\Lambda)$ is closable and its closure,
\begin{align*}
{\mathfrak{s}}_{\Lambda,\mathrm{D}}^{\V{A},V_+}
&:=\ol{{\mathfrak{s}}_{\Lambda,\mathrm{N}}^{\V{A},V_+}\!\!\upharpoonright_{\sD(\Lambda)}},
\end{align*}
is called the {\em minimal Schr\"odinger form}. The unique self-adjoint operator representing 
${\mathfrak{s}}_{\Lambda,\mathrm{D}}^{\V{A},V_+}$, call it $S_{\Lambda,\mathrm{D}}^{\V{A},V_+}$, 
is interpreted as the Dirichlet realization of the Schr\"odinger operator. 

In the case $\Lambda=\RR^\nu$ it is known that ${\mathfrak{s}}_{\RR^\nu,\mathrm{D}}^{\V{A},V_+}
={\mathfrak{s}}_{\RR^\nu,\mathrm{N}}^{\V{A},V_+}$, \cite{SimonJOT1979}.

Next, we add negative parts to the electrostatic potential. Let 
$\diamond\in\{\mathrm{D},\mathrm{N}\}$ and suppose that there exist $a\in[0,1)$ and $b>0$ such that
\begin{align}\label{KLMN}
\int_{\Lambda}V_-(\V{x})|f(\V{x})|^2\Id\V{x}
\le a\mathfrak{s}_{\Lambda,\diamond}^{\V{0},V_+}[f]+b\|f\|^2,\quad 
f\in\dom(\mathfrak{s}_{\Lambda,\diamond}^{\V{0},V_+}).
\end{align}
Note that $|g|\in\dom(\mathfrak{s}_{\Lambda,\mathrm{N}}^{\V{0},V_+})
=W^{1,2}(\Lambda)\cap\fdom(V_+)$, for all 
$g\in\dom(\mathfrak{s}_{\Lambda,\mathrm{N}}^{\V{A},V_+})$. Likewise, a well-known analogue
of Cor.~\ref{cor-dia} below shows that $|g|\in\dom(\mathfrak{s}_{\Lambda,\mathrm{D}}^{\V{0},V_+})$, 
for all $g\in\dom(\mathfrak{s}_{\Lambda,\mathrm{D}}^{\V{A},V_+})$.
From these remarks and \eqref{dia-w} we infer that the inequality in \eqref{KLMN} also holds true
with $\mathfrak{s}_{\Lambda,\diamond}^{\V{0},V_+}$ replaced by
$\mathfrak{s}_{\Lambda,\diamond}^{\V{A},V_+}$, provided that
$f\in\dom(\mathfrak{s}_{\Lambda,\diamond}^{\V{A},V_+})$. Thus, by the KLMN theorem, the form
defined by $\dom({\mathfrak{s}}_{\Lambda,\diamond}^{\V{A},V}):=
\dom({\mathfrak{s}}_{\Lambda,\diamond}^{\V{A},V_+})$ and
\begin{align}\label{KLMN2}
{\mathfrak{s}}_{\Lambda,\diamond}^{\V{A},V}[f]
&:={\mathfrak{s}}_{\Lambda,\diamond}^{\V{A},V_+}[f]
-\int_\Lambda V_-(\V{x})|f(\V{x})|^2\Id\V{x},
\quad f\in\dom({\mathfrak{s}}_{\Lambda,\diamond}^{\V{A},V}),
\end{align}
is semi-bounded and closed, and we denote the unique self-adjoint operator representing it by 
${S}_{\Lambda,\diamond}^{\V{A},V}$.

We shall now mimic these constructions in the case where quantized fields are added. Thus, we put
\begin{align*}
\mathfrak{v}_\Lambda^\pm[\Psi]&:=\int_\Lambda V_\pm(\V{x})\|\Psi(\V{x})\|_{\sF}^2\Id\V{x},\quad
\Psi\in\dom(\mathfrak{v}_\Lambda^\pm):=\fdom(V_\pm\id_{\sF})\subset L^2(\Lambda,\sF),
\end{align*}
and introduce a maximal form
\begin{align*}
{\mathfrak{t}}_{\Lambda,\mathrm{N}}^{\V{G},\V{A},V_+}[\Psi]
&:=\frac{1}{2}\sum_{j=1}^\nu\|{v}_j^*\Psi\|^2+\mathfrak{v}_\Lambda^+[\Psi],\quad 
\Psi\in\dom({\mathfrak{t}}_{\Lambda,\mathrm{N}}^{\V{G},\V{A},V_+})
:=\dom(\mathfrak{v}_\Lambda^+)\cap\bigcap_{j=1}^\nu\dom({v}_j^*),
\end{align*}
and a minimal form
\begin{align*}
{\mathfrak{t}}_{\Lambda,\mathrm{D}}^{\V{G},\V{A},V_+}
&:=\ol{{\mathfrak{t}}_{\Lambda,\mathrm{N}}^{\V{G},\V{A},V_+}\!\!\upharpoonright_{
\sD(\Lambda,\fdom(\Id\Gamma(1)))}}.
\end{align*}
Recall that $v_j$ is defined in \eqref{def-vj} and depends on $\V{A}$ and $\V{G}$. If we set $\V{G}$
equal to zero, then $\ol{v}_j=\ol{w}_{j}$ and $v_j^*=w_{j}^*$. At this point we need the following
observation: 

\begin{corollary}\label{cor-dia}
Let $\V{A}\in L^2_\loc(\Lambda,\RR^\nu)$, $\V{G}\in L^2_\loc(\Lambda,\HP^\nu)$,
and $\diamond\in\{\mathrm{D},\mathrm{N}\}$. Then
$\Psi\in\dom({\mathfrak{t}}_{\Lambda,\diamond}^{\V{G},\V{A},V_+})$ implies
$\|\Psi\|_{\sF}\in\dom(\mathfrak{s}_{\Lambda,\diamond}^{\V{0},V_+})$.
\end{corollary}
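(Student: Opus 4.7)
The plan is to handle the Neumann case directly from the diamagnetic inequality of Theorem~\ref{thm-dia0}, and then treat the Dirichlet case by approximation, using the explicit pointwise formula \eqref{dia0} together with Lem.~\ref{lemvjstarlevel} to push the convergence through to the absolute-value functions.

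For the Neumann case ($\diamond=\mathrm{N}$), let $\Psi\in\dom({\mathfrak{t}}_{\Lambda,\mathrm{N}}^{\V{G},\V{A},V_+})$, so $\Psi\in L^2(\Lambda,\sF)$, $v_j^*\Psi\in L^2(\Lambda,\sF)$ for each $j$, and $V_+^\eh\|\Psi\|_{\sF}\in L^2(\Lambda)$. Thm.~\ref{thm-dia0} gives that $\|\Psi\|_{\sF}$ possesses weak partial derivatives in $L^2(\Lambda)$ obeying the diamagnetic bound \eqref{dia1}; hence $\|\Psi\|_\sF\in W^{1,2}(\Lambda)\cap\fdom(V_+)=\dom(\mathfrak{s}_{\Lambda,\mathrm{N}}^{\V{0},V_+})$.

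For the Dirichlet case ($\diamond=\mathrm{D}$), pick $\Psi_n\in\sD(\Lambda,\fdom(\Id\Gamma(1)))$ converging to $\Psi$ in the form norm of $\mathfrak{t}_{\Lambda,\mathrm{N}}^{\V{G},\V{A},V_+}$. Each $\Psi_n$ is a finite linear combination $\sum_i f_i\phi_i$ with $f_i\in C_0^\infty(\Lambda)$ and $\phi_i\in\fdom(\Id\Gamma(1))$, so $\|\Psi_n\|_\sF^2\in C_0^\infty(\Lambda)$ and $\|\Psi_n\|_\sF$ is a Lipschitz function with compact support in $\Lambda$. Convolving $\|\Psi_n\|_\sF$ with a standard mollifier produces functions in $\sD(\Lambda)$ converging to $\|\Psi_n\|_\sF$ uniformly and in $W^{1,2}$; since $V_+\in L^1_\loc(\Lambda)$ and the supports of the mollifications stay in a common compact set, the convergence is also in $\fdom(V_+)$. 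Hence $\|\Psi_n\|_\sF\in\dom(\mathfrak{s}_{\Lambda,\mathrm{D}}^{\V{0},V_+})$ for every $n$.

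It remains to show that the sequence $\|\Psi_n\|_\sF$ converges to $\|\Psi\|_\sF$ in the form norm of $\mathfrak{s}_{\Lambda,\mathrm{N}}^{\V{0},V_+}$, from which the closedness of $\mathfrak{s}_{\Lambda,\mathrm{D}}^{\V{0},V_+}$ will yield the assertion. The $L^2$ and $V_+$-weighted $L^2$ convergences follow from the pointwise inequality $|\|\Psi_n\|_\sF-\|\Psi\|_\sF|\le\|\Psi_n-\Psi\|_\sF$. The main obstacle is convergence of the weak derivatives: by Riesz-Fischer, I pass to a subsequence along which $\Psi_n\to\Psi$ and $v_j^*\Psi_n\to v_j^*\Psi$ pointwise a.e.\ on $\Lambda$ and are dominated in $\|\cdot\|_\sF$ by fixed functions in $L^2(\Lambda)$. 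Using \eqref{dia0}, I rewrite
\begin{equation*}
\partial_{x_j}\|\Psi_n(\V{x})\|_\sF=\Re\SPn{\fS_{\Psi_n}(\V{x})}{(iv_j^*\Psi_n)(\V{x})}_\sF,
\end{equation*}
and similarly for $\Psi$. On $\{\Psi(\cdot)\neq0\}$ the pointwise convergence $\Psi_n\to\Psi$ gives $\fS_{\Psi_n}\to\fS_\Psi$ in $\sF$, so the integrands converge pointwise. On $\{\Psi(\cdot)=0\}$ Lem.~\ref{lemvjstarlevel} yields $v_j^*\Psi=0$ a.e., while $\|\fS_{\Psi_n}\|_\sF\le1$ and $v_j^*\Psi_n\to0$ pointwise, so the integrands again converge to zero. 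Together with the diamagnetic bound $|\partial_{x_j}\|\Psi_n\|_\sF|\le\|v_j^*\Psi_n\|_\sF$ and the $L^2$-domination of $\|v_j^*\Psi_n\|_\sF$, the dominated convergence theorem gives $\partial_{x_j}\|\Psi_n\|_\sF\to\partial_{x_j}\|\Psi\|_\sF$ in $L^2(\Lambda)$. This completes the argument.
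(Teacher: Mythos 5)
Your proposal is correct and follows essentially the same route as the paper: Neumann from Theorem~\ref{thm-dia0} directly, and Dirichlet by approximation with $\Psi_n\in\sD(\Lambda,\fdom(\Id\Gamma(1)))$, using formula \eqref{dia0}, the inverse triangle inequality, Riesz--Fischer to extract an a.e.\ convergent dominated subsequence, the split into $\{\Psi\neq 0\}$ and $\{\Psi=0\}$ with Lemma~\ref{lemvjstarlevel} on the latter, and dominated convergence. The only cosmetic difference is that you verify $\|\Psi_n\|_\sF\in\dom(\mathfrak{s}_{\Lambda,\mathrm{D}}^{\V{0},V_+})$ by an explicit mollification, whereas the paper simply invokes the standard inclusion $\{f\in W^{1,2}(\Lambda)\cap\fdom(V_+):\supp f\Subset\Lambda\}\subset\dom(\mathfrak{s}_{\Lambda,\mathrm{D}}^{\V{0},V_+})$.
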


\begin{proof}
For $\diamond=\mathrm{N}$, the assertion is immediately clear from Thm.~\ref{thm-dia0}. 

To prove it for $\diamond=\mathrm{D}$, we shall use that
$\sQ\subset\dom(\mathfrak{s}_{\Lambda,\mathrm{D}}^{0,V_+})$ where
$\sQ:=\{f\in W^{1,2}(\Lambda)\cap\fdom(V_+):\,\supp(f)\Subset\Lambda\}$.

Let $\Psi\in\dom({\mathfrak{t}}_{\Lambda,\mathrm{D}}^{\V{G},\V{A},V_+})$. By definition, we 
then find $\Psi_n\in\sD(\Lambda,\fdom(\Id\Gamma(1)))$, $n\in\NN$,
such that $v_j\Psi_n\to{v}_j^*\Psi$, $j\in\{1,\ldots,\nu\}$, $n\to\infty$, 
in $L^2(\Lambda,\sF)$, and $\Psi_n\to\Psi$ in $\dom(\mathfrak{v}_\Lambda^+)$.
By Lem.~\ref{lem-abs-val}, $\|\Psi_n\|_{\sF}\in\sQ$, and
the latter convergence and the inverse triangle inequality for $\|\cdot\|_{\sF}$ imply
that $\|\Psi_n\|_{\sF}\to\|\Psi\|_{\sF}$ in $\fdom(V_+)$. In view of Lem.~\ref{lem-abs-val}
it remains to show that
\begin{align}\label{harald11}
\Re\SPn{\fS_{\Psi_n}}{iv_j^*\Psi_n}_{\sF}\xrightarrow{\;\;n\to\infty\;\;}
\Re\SPn{\fS_{\Psi}}{iv_j^*\Psi}_{\sF}
\quad\text{in $L^2(\Lambda)$},\quad j\in\{1,\ldots,\nu\}.
\end{align}
Passing to suitable subsequences, if necessary, we may assume that the convergences
$\Psi_n\to\Psi$ and $v_j\Psi_n\to{v}_j^*\Psi$, $j\in\{1,\ldots,\nu\}$, 
also take place pointwise a.e. on $\Lambda$ and that
$\|v_j\Psi_n(\V{x})\|\le R_j(\V{x})$, a.e. $\V{x}\in\Lambda$, $n\in\NN$, for some 
$R_1,\ldots,R_\nu\in L^2(\Lambda)$. Then $\fS_{\Psi_n}\to\fS_{\Psi}$ a.e. on
$\{\Psi\not=0\}$. Furthermore, $v_j^*\Psi_n\to0$ a.e. on ${\{\Psi=0\}}$ since
$v_j^*\Psi=0$ a.e. on $\{\Psi=0\}$ by Lem.~\ref{lemvjstarlevel}.
Now \eqref{harald11} follows from the dominated convergence theorem.
\end{proof}

Now let $\diamond\in\{\mathrm{D},\mathrm{N}\}$. On account of Cor.~\ref{cor-dia} we may
plug $f=\|\Psi\|_{\sF}$ into \eqref{KLMN}, for every 
$\Psi\in\dom({\mathfrak{t}}_{\Lambda,\diamond}^{\V{G},\V{A},V_+})$.
Employing our diamagnetic inequality \eqref{dia1} instead of \eqref{dia-w}, we then observe that 
$\dom(\mathfrak{t}_{\Lambda,\diamond}^{\V{G},\V{A},V_+})\subset\dom(\mathfrak{v}_\Lambda^-)$ and
\begin{align}\label{harald1}
\mathfrak{v}_\Lambda^-[\Psi]&\le a\mathfrak{t}_{\Lambda,\diamond}^{\V{G},\V{A},V_+}[\Psi]
+b\|\Psi\|^2,\quad\Psi\in\dom({\mathfrak{t}}_{\Lambda,\diamond}^{\V{G},\V{A},V_+}).
\end{align}
Again we conclude that the form defined by 
$\dom({\mathfrak{t}}_{\Lambda,\diamond}^{\V{G},\V{A},V})
:=\dom({\mathfrak{t}}_{\Lambda,\diamond}^{\V{G},\V{A},V_+})$ and
\begin{align}\label{harald2}
{\mathfrak{t}}_{\Lambda,\diamond}^{\V{G},\V{A},V}[\Psi]
&:={\mathfrak{t}}_{\Lambda,\diamond}^{\V{G},\V{A},V_+}[\Psi]-\mathfrak{v}_\Lambda^-[\Psi],
\quad\Psi\in\dom({\mathfrak{t}}_{\Lambda,\diamond}^{\V{G},\V{A},V}),
\end{align}
is semi-bounded and closed. Later on, we shall also need the following familiar
consequence of \eqref{harald1} and \eqref{harald2},
\begin{align}\label{harald3}
{\mathfrak{t}}_{\Lambda,\diamond}^{\V{G},\V{A},0}[\Psi]
&\le{\mathfrak{t}}_{\Lambda,\diamond}^{\V{G},\V{A},V_+}[\Psi]
\le\frac{1}{1-a}{\mathfrak{t}}_{\Lambda,\diamond}^{\V{G},\V{A},V}[\Psi]+\frac{b}{1-a}\|\Psi\|^2,
\end{align}
for all $\Psi\in\dom({\mathfrak{t}}_{\Lambda,\diamond}^{\V{G},\V{A},V})$.

We denote the unique self-adjoint operator representing 
${\mathfrak{t}}_{\Lambda,\diamond}^{\V{G},\V{A},V}$ by ${T}_{\Lambda,\diamond}^{\V{G},\V{A},V}$.

For later reference we note some quite elementary observations:

\begin{lemma}\label{lemstST}
Let $\diamond$ be $\mathrm{D}$ or $\mathrm{N}$, $\V{A}\in L^2_\loc(\Lambda,\RR^\nu)$, and assume 
that $V_+,V_-\in L^1_\loc(\Lambda)$, $V_\pm\ge0$, satisfy \eqref{KLMN} for some $a\in[0,1)$ 
and $b>0$. Then the following holds:
\begin{enumerate}[leftmargin=0.67cm]
\item[{\rm(1)}] We have the following inclusions, where $\phi\in\sF$,
$$
\dom(\mathfrak{s}_{\Lambda,\diamond}^{\V{A},V})\otimes\sF
\subset\dom(\mathfrak{t}_{\Lambda,\diamond}^{\V{0},\V{A},V}),\quad\big\{\SPn{\phi}{\Psi}_\sF:\Psi\in
\dom(\mathfrak{t}_{\Lambda,\mathrm{N}}^{\V{0},\V{A},V})\big\}\subset
\dom(\mathfrak{s}_{\Lambda,\mathrm{N}}^{\V{A},V}).
$$
\item[{\rm(2)}] Let $N\in\NN$, $f_1,\ldots,f_N\in\dom(\mathfrak{s}_{\Lambda,\mathrm{N}}^{\V{A},V})$, 
and let $e_1,\ldots,e_N$ mutually orthonormal elements of $\sF$. Then
\begin{align}\label{clarissa1}
\Big\|\sum_{\ell=1}^n
f_\ell e_\ell\Big\|_{\mathfrak{t}_{\Lambda,\mathrm{N}}^{\V{0},\V{A},V}}^2
=\sum_{\ell=1}^n\|f_\ell\|_{\mathfrak{s}_{\Lambda,\mathrm{N}}^{\V{A},V}}^2.
\end{align}
\item[{\rm(3)}] $\dom(\mathfrak{s}_{\Lambda,\diamond}^{\V{A},V})\otimes\sF$ is a core of
$\mathfrak{t}_{\Lambda,\diamond}^{\V{0},\V{A},V}$.
\item[{\rm(4)}] $\dom({S}_{\Lambda,\diamond}^{\V{A},V})\otimes\sF\subset
\dom(T_{\Lambda,\diamond}^{\V{0},\V{A},V})$ with
$$
T_{\Lambda,\diamond}^{\V{0},\V{A},V}(f\phi)=({S}_{\Lambda,\diamond}^{\V{A},V}f)\phi,\quad
f\in\dom({S}_{\Lambda,\diamond}^{\V{A},V}),\,\phi\in\sF.
$$
\end{enumerate}
\end{lemma}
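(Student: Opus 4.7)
My plan is to exploit the fact that when $\V{G}=\V{0}$ the quadratic form $\mathfrak{t}_{\Lambda,\diamond}^{\V{0},\V{A},V}$ factorizes on product functions. Concretely, for $f\in\dom({\sf w}_j^*)$ and $\eta\in\sF$, Rem.~\ref{remwCCw}(1) gives $f\eta\in\dom(w_j^*)=\dom(v_j^*)$ with $w_j^*(f\eta)=({\sf w}_j^*f)\eta$, and pointwise $\|f\eta\|_\sF=|f|\|\eta\|_\sF$. Combining these yields the fundamental identity $\mathfrak{t}_{\Lambda,\mathrm{N}}^{\V{0},\V{A},V}[f\eta]=\|\eta\|_\sF^2\,\mathfrak{s}_{\Lambda,\mathrm{N}}^{\V{A},V}[f]$, which, together with pointwise Parseval applied to an orthonormal family $\{e_\ell\}$, immediately upgrades to $\mathfrak{t}[\sum_\ell f_\ell e_\ell]=\sum_\ell \mathfrak{s}[f_\ell]$ for finite sums. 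This single identity drives everything.

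For part (1), the Neumann inclusion $\dom(\mathfrak{s}_{\Lambda,\mathrm{N}}^{\V{A},V})\otimes\sF\subset\dom(\mathfrak{t}_{\Lambda,\mathrm{N}}^{\V{0},\V{A},V})$ follows at once from Rem.~\ref{remwCCw}(1) applied in each coordinate together with $\|f\phi\|_\sF^2=|f|^2\|\phi\|^2$ for the potential terms, and the converse inclusion uses the dual half of Rem.~\ref{remwCCw}(1) plus Cauchy--Schwarz. For the Dirichlet inclusion, given $f\in\dom(\mathfrak{s}_{\Lambda,\mathrm{D}}^{\V{A},V_+})$ I take $f_n\in\sD(\Lambda)$ approximating $f$ in the $\mathfrak{s}^{V_+}$-form norm and $\phi_m\in\fdom(\Id\Gamma(1))$ approximating $\phi\in\sF$ in $\sF$; the factorization gives $\|f_n\phi_m-f\phi\|_{\mathfrak{t}^{V_+}}^2\le\|\phi_m\|_\sF^2\|f_n-f\|_{\mathfrak{s}^{V_+}}^2+\|f\|_{\mathfrak{s}^{V_+}}^2\|\phi_m-\phi\|_\sF^2$, so $f_n\phi_m\in\sD(\Lambda,\fdom(\Id\Gamma(1)))$ converges to $f\phi$ in the minimal form norm. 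Part (2) is then a direct computation from the factorization; the only subtle point is that the canonical form norms on both sides use the respective greatest lower bounds, and these agree, because the factorization yields $c_\mathfrak{t}\le c_\mathfrak{s}$ by inserting $\Psi=f\eta$ with $\|\eta\|=1$, while the Neumann decomposition $\Psi=\sum_\ell (\SPn{e_\ell}{\Psi}_\sF)\,e_\ell$ combined with pointwise Parseval applied to $\|w_j^*\Psi\|^2$ gives $\mathfrak{t}_\mathrm{N}[\Psi]=\sum_\ell\mathfrak{s}_\mathrm{N}[f_\ell]\ge c_\mathfrak{s}\|\Psi\|^2$, and the Dirichlet bounds are controlled by the Neumann ones.

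For (3) in the Neumann case I expand $\Psi\in\dom(\mathfrak{t}_{\Lambda,\mathrm{N}}^{\V{0},\V{A},V})$ in an orthonormal basis of $\sF$, use the second half of (1) to get that each Fourier coefficient $f_\ell$ lies in $\dom(\mathfrak{s}_{\Lambda,\mathrm{N}}^{\V{A},V})$, and then apply (2) to the truncation $\Psi_N:=\sum_{\ell=1}^N f_\ell e_\ell$, so that $\|\Psi-\Psi_N\|_\mathfrak{t}^2=\sum_{\ell>N}\|f_\ell\|_\mathfrak{s}^2\to 0$. In the Dirichlet case the set $\sD(\Lambda,\fdom(\Id\Gamma(1)))$ is already a core by definition of $\mathfrak{t}_{\Lambda,\mathrm{D}}^{\V{0},\V{A},V_+}$ (and hence of $\mathfrak{t}_{\Lambda,\mathrm{D}}^{\V{0},\V{A},V}$, since the form norms are KLMN-equivalent), and it is contained in $\sD(\Lambda)\otimes\sF\subset\dom(\mathfrak{s}_{\Lambda,\mathrm{D}}^{\V{A},V})\otimes\sF\subset\dom(\mathfrak{t}_{\Lambda,\mathrm{D}}^{\V{0},\V{A},V})$ by (1), so the larger intermediate space is a core as well.

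Finally, for (4) I invoke the form-characterization of $T_{\Lambda,\diamond}^{\V{0},\V{A},V}$: it suffices to verify $\mathfrak{t}_{\Lambda,\diamond}^{\V{0},\V{A},V}[\Phi,f\phi]=\SPn{\Phi}{(S_{\Lambda,\diamond}^{\V{A},V}f)\phi}$ for all $\Phi$ in the core produced by (3). By linearity and continuity in the form norm it is enough to take $\Phi=g\eta$ with $g\in\dom(\mathfrak{s}_{\Lambda,\diamond}^{\V{A},V})$ and $\eta\in\sF$, and then polarization of the factorization yields $\mathfrak{t}[g\eta,f\phi]=\SPn{\eta}{\phi}_\sF\,\mathfrak{s}[g,f]=\SPn{\eta}{\phi}_\sF\SPn{g}{S_{\Lambda,\diamond}^{\V{A},V}f}=\SPn{g\eta}{(S_{\Lambda,\diamond}^{\V{A},V}f)\phi}$. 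The main technical obstacle I anticipate is keeping the various closures, form norms, and KLMN perturbations consistent between the $V_+$-only forms and the full-$V$ forms, especially in matching the constants $c_\mathfrak{t}$ and $c_\mathfrak{s}$ that enter (2); this is handled uniformly by the standard fact that KLMN perturbation preserves domains and yields equivalent form norms, together with the factorization identity established at the outset.
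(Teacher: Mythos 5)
Your proof is correct and follows essentially the same route as the paper: each part rests on the factorization $\mathfrak{t}_{\Lambda,\mathrm{N}}^{\V{0},\V{A},V}[f\phi,g\psi]=\mathfrak{s}_{\Lambda,\mathrm{N}}^{\V{A},V}[f,g]\SPn{\phi}{\psi}_\sF$ derived from Rem.~\ref{remwCCw}(1), Gram--Schmidt plus (2) for the Dirichlet inclusion in (1), a rank-$n$ projection argument for (3) in the Neumann case, the definition of the minimal form for (3) in the Dirichlet case, and the first representation theorem for (4). The only point where you go beyond the paper is in making explicit that the form norms in \eqref{clarissa1} use the same greatest lower bound $c_{\mathfrak t}=c_{\mathfrak s}$ (which the paper treats as implicit); your two-sided argument for this, inserting $f\eta$ for one inequality and using the orthonormal decomposition plus Parseval for the other, is sound and a reasonable supplement.
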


\begin{proof}
The second inclusion in (1) and the first one for $\diamond=\mathrm{N}$ follow from 
Rem.~\ref{remwCCw}.

For all $f,g\in\dom(\mathfrak{s}_{\Lambda,\mathrm{N}}^{\V{A},V})$ and $\phi,\psi\in\sF$, 
we further infer from Rem.~\ref{remwCCw} that
\begin{align}\label{clarissa0}
\mathfrak{s}_{\Lambda,\mathrm{N}}^{\V{A},V}[f,g]\SPn{\phi}{\psi}_{\sF}
&=\mathfrak{t}_{\Lambda,\mathrm{N}}^{\V{0},\V{A},V}[f\phi,g\psi],
\end{align}
which entails the formula in Part~(2).

Now, we can prove the first inclusion in Part~(1) in the case $\diamond=\mathrm{D}$. In fact,
every $\Psi\in\dom(\mathfrak{s}_{\Lambda,\mathrm{D}}^{\V{A},V_+})\otimes\sF$ has the form
$\sum_{\ell=1}^N f_\ell e_\ell$ with $f_\ell\in\dom(\mathfrak{s}_{\Lambda,\mathrm{D}}^{\V{A},V_+})$
and a suitable orthonormal basis $\{e_\ell:\ell\in\NN\}$ of $\sF$. We then find sequences
$\{f_{\ell,n}\}_{n\in\NN}$ in $\sD(\Lambda)$ such that $f_{\ell,n}\to f_\ell$, $n\to\infty$, 
with respect to the form norm
of $\mathfrak{s}_{\Lambda,\mathrm{D}}^{\V{A},V_+}$, for all $\ell\in\{1,\ldots,N\}$.
Set $\Psi_n:=\sum_{\ell=1}^Nf_{\ell,n}e_\ell$. In view of \eqref{clarissa1} we then see that
$\{\Psi_n\}_{n\in\NN}$ is a Cauchy sequence with respect to the form norm of 
$\mathfrak{t}_{\Lambda,\mathrm{D}}^{\V{0},\V{A},V_+}$, thus
$\Psi\in\dom(\mathfrak{t}_{\Lambda,\mathrm{D}}^{\V{0},\V{A},V_+})$.

In the case $\diamond=\mathrm{D}$, the assertion (3) holds by definition of the 
minimal forms and the first inclusion in Part~(1). 
To prove (3) for $\diamond=\mathrm{N}$, let $\{e_\ell:\ell\in\NN\}$ be an
orthonormal basis of $\sF$, $p_n:=\sum_{\ell=1}^n|e_\ell\rangle\langle e_\ell|$, $n\in\NN$,
and $\Psi\in\dom(w_j^*)$. Then, on the one hand, the second inclusion in Part~(1) implies 
$p_n\Psi\in\dom(\mathfrak{s}_{\Lambda,\mathrm{N}}^{\V{A},V_+})\otimes\sF$. 
On the other hand, the obvious relations $p_nw_j\Phi=w_jp_n\Phi$, $\Phi\in \sD(\Lambda,\sF)$, 
show that $p_n\Psi\in\dom(w_j^*)$ and 
$w_j^*p_n\Psi=p_nw_j^*\Psi\to w_j^*\Psi$ in $L^2(\Lambda,\sF)$. Since also
$V_+^\eh p_n\Psi\to V_+^\eh\Psi$ in $L^2(\Lambda,\sF)$, this concludes the proof of Part~(3).

Finally, for all $f\in\dom({S}_{\Lambda,\diamond}^{\V{A},V})$, 
$g\in\dom(\mathfrak{s}_{\Lambda,\diamond}^{\V{A},V})$ 
and $\phi,\psi\in\sF$, the formula \eqref{clarissa0} yields
\begin{align*}
\SPn{({S}_{\Lambda,\diamond}^{\V{A},V} f)\phi}{g\psi}
&=\SPn{{S}_{\Lambda,\diamond}^{\V{A},V} f}{g}\SPn{\phi}{\psi}_{\sF}
\\
&=\mathfrak{s}_{\Lambda,\diamond}^{\V{A},V}[f,g]\SPn{\phi}{\psi}_{\sF}
=\mathfrak{t}_{\Lambda,\mathrm{N}}^{\V{0},\V{A},V}[f\phi,g\psi]
=\mathfrak{t}_{\Lambda,\diamond}^{\V{0},\V{A},V}[f\phi,g\psi].
\end{align*}
In the last step we also took Part~(1) into account.
Since $\dom(\mathfrak{s}_{\Lambda,\diamond}^{\V{A},V})\otimes\sF$ is a core for
$\mathfrak{t}_{\Lambda,\diamond}^{\V{0},\V{A},V}$, this computation implies the assertion;
see \cite[Thm.~VI.2.1(iii)]{Kato}.
\end{proof}

Next, we add the radiation field energy to our forms and Hamiltonians. The corresponding form
is given by 
$$
\mathfrak{f}_\Lambda[\Psi]:=\int_\Lambda\|\Id\Gamma(\omega)^\eh\Psi(\V{x})\|_{\sF}^2\Id\V{x},
\quad\Psi\in\dom(\mathfrak{f}_\Lambda):=L^2(\Lambda,\fdom(\Id\Gamma(\omega))).
$$
It is closed and obviously non-negative. The closed form defined by
\begin{align*}
{\mathfrak{q}}_{\Lambda,\mathrm{N}}^{\V{G},\V{A},V}[\Psi]
&:=\mathfrak{t}_{\Lambda,\mathrm{N}}^{\V{G},\V{A},V}[\Psi]+\mathfrak{f}_\Lambda[\Psi],\quad
\Psi\in\dom({\mathfrak{q}}_{\Lambda,\mathrm{N}}^{\V{G},\V{A},V})
:=\dom(\mathfrak{t}_{\Lambda,\mathrm{N}}^{\V{G},\V{A},V_+})\cap\dom(\mathfrak{f}_\Lambda),
\end{align*}
will be called the {\em maximal Pauli-Fierz form} and
\begin{align}\label{defqD}
{\mathfrak{q}}_{\Lambda,\mathrm{D}}^{\V{G},\V{A},V}
&:=\ol{{\mathfrak{q}}_{\Lambda,\mathrm{N}}^{\V{G},\V{A},V}\!\!\upharpoonright_{
\sD(\Lambda,\fdom(\Id\Gamma(1\vee\omega)))}},
\end{align}
the {\em minimal Pauli-Fierz form}. For $\diamond\in\{\mathrm{D},\mathrm{N}\}$, 
the corresponding {\em Pauli-Fierz operator} $H_{\Lambda,\diamond}^{\V{G},\V{A},V}$ is defined as
the unique self-adjoint operator representing ${\mathfrak{q}}_{\Lambda,\diamond}^{\V{G},\V{A},V}$.

\begin{remark}
In analogy to the aforementioned result of \cite{SimonJOT1979}, a series of 
approximation arguments reveals that
${\mathfrak{t}}_{\RR^\nu,\mathrm{D}}^{\V{G},\V{A},V_+}=
{\mathfrak{t}}_{\RR^\nu,\mathrm{N}}^{\V{G},\V{A},V_+}$ and
${\mathfrak{q}}_{\RR^\nu,\mathrm{D}}^{\V{G},\V{A},V_+}=
{\mathfrak{q}}_{\RR^\nu,\mathrm{N}}^{\V{G},\V{A},V_+}$
under the mere assumption that $\V{A}\in L^2_\loc(\RR^\nu,\RR^\nu)$, 
$\V{G}\in L^2_\loc(\RR^\nu,\fdom(\omega^{-1})^\nu)$, and $V_+\ge0$ is locally integrable.
As we do not use this result we refrain from presenting its space-consuming proof.
See, however, Cor.~\ref{corminmax} for a special case.
\end{remark}

As a direct consequence of the definitions,
$\dom({\mathfrak{q}}_{\Lambda,\mathrm{N}}^{\V{G},\V{A},V})
=\dom({\mathfrak{q}}_{\Lambda,\mathrm{N}}^{\V{G},\V{A},V_+})$ and
\begin{align}\label{zerlqN}
{\mathfrak{q}}_{\Lambda,\mathrm{N}}^{\V{G},\V{A},V}
&=\mathfrak{t}_{\Lambda,\mathrm{N}}^{\V{G},\V{A},V_+}-\mathfrak{v}_\Lambda^-
+\mathfrak{f}_\Lambda={\mathfrak{q}}_{\Lambda,\mathrm{N}}^{\V{G},\V{A},V_+}
-\mathfrak{v}_\Lambda^-.
\end{align}
In view of \eqref{harald3} we further observe that the form norms associated with
${\mathfrak{q}}_{\Lambda,\mathrm{N}}^{\V{G},\V{A},V}$ and 
${\mathfrak{q}}_{\Lambda,\mathrm{N}}^{\V{G},\V{A},V_+}$ are equivalent.
Furthermore, ${\mathfrak{t}}_{\Lambda,\mathrm{D}}^{\V{G},\V{A},V}+\mathfrak{f}_\Lambda$ 
is closed as a sum of two semi-bounded 
closed forms and its domain contains $\sD(\Lambda,\fdom(\Id\Gamma(1\vee\omega)))$, whence
\begin{align}\label{egon99}
{\mathfrak{q}}_{\Lambda,\mathrm{D}}^{\V{G},\V{A},V}
\subset{\mathfrak{t}}_{\Lambda,\mathrm{D}}^{\V{G},\V{A},V}+\mathfrak{f}_\Lambda.
\end{align}

\begin{lemma}\label{lem-egon}
Let $\V{A}\in L^2_\loc(\Lambda,\RR^\nu)$, 
$\V{G}\in L^\infty(\Lambda,\fdom(\omega^{-1}+\omega)^\nu)$,
and let $V_\pm\in L^1_\loc(\RR^\nu)$, $V_\pm\ge0$,
satisfy \eqref{KLMN} with $\diamond=\mathrm{D}$ for some $a\in[0,1)$ and $b<\infty$.  
Then ${\mathfrak{q}}_{\Lambda,\mathrm{D}}^{\V{G},\V{A},V}
={\mathfrak{t}}_{\Lambda,\mathrm{D}}^{\V{G},\V{A},V}+\mathfrak{f}_\Lambda
={\mathfrak{q}}_{\Lambda,\mathrm{D}}^{\V{G},\V{A},V_+}-\mathfrak{v}_\Lambda^-$.
\end{lemma}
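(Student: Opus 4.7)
The plan is to reduce both asserted equalities to the single claim $\mathfrak{q}_{\Lambda,\mathrm{D}}^{\V{G},\V{A},V_+}=\mathfrak{t}_{\Lambda,\mathrm{D}}^{\V{G},\V{A},V_+}+\mathfrak{f}_\Lambda$ for the non-negative potential $V_+$ alone, and then to recover the version involving $V$ via a KLMN-style absorption of $-\mathfrak{v}_\Lambda^-$. Indeed, once this key claim is established, \eqref{harald1} yields $\mathfrak{v}_\Lambda^-[\Psi]\le a\,\mathfrak{t}_{\Lambda,\mathrm{D}}^{\V{G},\V{A},V_+}[\Psi]+b\|\Psi\|^2\le a\,\mathfrak{q}_{\Lambda,\mathrm{D}}^{\V{G},\V{A},V_+}[\Psi]+b\|\Psi\|^2$ with $a<1$, so $\mathfrak{v}_\Lambda^-$ is form-small relative to $\mathfrak{q}_{\Lambda,\mathrm{D}}^{\V{G},\V{A},V_+}$ and any form-core of the latter remains a form-core of $\mathfrak{q}_{\Lambda,\mathrm{D}}^{\V{G},\V{A},V_+}-\mathfrak{v}_\Lambda^-$. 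Combined with \eqref{zerlqN}, \eqref{defqD} and \eqref{harald2}, this identifies $\mathfrak{q}_{\Lambda,\mathrm{D}}^{\V{G},\V{A},V}$ first with $\mathfrak{q}_{\Lambda,\mathrm{D}}^{\V{G},\V{A},V_+}-\mathfrak{v}_\Lambda^-$ and then with $\mathfrak{t}_{\Lambda,\mathrm{D}}^{\V{G},\V{A},V}+\mathfrak{f}_\Lambda$.

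One inclusion in the key claim is already recorded in \eqref{egon99}, so the real work is to show that every $\Psi\in\dom(\mathfrak{t}_{\Lambda,\mathrm{D}}^{\V{G},\V{A},V_+})\cap\dom(\mathfrak{f}_\Lambda)$ is a limit in the combined form norm of elements of $\sD(\Lambda,\fdom(\Id\Gamma(1\vee\omega)))$. I would start by invoking the definition of $\mathfrak{t}_{\Lambda,\mathrm{D}}^{\V{G},\V{A},V_+}$ to choose $\Psi_n\in\sD(\Lambda,\fdom(\Id\Gamma(1)))$ with $\Psi_n\to\Psi$ in the $\mathfrak{t}_{\Lambda,\mathrm{D}}^{\V{G},\V{A},V_+}$-form norm; these need not lie in $\dom(\mathfrak{f}_\Lambda)$ when $\omega$ is unbounded, which is why a further regularization in the boson-energy direction is necessary. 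Setting $\theta_\ve:=1+\ve\Id\Gamma(\omega)$ and $\Psi_{n,\ve}:=\theta_\ve^\mh\Psi_n$, one notes that $\theta_\ve^\mh$ commutes with $\Id\Gamma(1)$ by spectral calculus on commuting operators and that $\Id\Gamma(\omega)^\eh\theta_\ve^\mh$ is bounded on $\sF$ with norm $\le\ve^{-\eh}$, so $\Psi_{n,\ve}\in\sD(\Lambda,\fdom(\Id\Gamma(1\vee\omega)))$ as required.

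The technical heart of the argument is the diagonal convergence $\Psi_{n_k,\ve_k}\to\Psi$ in the full $(\mathfrak{t}_{\Lambda,\mathrm{D}}^{\V{G},\V{A},V_+}+\mathfrak{f}_\Lambda)$-form norm. For fixed $\ve$ and $n\to\infty$, a direct computation using the commutation formula of Lem.~\ref{lem-comm}(1) gives $v_j\Psi_{n,\ve}=\theta_\ve^\mh v_j\Psi_n+C_\ve(G_j)\Psi_n$, and since $\theta_\ve^\mh$, $C_\ve(G_j)$, $V_+^\eh\theta_\ve^\mh$ and $\Id\Gamma(\omega)^\eh\theta_\ve^\mh$ all act boundedly on $L^2(\Lambda,\sF)$ (using $\V{G}\in L^\infty(\Lambda,\fdom(\omega^{-1}+\omega)^\nu)$), convergence $\Psi_n\to\Psi$ in the $\mathfrak{t}_{\Lambda,\mathrm{D}}^{\V{G},\V{A},V_+}$-form norm propagates to $\Psi_{n,\ve}\to\theta_\ve^\mh\Psi$ in the full $(\mathfrak{t}+\mathfrak{f})$-form norm. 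For the subsequent limit $\ve\downarrow0$, the $\mathfrak{f}_\Lambda$ and $V_\pm$ contributions follow from strong convergence $\theta_\ve^\mh\to\id$ on $\sF$ combined with dominated convergence applied to the $L^2(\Lambda)$-functions $\|\Id\Gamma(\omega)^\eh\Psi(\cdot)\|_\sF$ and $V_\pm^\eh\|\Psi(\cdot)\|_\sF$, while the $v_j^*$ part is precisely the graph-norm convergence furnished by Lem.~\ref{lem-vstar}, whose hypothesis is automatic from $\V{G}\in L^\infty(\Lambda,\fdom(\omega^{-1}+\omega)^\nu)$. The main obstacle throughout is that $\theta_\ve^\mh$ does not commute with $\vp(G_j)$; this is precisely what Lem.~\ref{lem-comm} is designed to control, since the quantitative estimate $\|C_\ve(f)\|\le(4/\pi)\ve^\eh\|\omega^\eh f\|$ combined with $\sup_{\V{x}\in\Lambda}\|\omega^\eh G_{j,\V{x}}\|_\HP<\infty$ forces the commutator error to vanish as $\ve\downarrow0$.
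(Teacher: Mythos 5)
Your proposal is correct and follows essentially the same route as the paper's own argument: regularize with $\theta_\ve^{-1/2}$, control the commutator with $\vp(G_j)$ via Lem.~\ref{lem-comm}(1)/Lem.~\ref{lem-vstar}, pass first $n\to\infty$ at fixed $\ve$ and then $\ve\downarrow0$, using dominated convergence for the $\mathfrak{f}_\Lambda$ and $V_\pm$ contributions. The only organizational difference is that you first prove the identity for $V_+$ and then absorb $-\mathfrak{v}_\Lambda^-$ by a KLMN form-core argument, whereas the paper runs the whole $\ve$-regularization directly with $V$ present (exploiting that $\theta_\ve^{-1/2}$ commutes with multiplication operators), but these are interchangeable.
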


\begin{proof}
We drop the superscripts $\V{A}$, $\V{G}$, and the subscript $\Lambda$ in this
proof as all these quantities are kept fixed.

To prove the inclusion converse to \eqref{egon99}, let 
$\Psi\in\dom({\mathfrak{t}}_{\mathrm{D}}^V)\cap\dom(\mathfrak{f})$.
Then there exist $\Psi_n\in\sD(\Lambda,\fdom(\Id\Gamma(1)))$, $n\in\NN$, such that
$\Psi_n\to\Psi$, $V_+^\eh\Psi_n\to V_+^\eh\Psi$, and $v_j\Psi_n\to v_j^*\Psi$, 
$j\in\{1,\ldots,\nu\}$, in $L^2(\Lambda,\sF)$. Set $\theta_\ve:=1+\ve\Id\Gamma(\omega)$,
$\Psi_\ve:=\theta_\ve^\mh\Psi$, and 
$\Psi_{n,\ve}:=\theta_\ve^\mh\Psi_n\in\sD(\Lambda,\fdom(\Id\Gamma(1\vee\omega)))$, 
$n\in\NN$, $\ve>0$.
Since, according to Lem.~\ref{lem-vstar}, $\theta_\ve^\mh$ maps the graph of every $v_j^*$
continuously into itself, and since $\theta_\ve^\mh$ commutes with $V_+^\eh$, it follows for
every $\ve>0$ that $\Psi_{n,\ve}\to\Psi_\ve$, $V_+^\eh\Psi_{n,\ve}\to V_+^\eh\Psi_\ve$, and 
$v_j\Psi_{n,\ve}\to v_j^*\Psi_\ve$, i.e.,
$\|\Psi_{n,\ve}-\Psi_\ve\|_{\mathfrak{t}_{\mathrm{N}}^{V_+}}\to0$, as $n\to\infty$.
Since $\Id\Gamma(\omega)^\eh\theta_\ve^\mh$ is bounded,
we also have $\|\Psi_{n,\ve}-\Psi_\ve\|_{\mathfrak{f}}\to0$. Altogether this
shows that $\Psi_\ve\in\dom({\mathfrak{q}}_{\mathrm{D}}^V)$, $\ve>0$.
Lem.~\ref{lem-vstar} also implies, however, that $v_j^*\Psi_\ve\to v_j^*\Psi$, as
$\ve\downarrow0$. Since $\Psi\in\fdom(V_+\id_{\sF})\cap\dom(\mathfrak{f})$, the
dominated convergence theorem further shows that $V_+^\eh\Psi_\ve\to V_+^\eh\Psi$
in $L^2(\Lambda,\sF)$ and $\mathfrak{f}[\Psi_\ve-\Psi]\to0$.
Hence, $\|\Psi_{\ve}-\Psi\|_{\mathfrak{q}^{V_+}_{\mathrm{N}}}\to0$,
$\ve\downarrow0$, thus $\Psi\in\dom({\mathfrak{q}}^V_{\mathrm{D}})$.
\end{proof}

\begin{lemma}\label{lem-ida}
Let $\V{A}\in L_\loc^2(\Lambda,\RR^\nu)$, 
$\V{G}\in L^\infty(\Lambda,\fdom(\omega^{-1})^\nu)$,
$V_+\in L^1_\loc(\Lambda)$, $V_+\ge0$, and $\diamond\in\{\mathrm{D},\mathrm{N}\}$. 
Then the domain of the form
${\mathfrak{q}}_{\Lambda,\diamond}^{\V{G},\V{A},V_+}$ is equal to the domain of
${\mathfrak{q}}_{\Lambda,\diamond}^{\V{0},\V{A},V_+}$
and the form norm $\|\cdot\|_{{\mathfrak{q}}_{\Lambda,\diamond}^{\V{G},\V{A},V_+}}$ is 
equivalent to $\|\cdot\|_{{\mathfrak{q}}_{\Lambda,\diamond}^{\V{0},\V{A},V_+}}$.
\end{lemma}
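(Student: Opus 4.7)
The plan is to reduce the equivalence to a bounded perturbation of the ``free'' Fock space-valued covariant derivative $w_j^{\ast}$ by the multiplication operator $\vp(G_j)$. The key observation is that, by \eqref{rbvp} applied pointwise in $\V{x}\in\Lambda$ and the assumption $\V{G}\in L^\infty(\Lambda,\fdom(\omega^{-1})^\nu)$, there exists a finite constant $\const>0$ such that
\begin{align*}
\|\vp(G_j)\Psi\|_{L^2(\Lambda,\sF)}^2
\le\const\big(\mathfrak{f}_\Lambda[\Psi]+\|\Psi\|^2\big),\quad j\in\{1,\dots,\nu\},
\end{align*}
for every $\Psi\in L^2(\Lambda,\fdom(\Id\Gamma(\omega)))$. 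In particular, $\vp(G_j)\Psi\in L^2(\Lambda,\sF)$ and the hypothesis of Rem.~\ref{remvj}(1) is satisfied fiberwise.

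Applying Rem.~\ref{remvj}(1) then shows that, for $\Psi\in L^2(\Lambda,\fdom(\Id\Gamma(\omega)))$, the equivalence $\Psi\in\dom(v_j^{\ast})\Leftrightarrow\Psi\in\dom(w_j^{\ast})$ holds with $v_j^{\ast}\Psi=w_j^{\ast}\Psi-\vp(G_j)\Psi$ in the affirmative case. Combining this fact with the identity $\dom(\mathfrak{q}_{\Lambda,\mathrm{N}}^{\V{G},\V{A},V_+})=\dom(\mathfrak{v}_\Lambda^+)\cap\bigcap_{j=1}^{\nu}\dom(v_j^{\ast})\cap\dom(\mathfrak{f}_\Lambda)$ and the analogous identity with $\V{G}$ replaced by $\V{0}$, I get $\dom(\mathfrak{q}_{\Lambda,\mathrm{N}}^{\V{G},\V{A},V_+})=\dom(\mathfrak{q}_{\Lambda,\mathrm{N}}^{\V{0},\V{A},V_+})$.

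For the form norm equivalence in the Neumann case, I estimate
\begin{align*}
\tfrac{1}{2}\|v_j^{\ast}\Psi\|^2
\le\|w_j^{\ast}\Psi\|^2+\|\vp(G_j)\Psi\|^2
\le\|w_j^{\ast}\Psi\|^2+\const\big(\mathfrak{f}_\Lambda[\Psi]+\|\Psi\|^2\big),
\end{align*}
together with the symmetric bound obtained from $w_j^{\ast}\Psi=v_j^{\ast}\Psi+\vp(G_j)\Psi$. Summing over $j$ and adding $\mathfrak{v}_\Lambda^+[\Psi]+\mathfrak{f}_\Lambda[\Psi]$ on both sides yields two-sided estimates $c_1\|\Psi\|_{\mathfrak{q}_{\Lambda,\mathrm{N}}^{\V{0},\V{A},V_+}}^2\le\|\Psi\|_{\mathfrak{q}_{\Lambda,\mathrm{N}}^{\V{G},\V{A},V_+}}^2\le c_2\|\Psi\|_{\mathfrak{q}_{\Lambda,\mathrm{N}}^{\V{0},\V{A},V_+}}^2$ on the common domain, which settles the case $\diamond=\mathrm{N}$.

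For the Dirichlet case, note that $\sD(\Lambda,\fdom(\Id\Gamma(1\vee\omega)))$ is contained in both Neumann domains and carries two equivalent norms by the previous paragraph. Since $\mathfrak{q}_{\Lambda,\mathrm{D}}^{\V{G},\V{A},V_+}$ and $\mathfrak{q}_{\Lambda,\mathrm{D}}^{\V{0},\V{A},V_+}$ are by definition \eqref{defqD} the closures of these restrictions in $L^2(\Lambda,\sF)$, a sequence in $\sD(\Lambda,\fdom(\Id\Gamma(1\vee\omega)))$ is Cauchy for one form norm if and only if it is Cauchy for the other; consequently the closures have identical domains and equivalent form norms. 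The only mildly delicate point, which constitutes the main obstacle, is to verify that the bound on $\vp(G_j)$ used in the very first step really holds on all of $L^2(\Lambda,\fdom(\Id\Gamma(\omega)))$, since \eqref{rbvp} as stated requires $f\in\fdom(\omega^{-1})$ in each fiber; this is ensured precisely by the $L^\infty$-hypothesis on $\V{G}$ and the measurability statement following \eqref{contvp}.
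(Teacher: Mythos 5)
Your proof is correct and takes essentially the same route as the paper: identify $\dom(v_j^*)\cap\dom(\mathfrak{f}_\Lambda)=\dom(w_j^*)\cap\dom(\mathfrak{f}_\Lambda)$ via Rem.~\ref{remvj}(1), control $\vp(G_j)$ by $\mathfrak{f}_\Lambda$ using \eqref{rbvp} and the $L^\infty$-hypothesis, deduce two-sided form-norm bounds in the Neumann case, and transfer to the Dirichlet case through the closure definition \eqref{defqD}. The only deviation is cosmetic: you estimate $\|v_j^*\Psi\|^2$ directly by the parallelogram-type inequality $\|a-b\|^2\le 2\|a\|^2+2\|b\|^2$, whereas the paper expands $\mathfrak{t}_\mathrm{N}^{\V{G}}=\mathfrak{t}_\mathrm{N}^{\V{0}}+\mathfrak{b}^{\V{G}}+\mathfrak{c}^{\V{G}}$ and bounds the cross term $\mathfrak{c}^{\V{G}}$ by Young's inequality; both hinge on the same Cauchy--Schwarz estimate.
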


\begin{proof}
We drop all sub/superscripts $\Lambda$, $\V{A}$, or $V_+$ in this proof. 

{\em Step~1.} First, we consider the domains of the maximal forms.
For every $j\in\{1,\ldots,\nu\}$, Rem.~\ref{remvj}(1) implies that
$\dom(\mathfrak{f})\cap\dom(v_j^*)=\dom(\mathfrak{f})\cap\dom(w_j^*)$ and that
any vector $\Psi$ in the latter domain satisfies $v_j^*\Psi=w_j^*\Psi-\vp({G_j})\Psi$. In particular,
$\dom(\mathfrak{q}_{\mathrm{N}}^{\V{0}})=\dom(\mathfrak{q}_{\mathrm{N}}^{\V{G}})$.
 
{\em Step~2.} 
Next, we prove the asserted equivalence of norms for the maximal forms. By Step~1,
the identity $\mathfrak{t}_{\mathrm{N}}^{\V{G}}=\mathfrak{t}_{\mathrm{N}}^{\V{0}}
+\mathfrak{b}^{\V{G}}+\mathfrak{c}^{\V{G}}$ holds on
$\dom(\mathfrak{q}_{\mathrm{N}}^{\V{0}})$, where
$$
\mathfrak{b}^{\V{G}}[\Psi]:=\frac{1}{2}\sum_{j=1}^\nu\|\vp({G_j})\Psi\|^2,\quad
\mathfrak{c}^{\V{G}}[\Psi]:=-\sum_{j=1}^\nu\Re\SPn{w_j^*\Psi}{\vp({G_j})\Psi},
\quad\Psi\in\dom(\mathfrak{q}_{\mathrm{N}}^{\V{0}}).
$$
On account of \eqref{rbvp},
$$
|\mathfrak{c}^{\V{G}}|\le\frac{1}{2}\mathfrak{t}_\diamond^{\V{0}}
+2\mathfrak{b}^{\V{G}}\le\frac{1}{2}\mathfrak{t}_{\mathrm{N}}^{\V{0}}+\rho\mathfrak{f}
+\rho\|\cdot\|^2_{L^2(\Lambda,\sF)}
\quad\text{on $\dom(\mathfrak{q}_{\mathrm{N}}^{\V{0}})$},
$$
where $\rho>0$ is chosen such that 
$\rho\ge4\|(\omega^\mh\vee1)\V{G}\|_{L^\infty(\Lambda,\HP^\nu)}^2$. Further assuming
$\rho\ge1/2$, we finally deduce that
\begin{align}\label{ida1a}
\mathfrak{q}_{\mathrm{N}}^{\V{G}}&=\mathfrak{t}_{\mathrm{N}}^{\V{0}}+\mathfrak{f}
+\mathfrak{b}^{\V{G}}+\mathfrak{c}^{\V{G}}\le
\Big(1+\frac{3\rho}{2}\Big)\mathfrak{q}_{\mathrm{N}}^{\V{0}}+
\frac{3\rho}{2}\|\cdot\|^2_{L^2(\Lambda,\sF)}
\quad\text{on $\dom(\mathfrak{q}_{\mathrm{N}}^{\V{0}})$,}
\end{align}
and
\begin{align}\nonumber
\mathfrak{q}_{\mathrm{N}}^{\V{G}}&\ge\frac{1}{2\rho}\mathfrak{t}_{\mathrm{N}}^{\V{G}}+\mathfrak{f}
\ge\frac{1}{2\rho}\big(\mathfrak{t}_{\mathrm{N}}^{\V{0}}+\mathfrak{c}^{\V{G}}\big)
+\mathfrak{f}\ge\frac{1}{4\rho}\mathfrak{t}_{\mathrm{N}}^{\V{0}}+\frac{1}{2}\mathfrak{f}
-\frac{1}{2}\|\cdot\|^2_{L^2(\Lambda,\sF)}
\\\label{ida1b}
&\ge\frac{1}{4\rho}\mathfrak{q}_{\mathrm{N}}^{\V{0}}-\frac{1}{2}\|\cdot\|^2_{L^2(\Lambda,\sF)}
\quad\text{on $\dom(\mathfrak{q}_{\mathrm{N}}^{\V{0}})$.}
\end{align}

{\em Step~3.} According to Step~2, the closure of $\sD(\Lambda,\fdom(\Id\Gamma(1\vee\omega)))$
with respect to $\|\cdot\|_{\mathfrak{q}_{\mathrm{N}}^{\V{G}}}$ is the same as its closure with
respect to $\|\cdot\|_{\mathfrak{q}_{\mathrm{N}}^{\V{0}}}$. By definition of the minimal forms
this means that 
$\dom(\mathfrak{q}_{\mathrm{D}}^{\V{0}})=\dom(\mathfrak{q}_{\mathrm{D}}^{\V{G}})$.
In particular, the latter equal domains are subsets of $\dom(\mathfrak{q}_{\mathrm{N}}^{\V{0}})$,
whence the equivalence of $\|\cdot\|_{\mathfrak{q}_{\mathrm{D}}^{\V{0}}}$ and 
$\|\cdot\|_{\mathfrak{q}_{\mathrm{D}}^{\V{G}}}$ follows from \eqref{ida1a} and \eqref{ida1b}.
\end{proof}

\begin{corollary}\label{corformcore}
Under the assumptions of Lem.~\ref{lem-ida}, let
$\diamond\in\{\mathrm{D},\mathrm{N}\}$ and $V_-\in L^1_\loc(\Lambda)$, $V_-\ge0$,
satisfy \eqref{KLMN} for some $a\in[0,1)$ and $b<\infty$. Let $\sC$ be a core for the form
$\mathfrak{s}_{\Lambda,\diamond}^{\V{A},V_+}$ and $\sD$ a form core for $\Id\Gamma(\omega)$.
Then $\sC\otimes\sD$ is a core for $\mathfrak{q}_{\Lambda,\diamond}^{\V{G},\V{A},V}$.
\end{corollary}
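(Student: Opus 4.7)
The plan is to reduce the assertion, in three stages, to a standard tensor-product core statement and then execute that statement by hand using spectral truncation.

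First, I would strip away the negative part $V_-$. The discussion following \eqref{zerlqN} shows that the form norms $\|\cdot\|_{\mathfrak{q}_{\Lambda,\diamond}^{\V{G},\V{A},V}}$ and $\|\cdot\|_{\mathfrak{q}_{\Lambda,\diamond}^{\V{G},\V{A},V_+}}$ are equivalent on their common form domain; a completely analogous computation (using \eqref{KLMN} and \eqref{KLMN2}) gives that $\|\cdot\|_{\mathfrak{s}_{\Lambda,\diamond}^{\V{A},V}}$ and $\|\cdot\|_{\mathfrak{s}_{\Lambda,\diamond}^{\V{A},V_+}}$ are equivalent, so $\sC$ is simultaneously a core for both Schr\"odinger forms. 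Thus we may assume $V_- = 0$. Next, Lemma~\ref{lem-ida} gives that $\mathfrak{q}_{\Lambda,\diamond}^{\V{G},\V{A},V_+}$ and $\mathfrak{q}_{\Lambda,\diamond}^{\V{0},\V{A},V_+}$ have the same form domain with equivalent form norms, so it suffices to show that $\sC\otimes\sD$ is a core of $\mathfrak{q}_{\Lambda,\diamond}^{\V{0},\V{A},V_+}$.

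For $\diamond=\mathrm{N}$ this form equals $\mathfrak{t}_{\Lambda,\mathrm{N}}^{\V{0},\V{A},V_+}+\mathfrak{f}_\Lambda$ by definition, and for $\diamond=\mathrm{D}$ the same identity follows from Lemma~\ref{lem-egon} applied with $\V{G}=\V{0}$ (which is permissible since $\V{0}\in L^\infty(\Lambda,\fdom(\omega^{-1}+\omega)^\nu)$). Combining Lemma~\ref{lemstST}(4) with the self-adjointness of $T_{\Lambda,\diamond}^{\V{0},\V{A},V_+}$ identifies that operator with $S_{\Lambda,\diamond}^{\V{A},V_+}\otimes\id_\sF$ under the canonical unitary $L^2(\Lambda)\otimes\sF\cong L^2(\Lambda,\sF)$; similarly the operator associated with $\mathfrak{f}_\Lambda$ is $\id\otimes\Id\Gamma(\omega)$. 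Thus $\mathfrak{q}_{\Lambda,\diamond}^{\V{0},\V{A},V_+}$ is the form of $A\otimes\id+\id\otimes B$, with $A:=S_{\Lambda,\diamond}^{\V{A},V_+}\ge0$ and $B:=\Id\Gamma(\omega)\ge0$, and the task reduces to: if $\sC_1$ is a form core of $A$ and $\sC_2$ a form core of $B$, then $\sC_1\otimes\sC_2$ is a form core of the form sum on the tensor product.

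This tensor-product statement is the main work. Given $\Psi$ in the form domain, I would first apply $\Pi_{m,n}:=\chi_{[0,m]}(A)\otimes\chi_{[0,n]}(B)$; the spectral projections commute with $A^\eh\otimes\id$ and $\id\otimes B^\eh$ respectively, so $\Pi_{m,n}\Psi\to\Psi$ in the form norm as $m,n\to\infty$ by dominated convergence on the spectral side. On $\Ran\Pi_{m,n}$ the operators $A\otimes\id$ and $\id\otimes B$ are bounded by $m$ and $n$, so a Hilbert-Schmidt expansion in arbitrary ONBs of $\Ran\chi_{[0,m]}(A)\subset\dom(A)$ and $\Ran\chi_{[0,n]}(B)\subset\dom(B)$, followed by truncation, approximates $\Pi_{m,n}\Psi$ in the form norm by finite algebraic tensors $\sum_\alpha f_\alpha\otimes g_\alpha$ with $f_\alpha\in\dom(A)$, $g_\alpha\in\dom(B)$. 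Finally, on a pure tensor one has $\|f\otimes g\|_{\mathfrak{a}+\mathfrak{b}}^2=\|A^\eh f\|^2\|g\|^2+\|f\|^2\|B^\eh g\|^2+\|f\|^2\|g\|^2$, so choosing $f_k\in\sC_1$ with $f_k\to f$ in $\|\cdot\|_{\mathfrak{a}}$ and $g_\ell\in\sC_2$ with $g_\ell\to g$ in $\|\cdot\|_{\mathfrak{b}}$, one has $f_k\otimes g_\ell\to f\otimes g$ in the form norm by the triangle inequality and boundedness of the factor norms. Extending by linearity completes the argument.

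The main obstacle is precisely the tensor-product step: one cannot diagonalize $\Id\Gamma(\omega)$ by an ONB in general (the spectrum need not be pure point), so we cannot reduce directly to eigenvectors as one would in a simpler setting. The spectral truncation by $\Pi_{m,n}$ is the key device that sidesteps this difficulty, since it replaces both unbounded operators by bounded ones on a closed subspace where arbitrary ONBs suffice for the Hilbert-Schmidt expansion.
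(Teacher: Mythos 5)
Your argument is correct, but the final approximation step is carried out by a genuinely different mechanism than the one the paper uses. Both proofs begin by using Lemma~\ref{lem-ida} (together with the form-norm equivalences around \eqref{zerlqN}--\eqref{harald3}) to reduce the statement to the claim that $\sC\otimes\sD$ is a form core for $\mathfrak{q}_{\Lambda,\diamond}^{\V{0},\V{A},V_+}$. From there the paper proceeds concretely: given $\Psi$ in the form domain, it uses Lem.~\ref{lemstST}(2)\&(3) to approximate $\Psi$ by $\Psi_n\in\sC\otimes\sF$ in the $\mathfrak{t}$-form norm, then applies the $\theta_\ve^{-1/2}$ regularization (the same device used in Lem.~\ref{lem-egon} and Lem.~\ref{lem-vstar}) to bring the field-energy part under control, and finally approximates the Fock factor by $\sD$ for fixed $n,\ve$. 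You instead cleanly isolate the underlying fact as an abstract statement about tensor products of closed nonnegative forms, identify $\mathfrak{q}_{\Lambda,\diamond}^{\V{0},\V{A},V_+}$ (via Lem.~\ref{lemstST}(3)\&(4) and Lem.~\ref{lem-egon} with $\V{G}=\V{0}$) with the form of $S_{\Lambda,\diamond}^{\V{A},V_+}\otimes\id+\id\otimes\Id\Gamma(\omega)$, and prove the abstract core lemma by spectral truncation with $\chi_{[0,m]}(A)\otimes\chi_{[0,n]}(B)$ followed by a Hilbert--Schmidt expansion and factor-wise approximation. Your spectral-truncation step plays the same structural role as the paper's $\theta_\ve^{-1/2}$-regularization, but your formulation is more modular and makes explicit a reusable tensor-product lemma that the paper leaves implicit. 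One cosmetic point: the form-norm equivalence between $\mathfrak{q}_{\Lambda,\diamond}^{\V{G},\V{A},V}$ and $\mathfrak{q}_{\Lambda,\diamond}^{\V{G},\V{A},V_+}$ is stated in the text only for $\diamond=\mathrm{N}$; the Dirichlet case follows since the minimal forms are closures of the restricted Neumann forms with respect to those equivalent norms, but you should say so (or alternatively strip $V_-$ \emph{after} passing to $\V{G}=\V{0}$, so that Lem.~\ref{lem-egon} applies directly).
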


\begin{proof}
Combining Lem.~\ref{lemstST}(1) with \eqref{zerlqN} and Lem.~\ref{lem-egon}
(applied to $\V{G}=\V{0}$) we observe that
$\dom(\mathfrak{s}_{\Lambda,\diamond}^{\V{A},V})\otimes\fdom(\Id\Gamma(\omega))
\subset\dom(\mathfrak{q}_{\Lambda,\diamond}^{\V{0},\V{A},V})$.
On account of Lem.~\ref{lem-ida} this ensures for a start that 
$\sC\otimes\sD\subset\dom(\mathfrak{s}_{\Lambda,\diamond}^{\V{A},V})\otimes\fdom(\Id
\Gamma(\omega))\subset\dom(\mathfrak{q}_{\Lambda,\diamond}^{\V{G},\V{A},V})$.

Again by Lem.~\ref{lem-ida}, it remains to show that $\sC\otimes\sD$ is a core for
$\mathfrak{q}_{\Lambda,\diamond}^{\V{0},\V{A},V_+}$. 
So let $\Psi\in\dom(\mathfrak{q}_{\Lambda,\diamond}^{\V{0},\V{A},V_+})$.
In view of Lem.~\ref{lemstST}(2)\&(3) we then find $\Psi_n\in\sC\otimes\sF$, $n\in\NN$,
such that $\Psi_n\to\Psi$ in $\dom(\mathfrak{t}_{\Lambda,\diamond}^{\V{0},\V{A},V_+})$.
For $\ve>0$, let $\theta_\ve:=1+\ve\Id\Gamma(\omega)$, $\Psi_\ve:=\theta_\ve^\mh$, and
$\Psi_{n,\ve}:=\theta_\ve^\mh\Psi_n$. As in the proof of Lem.~\ref{lem-egon} it then follows
that $\Psi_{n,\ve}\to\Psi_\ve$, $V_+^\eh\Psi_{n,\ve}\to V_+^\eh\Psi_{\ve}$,
$\mathfrak{f}_\Lambda[\Psi_{n,\ve}-\Psi_\ve]\to0$, and
$w_j^*\Psi_{n,\ve}\to w_j^*\Psi_\ve$, $j\in\{1,\ldots,\nu\}$, as $n$ goes to infinity.
Moreover, $V_+^\eh\Psi_{\ve}\to V_+^\eh\Psi_{}$ and
$\mathfrak{f}_\Lambda[\Psi_{\ve}-\Psi]\to0$, as $\ve\downarrow0$, by dominated convergence,
and $w_j^*\Psi_\ve\to w_j^*\Psi$ by Lem.~\ref{lem-vstar} in the trivial case $\V{G}=\V{0}$.
For fixed $n\in\NN$ and $\ve>0$, it follows, however, from Rem.~\ref{remwCCw} that
every $\Psi_{n,\ve}\in\sC\otimes\fdom(\Id\Gamma(\omega))$ can be approximated
by elements in $\sC\otimes\sD$ with respect to the form norm of 
$\mathfrak{q}_{\Lambda,\mathrm{N}}^{\V{0},\V{A},V_+}$.
\end{proof}

Before stating the next corollay we recall that 
$S_{\RR^\nu,\mathrm{D}}^{\V{0},V_+}=S_{\RR^\nu,\mathrm{N}}^{\V{0},V_+}
=:S_{\RR^\nu}^{\V{0},V_+}$.

\begin{corollary}\label{corminmax}
Consider the special case $\Lambda=\RR^\nu$ with $\V{A}\in L^2_\loc(\RR^\nu,\RR^\nu)$ and
$\V{G}\in L^\infty(\RR^\nu,\fdom(\omega^{-1})^\nu)$. Let $V_\pm\in L^1_\loc(\RR^\nu)$,
$V_\pm\ge0$, such that $V_-$ is relatively form bounded with respect to $S_{\RR^\nu}^{\V{0},V_+}$
with relative form bound $<1$. Then $\mathfrak{q}_{\RR^\nu,\mathrm{D}}^{\V{G},\V{A},V}=
\mathfrak{q}_{\RR^\nu,\mathrm{N}}^{\V{G},\V{A},V}$.
\end{corollary}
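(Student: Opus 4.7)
My plan is to reduce the claimed identity at the Pauli-Fierz level to the corresponding classical identity $\mathfrak{s}_{\RR^\nu,\mathrm{D}}^{\V{A},V_+}=\mathfrak{s}_{\RR^\nu,\mathrm{N}}^{\V{A},V_+}$ from \cite{SimonJOT1979} by exhibiting a common form core for $\mathfrak{q}_{\RR^\nu,\mathrm{D}}^{\V{G},\V{A},V}$ and $\mathfrak{q}_{\RR^\nu,\mathrm{N}}^{\V{G},\V{A},V}$ on which both forms agree. The tool that makes this straightforward is Corollary~\ref{corformcore}, which describes natural cores of the Pauli-Fierz form as tensor products of cores of the underlying Schr\"odinger form with form cores of $\Id\Gamma(\omega)$.

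Concretely, I would set $\sC:=\dom(\mathfrak{s}_{\RR^\nu,\mathrm{N}}^{\V{A},V_+})$ and pick any form core $\sD$ for $\Id\Gamma(\omega)$, for instance $\sD:=\fdom(\Id\Gamma(\omega))$ itself. Applying \cite{SimonJOT1979} with the given vector potential $\V{A}$ gives $\dom(\mathfrak{s}_{\RR^\nu,\mathrm{D}}^{\V{A},V_+})=\sC$ together with identical form values, so that $\sC$ is tautologically a core for the Schr\"odinger form in both the Dirichlet and the Neumann case. Applying the same result with vanishing vector potential yields $\mathfrak{s}_{\RR^\nu,\mathrm{D}}^{\V{0},V_+}=\mathfrak{s}_{\RR^\nu,\mathrm{N}}^{\V{0},V_+}$, so that the $S_{\RR^\nu}^{\V{0},V_+}$-form boundedness assumption on $V_-$ translates into \eqref{KLMN} simultaneously for $\diamond=\mathrm{D}$ and $\diamond=\mathrm{N}$ with identical constants $a\in[0,1)$ and $b>0$. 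All hypotheses of Corollary~\ref{corformcore} are thus met in each of the two cases, and that corollary supplies $\sC\otimes\sD$ as a form core of both $\mathfrak{q}_{\RR^\nu,\mathrm{D}}^{\V{G},\V{A},V}$ and $\mathfrak{q}_{\RR^\nu,\mathrm{N}}^{\V{G},\V{A},V}$.

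To conclude, I would observe from \eqref{defqD} that $\dom(\mathfrak{q}_{\RR^\nu,\mathrm{D}}^{\V{G},\V{A},V})$ is by construction the closure of $\sD(\RR^\nu,\fdom(\Id\Gamma(1\vee\omega)))$ in the form norm of the (already closed) form $\mathfrak{q}_{\RR^\nu,\mathrm{N}}^{\V{G},\V{A},V}$; consequently $\mathfrak{q}_{\RR^\nu,\mathrm{D}}^{\V{G},\V{A},V}$ is a restriction of $\mathfrak{q}_{\RR^\nu,\mathrm{N}}^{\V{G},\V{A},V}$, and in particular the two forms coincide on $\sC\otimes\sD$. Since $\sC\otimes\sD$ is a form core for each of them, the two closed forms are necessarily equal. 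There is no real analytic obstacle beyond assembling the hypotheses of Corollary~\ref{corformcore}; the one external input that cannot be avoided is the classical result of \cite{SimonJOT1979}, since that result is precisely the scalar case $\V{G}=\V{0}$ of what we are trying to prove, and the role of Corollary~\ref{corformcore} is exactly to transport this identity through the tensor product with Fock space.
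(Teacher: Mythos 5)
Your proof is correct and follows essentially the same route as the paper: invoke the scalar result of \cite{SimonJOT1979} to obtain a common core for the Schr\"odinger forms, push it through Cor.~\ref{corformcore} to produce the tensor-product core $\dom(\mathfrak{s}_{\RR^\nu,\mathrm{D}}^{\V{A},V})\otimes\fdom(\Id\Gamma(\omega))$ common to both Pauli-Fierz forms, and conclude from \eqref{defqD} that the two closed forms agree since the minimal form is by construction a restriction of the maximal one. The paper's proof is more terse but uses precisely the same ingredients in the same order.
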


\begin{proof}
The result
$\mathfrak{s}_{\RR^\nu,\mathrm{D}}^{\V{A},V}=\mathfrak{s}_{\RR^\nu,\mathrm{N}}^{\V{A},V}$
of \cite{SimonJOT1979} and Cor.~\ref{corformcore} imply that
$\dom(\mathfrak{s}_{\RR^\nu,\mathrm{D}}^{\V{A},V})\otimes\fdom(\Id\Gamma(\omega))$
is a common form core of $\mathfrak{q}_{\RR^\nu,\mathrm{D}}^{\V{G},\V{A},V}$
and $\mathfrak{q}_{\RR^\nu,\mathrm{N}}^{\V{G},\V{A},V}$. 
In view of \eqref{defqD} this implies the assertion.
\end{proof}


\section{Domain and cores of the Dirchlet-Pauli-Fierz operator}\label{sec-dom}

\noindent
In this section we prove the main result of this paper, Thm.~\ref{thm-dom} below, 
asserting that the domain of the Dirchlet-Pauli-Fierz 
operator $H_{\Lambda,\mathrm{D}}^{\V{G},\V{A},V}$ is equal to the intersection of the domain
of the vector-valued Dirichlet-Schr\"odinger operator ${T}_{\Lambda,\mathrm{D}}^{\V{0},\V{A},V}$ 
with the domain $L^2(\Lambda,\dom(\Id\Gamma(\omega)))$ of the radiation field energy.
In combination with Lem.~\ref{lem-free-ham}(2), the theorem further shows that the determination 
of the operator cores of the Dirichlet-Pauli-Fierz operator
of the type \eqref{def-CtensorD} boils down to the determination of the operator cores of the 
scalar Schr\"{o}dinger operator ${S}_{\Lambda,\mathrm{D}}^{\V{A},V}$ and $\Id\Gamma(\omega)$. 
In Rem.~\ref{remZeeman} we extend these results to the case where classical and quantized Zeeman
terms are added to $H_{\Lambda,\mathrm{D}}^{\V{G},\V{A},V}$.

Throughout this section we shall again use the notation
\begin{align*}
\theta=1+\Id\Gamma(\omega).
\end{align*}
We start with some elementary remarks on the operator
$H_{\Lambda,\mathrm{D}}^{\V{0},\V{A},V}$, 
where the interaction between the matter particles and the radiation field is turned off:

\begin{lemma}\label{lem-free-ham}
Let $\V{A}\in L^2_\loc(\Lambda,\RR^\nu)$, 
$\diamond\in\{\mathrm{D},\mathrm{N}\}$, and assume that $V_\pm\ge0$ satisfy \eqref{KLMN} 
for some $a\in[0,1)$ and $b\ge0$.
Assume further that $\mathfrak{s}_{\Lambda,\diamond}^{\V{A},V}\ge0$, which can always be
achieved by adding a suitable non-negative constant to $V_+$. Then the following holds:
\begin{enumerate}[leftmargin=0.67cm]
\item[{\rm(1)}] $H_{\Lambda,\diamond}^{\V{0},\V{A},V}=T_{\Lambda,\diamond}^{\V{0},\V{A},V}
+\Id\Gamma(\omega)$ which in particular includes the equality of domains
$\dom(H_{\Lambda,\diamond}^{\V{0},\V{A},V})=\dom(T_{\Lambda,\diamond}^{\V{0},\V{A},V})\cap
L^2(\Lambda,\dom(\Id\Gamma(\omega)))$. Furthermore,
\begin{align*}
\big(\|T_{\Lambda,\diamond}^{\V{0},\V{A},V}\Psi\|^2
+\|\Id\Gamma(\omega)\Psi\|^2\big)^\eh&\le\|H_{\Lambda,\diamond}^{\V{0},\V{A},V}\Psi\|
+\|\Psi\|,\quad\Psi\in\dom(H_{\Lambda,\diamond}^{\V{0},\V{A},V}).
\end{align*}
\item[{\rm(2)}] Let $\sC\subset L^2(\Lambda)$ be a core for 
${S}_{\Lambda,\diamond}^{\V{A},V}$ and $\sD\subset\sF$ be a core for 
$\Id\Gamma(\omega)$. Then $\sC\otimes\sD$ is a core for $H_{\Lambda,\diamond}^{\V{0},\V{A},V}$.
\item[{\rm(3)}] $\Psi\in\dom(H_{\Lambda,\diamond}^{\V{0},\V{A},V})$ implies
$w_j^*\Psi\in\dom(\theta^\eh)$, $\theta^\eh\Psi\in\dom(w_j^*)$, and
$\theta^\eh w_j^*\Psi=w_j^*\theta^\eh\Psi$, for every $j\in\{1,\ldots,\nu\}$.
Moreover,
\begin{align}\label{bakithi1b}
\frac{1}{2}\sum_{j=1}^\nu\|\theta^\eh w_j^*\Psi\|^2
\le\frac{1+b}{1-a}\|(T^{\V{0},\V{A},V}_{\Lambda,\diamond}+1)\Psi\|\|\theta\Psi\|,
\quad\Psi\in\dom(H_{\Lambda,\diamond}^{\V{0},\V{A},V}).
\end{align}
\end{enumerate}
\end{lemma}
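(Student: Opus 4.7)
The plan is to build everything on the observation that, when $\V{G}=\V{0}$, the operators $T_{\Lambda,\diamond}^{\V{0},\V{A},V}$ and $\Id\Gamma(\omega)$ act in separated variables on $L^2(\Lambda,\sF)$. By Lem.~\ref{lemstST}(4), for $f\in\dom(S_{\Lambda,\diamond}^{\V{A},V})$ and $\phi\in\sF$ one has $T_{\Lambda,\diamond}^{\V{0},\V{A},V}(f\phi)=(S_{\Lambda,\diamond}^{\V{A},V}f)\phi$, while any bounded Borel $g(\Id\Gamma(\omega))$ acts fiber-wise and maps such an elementary tensor to $f\,g(\Id\Gamma(\omega))\phi$, giving the commutation $g(\Id\Gamma(\omega))T_{\Lambda,\diamond}^{\V{0},\V{A},V}(f\phi)=T_{\Lambda,\diamond}^{\V{0},\V{A},V}g(\Id\Gamma(\omega))(f\phi)$ directly. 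A short density argument building on Lem.~\ref{lemstST}(1)--(4) shows that $\dom(S_{\Lambda,\diamond}^{\V{A},V})\otimes\sF$ is a core for $T_{\Lambda,\diamond}^{\V{0},\V{A},V}$, so this commutation lifts to the resolvents and hence, by the spectral theorem, to strong commutativity of $T_{\Lambda,\diamond}^{\V{0},\V{A},V}$ and $\Id\Gamma(\omega)$.

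Assertion (1) then follows from the general fact that the form sum of two strongly commuting non-negative closed forms is represented by the operator sum on the intersection of the two operator domains. Since $\mathfrak{q}_{\Lambda,\diamond}^{\V{0},\V{A},V}=\mathfrak{t}_{\Lambda,\diamond}^{\V{0},\V{A},V}+\mathfrak{f}_\Lambda$ (by Lem.~\ref{lem-egon} for $\diamond=\mathrm{D}$ and by definition for $\diamond=\mathrm{N}$), this identifies $H_{\Lambda,\diamond}^{\V{0},\V{A},V}$ with the operator sum on $\dom(T_{\Lambda,\diamond}^{\V{0},\V{A},V})\cap L^2(\Lambda,\dom(\Id\Gamma(\omega)))$. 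The norm estimate follows from the elementary inequality $\|T\Psi\|^2+\|\Id\Gamma(\omega)\Psi\|^2\le\|(T+\Id\Gamma(\omega))\Psi\|^2$, valid for strongly commuting non-negative self-adjoint $T$ and $\Id\Gamma(\omega)$, by taking square roots and using the triangle inequality to absorb a shift constant into the $\|\Psi\|$ term. Assertion (2) follows by the standard commuting-core argument: the spectral cut-offs $p_n:=\mathbbm{1}_{[0,n]}(\Id\Gamma(\omega))$ reduce the problem to vectors in $L^2(\Lambda)\otimes\dom(\Id\Gamma(\omega))$, and Lem.~\ref{lemstST}(4) combined with $\sC$ being a core for $S_{\Lambda,\diamond}^{\V{A},V}$ and $\sD$ a core for $\Id\Gamma(\omega)$ allows one to approximate each Fock-basis coefficient of $p_n\Psi$ by elements of $\sC\otimes\sD$.

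For (3), strong commutativity implies that $\theta^\eh$ maps $\dom(T_{\Lambda,\diamond}^{\V{0},\V{A},V})$ into itself with $T_{\Lambda,\diamond}^{\V{0},\V{A},V}\theta^\eh\Psi=\theta^\eh T_{\Lambda,\diamond}^{\V{0},\V{A},V}\Psi$, so in particular $\theta^\eh\Psi\in\dom(\mathfrak{t}_{\Lambda,\diamond}^{\V{0},\V{A},V})\subset\dom(w_j^*)$. A short duality argument using that $w_j$ commutes with the fiber-wise action of $\theta^\eh$ on $\sD(\Lambda,\fdom(\Id\Gamma(1\vee\omega)))$ then yields $w_j^*\Psi\in\dom(\theta^\eh)$ together with $\theta^\eh w_j^*\Psi=w_j^*\theta^\eh\Psi$. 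The bound \eqref{bakithi1b} is obtained by chaining
\begin{align*}
\tfrac{1}{2}\sum_{j=1}^\nu\|w_j^*\theta^\eh\Psi\|^2
&\le\mathfrak{t}_{\Lambda,\diamond}^{\V{0},\V{A},V_+}[\theta^\eh\Psi]
\le\tfrac{1}{1-a}\mathfrak{t}_{\Lambda,\diamond}^{\V{0},\V{A},V}[\theta^\eh\Psi]
+\tfrac{b}{1-a}\|\theta^\eh\Psi\|^2
\end{align*}
(using \eqref{harald3}) with the identity $\mathfrak{t}_{\Lambda,\diamond}^{\V{0},\V{A},V}[\theta^\eh\Psi]=\SPn{T_{\Lambda,\diamond}^{\V{0},\V{A},V}\Psi}{\theta\Psi}\le\|T_{\Lambda,\diamond}^{\V{0},\V{A},V}\Psi\|\|\theta\Psi\|$, the estimate $\|\theta^\eh\Psi\|^2\le\|\Psi\|\|\theta\Psi\|$, and the elementary $\|T_{\Lambda,\diamond}^{\V{0},\V{A},V}\Psi\|+b\|\Psi\|\le(1+b)\|(T_{\Lambda,\diamond}^{\V{0},\V{A},V}+1)\Psi\|$ (since $T_{\Lambda,\diamond}^{\V{0},\V{A},V}\ge0$). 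The main obstacle will be to make the fiber-wise/tensor-product identifications underlying strong commutativity fully rigorous on $L^2(\Lambda,\sF)$, which is not literally a tensor product; the cleanest route is the direct-integral viewpoint combined with Lem.~\ref{lemstST}.
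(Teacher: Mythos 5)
Your treatment of Parts (1) and (2) is conceptually aligned with the paper's: both exploit the tensor-product structure $L^2(\Lambda,\sF)\cong L^2(\Lambda)\hat\otimes\sF$, under which $T_{\Lambda,\diamond}^{\V{0},\V{A},V}$ and $\Id\Gamma(\omega)$ act in separated variables. The paper argues via a unitary identification, cites Weidmann's theorem for essential self-adjointness of $({S}_\diamond\!\!\upharpoonright_\sC)\otimes\id+\id\otimes(\Id\Gamma(\omega)\!\!\upharpoonright_\sD)$, and then verifies the norm lower bound by an explicit Gram--Schmidt computation on elementary tensors; your strong-commutativity framing reaches the same conclusions and is a legitimate alternative (your inequality $\|T\Psi\|^2+\|\Id\Gamma(\omega)\Psi\|^2\le\|(T+\Id\Gamma(\omega))\Psi\|^2$, valid since both summands are non-negative and strongly commute, is even slightly cleaner than the paper's computation with $\theta\Psi$). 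As you anticipate in your final sentence, the remaining work is to make the direct-integral identification rigorous, which Lem.~\ref{lemstST} supplies.

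Part (3), however, has a genuine gap. First, the intermediate claim that $\theta^\eh$ maps $\dom(T_{\Lambda,\diamond}^{\V{0},\V{A},V})$ into itself is false: from $\Psi\in\dom(T_{\Lambda,\diamond}^{\V{0},\V{A},V})\cap\dom(\theta)$ the joint spectral theorem only gives $\theta^\eh\Psi\in\fdom(T_{\Lambda,\diamond}^{\V{0},\V{A},V})$ (because $t(1+s)\le\tfrac12(t^2+(1+s)^2)$), not $\theta^\eh\Psi\in\dom(T_{\Lambda,\diamond}^{\V{0},\V{A},V})$ (since $t^2(1+s)$ is not dominated by $t^2+(1+s)^2$); the weaker conclusion is what you actually need and does hold, but the stated reasoning is incorrect. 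Second, and more seriously, strong commutativity of $T_{\Lambda,\diamond}^{\V{0},\V{A},V}$ with $\Id\Gamma(\omega)$ gives no information about $w_j^*$, which is not a spectral function of $T_{\Lambda,\diamond}^{\V{0},\V{A},V}$. The ``short duality argument'' for establishing $w_j^*\Psi\in\dom(\theta^\eh)$ and $\theta^\eh w_j^*\Psi=w_j^*\theta^\eh\Psi$ does not close: $\theta^\eh$ is unbounded and does not preserve $\sD(\Lambda,\sF)=\dom(w_j)$, so one cannot simply pass to adjoints, and the obvious duality identity $\SPn{w_j\Psi}{\theta^\eh\Phi}=\SPn{\theta^\eh\Psi}{w_j^*\Phi}$ only yields boundedness in $\|\Psi\|$ after one already knows $w_j^*\Phi\in\dom(\theta^\eh)$ --- exactly the fact to be proved. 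The paper resolves this by introducing the \emph{bounded} regularizations $\Upsilon_\ve=\theta^\eh(1+\ve\Id\Gamma(\omega))^\mh$, which do preserve $\sD(\Lambda,\sF)$, commute with $w_j$ there, and therefore commute with $w_j^*$; then the estimate \eqref{bakithi1b} is established with $\theta^\eh$ replaced by $\Upsilon_\ve$ and the limit $\ve\downarrow0$ is taken by monotone convergence, which simultaneously produces the membership $w_j^*\Psi\in\dom(\theta^\eh)$, the identity $\theta^\eh w_j^*\Psi=w_j^*\theta^\eh\Psi$ (via closedness of $w_j^*$), and the final bound. You would need some such regularization step to make Part (3) correct.
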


\begin{proof}
We shall drop all sub/superscripts $\Lambda$, $\V{A}$, and $V$ in this proof, so that for instance
$\mathfrak{t}_{\diamond}^{\V{0}}=\mathfrak{t}_{\Lambda,\diamond}^{\V{0},\V{A},V}$ and 
$\mathfrak{q}_{\diamond}^{\V{0}}=\mathfrak{q}_{\Lambda,\diamond}^{\V{0},\V{A},V}$.
First, we prove Parts~(1) and (2) simultaneously.

Thanks to Lem.~\ref{lemstST}(4) we know that 
$\dom({S}_\diamond)\otimes\sF\subset\dom(T_\diamond^{\V{0}})$ with
$T_\diamond^{\V{0}}(f\phi)=({S}_\diamond f)\phi$, $f\in\dom({S}_\diamond)$, $\phi\in\sF$.
On account of $\dom(T_\diamond^{\V{0}})\cap L^2(\Lambda,\dom(\Id\Gamma(\omega)))
\subset\dom(\mathfrak{t}_{\diamond}^{\V{0}})\cap L^2(\Lambda,\fdom(\Id\Gamma(\omega)))
=\dom(\mathfrak{q}_\diamond^{\V{0}})$,
we further have $\mathfrak{q}_\diamond^{\V{0}}[\Phi,\Psi]=\mathfrak{t}_\diamond^{\V{0}}[\Phi,\Psi]
+\mathfrak{f}[\Phi,\Psi]=\SPn{\Phi}{T_\diamond^{\V{0}}\Psi}+
\SPn{\Phi}{\Id\Gamma(\omega)\Psi}$, $\Phi\in\dom(\mathfrak{q}_\diamond^{\V{0}})$,
$\Psi\in\dom(T_\diamond^{\V{0}})\cap L^2(\Lambda,\dom(\Id\Gamma(\omega)))$, which shows that
$T_\diamond^{\V{0}}+\Id\Gamma(\omega)\subset H_\diamond^{\V{0}}$ by the first representation 
theorem for quadratic forms; see \cite[Thm.~VI.2.1(iii)]{Kato}.

Let $U:L^2(\Lambda,\sF)\to L^2(\Lambda)\hat{\otimes}\sF$ be the canonical unitary transform
onto the completed tensor product of Hilbert spaces that maps $f\phi$ to $f\otimes \phi$, 
for $f\in L^2(\Lambda)$ and $\phi\in\sF$. Then in view of the above remarks, 
$UH_\diamond^{\V{0}}\!\!\upharpoonright_{\sC\otimes\sD}U^*=
({S}_\diamond\!\!\upharpoonright_{\sC})\otimes\id
+\id\otimes(\Id\Gamma(\omega)\!\!\upharpoonright_{\sD})$, which is known to be
essentially self-adjoint; see, e.g., \cite[Thm.~8.33]{Weidmann}.
Hence, $H_\diamond^{\V{0}}\!\!\upharpoonright_{\sC\otimes\sD}$ is essentially self-adjoint as well.

Finally, we consider
$\Psi\in\dom({S_\diamond})\otimes\dom(\Id\Gamma(\omega))\subset\dom(H_\diamond^{\V{0}})$.
After an application of the Gram-Schmid orthogonalization scheme,
we may write $\Psi$ in the form $\Psi=\sum_{\ell=1}^{N}f_{\ell}\phi_{\ell}$,
where $\phi_{1},\ldots,\phi_{N}$ are mutually orthonormal with respect to the scalar product
$[\phi,\psi]:=\SPn{\theta^\eh\phi}{\theta^\eh\psi}_{\sF}$. By our earlier remarks,
$H_\diamond^{\V{0}}\!\!\upharpoonright_{\dom({S_\diamond})\otimes\dom(\Id\Gamma(\omega))}
\subset T_\diamond^{\V{0}}+\Id\Gamma(\omega)$, 
which permits to get
\begin{align*}
\|(H_\diamond^{\V{0}}+1)\Psi\|^2&=\big\|T_\diamond^{\V{0}}\Psi+\theta\Psi\big\|^2
\\
&=\|T_\diamond^{\V{0}}\Psi\|^2+\|\theta\Psi\|^2
+2\sum_{j,\ell=1}^{N}\Re\big\{\SPn{f_{j}}{{S}_\diamond f_{\ell}}[\phi_{j},\phi_{\ell}]\big\}
\\
&=\|T_\diamond^{\V{0}}\Psi\|^2+\|\theta\Psi\|^2+2\sum_{\ell=1}^{N}\mathfrak{s}_\diamond[f_{\ell}]
\ge\|T_\diamond^{\V{0}}\Psi\|^2+\|\Id\Gamma(\omega)\Psi\|^2.
\end{align*}
Here we used the assumption that $\mathfrak{s}_\diamond$ is non-negative in the last step.
Since $H_\diamond^{\V{0}}$ is essentially self-adjoint on 
$\dom({S_\diamond})\otimes\dom(\Id\Gamma(\omega))$ and $T_\diamond^{\V{0}}$ and 
$\Id\Gamma(\omega)$ are closed, this bound entails 
$H_\diamond^{\V{0}}\subset T_\diamond^{\V{0}}+\Id\Gamma(\omega)$.

To prove Part~(3) we again abbreviate $\theta_\ve:=1+\ve\Id\Gamma(\omega)$ 
and put $\Upsilon_\ve:=\theta^\eh\theta_\ve^\mh$, $\ve>0$.
Obviously, $\Upsilon_\ve$ maps $\dom(w_j)=\sD(\Lambda,\sF)$ into itself 
and $w_j\Upsilon_\ve=\Upsilon_\ve w_j$. Since $\Upsilon_\ve$ is bounded, self-adjoint, 
and continuously invertible, this implies $\Upsilon_\ve w_j^*=w_j^*\Upsilon_\ve$. 
For all $\Psi\in\dom(H_\diamond^{\V{0}})\subset\dom(T_\diamond^{\V{0}})
\subset\fdom(V_+)\cap\bigcap_{j=1}^\nu\dom(w_j^*)$,
we infer from these remarks and \eqref{harald3} that
\begin{align*}
\frac{1}{2}\sum_{j=1}^\nu\|\Upsilon_\ve w_j^*\Psi\|^2
&\le\frac{1}{2}\sum_{j=1}^\nu\|w_j^*\Upsilon_\ve\Psi\|^2+\big\|V_+^\eh\Upsilon_\ve\Psi\big\|^2
\le c\,\mathfrak{t}^{\V{0}}_{\diamond}[\Upsilon_\ve\Psi]+c\|\Upsilon_\ve\Psi\|^2
\\
&=c\,\mathfrak{t}^{\V{0}}_{\diamond}[\Psi,\Upsilon_\ve^2\Psi]+c\SPn{\Psi}{\Upsilon_\ve^2\Psi}
=c\SPn{(T_\diamond^{\V{0}}+1)\Psi}{\Upsilon_\ve^2\Psi},
\end{align*}
with $c:=(1+b)/(1-a)$. Here we further have $\Upsilon_\ve^2\Psi\to\theta\Psi$, $\ve\downarrow0$,
since $\dom(H_\diamond^{\V{0}})\subset L^2(\Lambda,\dom(\Id\Gamma(\omega)))$.
The monotone convergence theorem now implies that $w_j^*\Psi\in\dom(\theta^\eh)$ and
\eqref{bakithi1b} is valid. Since $w_j^*$ is closed and we know that 
$w_{j}^*\Upsilon_\ve\Psi=\Upsilon_\ve w_j^*\Psi$ converges to $\theta^\eh w_j^*\Psi$,
as $\ve\downarrow0$, we further see that $\theta^\eh\Psi\in\dom(w_j^*)$ with
$w_j^*\theta^\eh\Psi=\theta^\eh w_j^*\Psi$.
\end{proof}

In what follows we shall work with several choices of the dispersion relation at the same time.
Hence we make the dependence of the Pauli-Fierz operators on the dispersion relation
explicit in the notation as well, although this leads to an ugly cluttering of sub and superscripts. 
More precisely, we shall consider the family of dispersion
relations $\alpha\omega+m$ with $\alpha\ge1$ and $m\ge0$ and write
$\mathfrak{q}_{\Lambda,\diamond,\alpha\omega+m}^{\V{G},\V{A},V}$ and
$H_{\Lambda,\diamond,\alpha\omega+m}^{\V{G},\V{A},V}$ for the Pauli-Fierz form and operator 
obtained upon replacing $\omega$ by $\alpha\omega+m$ in the construction of 
$\mathfrak{q}_{\Lambda,\diamond}^{\V{G},\V{A},V}
=:\mathfrak{q}_{\Lambda,\diamond,\omega}^{\V{G},\V{A},V}$
and $H_{\Lambda,\diamond}^{\V{G},\V{A},V}=:H_{\Lambda,\diamond,\omega}^{\V{G},\V{A},V}$, 
respectively. 

Furthermore, we shall, besides \eqref{shorty0}, use the following short-hands,
\begin{align*}
\vp(\V{G})\cdot\ol{\V{w}}\Psi:=\sum_{j=1}^\nu\vp(G_j)\ol{w}_j\Psi,\quad
\vp(\V{G})^2\Psi:=\sum_{j=1}^\nu\vp(G_j)^2\Psi.
\end{align*}

\begin{proposition}\label{proprep}
Let $\V{A}\in L^2_\loc(\Lambda,\RR^\nu)$ and suppose that 
$V_\pm\in L^1_\loc(\Lambda,\RR)$, $V_\pm\ge0$, 
satisfy \eqref{KLMN} with $\diamond=\mathrm{D}$ for some $a\in[0,1)$ and $b>0$. 
Then the following holds:
\begin{enumerate}[leftmargin=0.67cm]
\item[{\rm(1)}]
Assume that $\V{G}\in L^\infty(\Lambda,\fdom(\omega^{-1})^\nu)$
has a weak divergence denoted as 
$q:=\mathrm{div}\V{G}\in L^\infty(\Lambda,\fdom(\omega^{-1}))$ and
let $\Psi\in\dom(H_{\Lambda,\mathrm{D}}^{\V{0},\V{A},V})$.
Then $\Psi\in\dom(H_{\Lambda,\mathrm{D}}^{\V{G},\V{A},V})$ and
\begin{align}\label{bakithi1}
H_{\Lambda,\mathrm{D}}^{\V{G},\V{A},V}\Psi&=
H_{\Lambda,\mathrm{D}}^{\V{0},\V{A},V}\Psi-\vp(\V{G})\cdot\ol{\V{w}}\Psi
+\frac{1}{2}\vp(\V{G})^2\Psi+\frac{i}{2}\vp(q)\Psi.
\end{align}
\item[{\rm(2)}]
Assume that $\V{G}\in L^\infty(\Lambda,\HP^\nu)$
has a weak divergence $q\in L^\infty(\Lambda,\HP)$ and let $m>0$ and
$\Psi\in\dom(H_{\Lambda,\mathrm{D},\omega+m}^{\V{0},\V{A},V})$. Then again
$\Psi\in\dom(H_{\Lambda,\mathrm{D}}^{\V{G},\V{A},V})$ and \eqref{bakithi1} is satisfied.
\end{enumerate}
\end{proposition}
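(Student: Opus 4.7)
My plan is to establish the identity \eqref{bakithi1} at the level of the minimal Pauli--Fierz sesquilinear form and then invoke the first representation theorem. By \eqref{defqD}, the test-function space $\sC:=\sD(\Lambda,\fdom(\Id\Gamma(1\vee\omega)))$ is a form core for $\mathfrak{q}_{\Lambda,\mathrm{D}}^{\V{G},\V{A},V}$, and by Lemma~\ref{lem-ida} one has $\dom(\mathfrak{q}_{\Lambda,\mathrm{D}}^{\V{G},\V{A},V})=\dom(\mathfrak{q}_{\Lambda,\mathrm{D}}^{\V{0},\V{A},V})$, so the hypothesis $\Psi\in\dom(H_{\Lambda,\mathrm{D}}^{\V{0},\V{A},V})$ already places $\Psi$ in $\dom(\mathfrak{q}_{\Lambda,\mathrm{D}}^{\V{G},\V{A},V})$. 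It therefore suffices to verify that the right hand side of \eqref{bakithi1}, denoted $\eta_\Psi$ below, belongs to $L^2(\Lambda,\sF)$ and that $\SPn{\Phi}{\eta_\Psi}=\mathfrak{q}_{\Lambda,\mathrm{D}}^{\V{G},\V{A},V}[\Phi,\Psi]$ for every $\Phi\in\sC$.

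For Part~(1), Lemma~\ref{lem-free-ham}(1),(3) guarantees $\Psi\in L^2(\Lambda,\dom(\Id\Gamma(\omega)))\cap\bigcap_{j=1}^\nu\dom(w_j^*)$ with $w_j^*\Psi\in L^2(\Lambda,\fdom(\Id\Gamma(\omega)))$; combined with the $L^\infty$-bounds on $\V{G}$ and $q$, the estimates \eqref{rbvp} and \eqref{rbvp2} place $\vp(q)\Psi$, $\vp(\V{G})\cdot\V{w}^*\Psi$, and $\vp(\V{G})^2\Psi$ all in $L^2(\Lambda,\sF)$, so $\eta_\Psi\in L^2(\Lambda,\sF)$. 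Fix $\Phi\in\sC$; by Remark~\ref{remvj}(1) and \eqref{def-vj}, $v_j^*\Psi=w_j^*\Psi-\vp(G_j)\Psi$ and $v_j\Phi=w_j\Phi-\vp(G_j)\Phi$. Expanding $\eh\sum_j\SPn{v_j\Phi}{v_j^*\Psi}$ into its four bilinear pieces, the mixed term $-\eh\sum_j\SPn{w_j\Phi}{\vp(G_j)\Psi}$ is rewritten via Lemma~\ref{lem-Leibniz-vp} as $-\eh\SPn{\Phi}{\vp(\V{G})\cdot\V{w}^*\Psi}+\frac{i}{2}\SPn{\Phi}{\vp(q)\Psi}$, while pointwise self-adjointness of each $\vp(G_{j,\V{x}})$ on $\fdom(\Id\Gamma(\omega))\subset\dom(\vp(G_{j,\V{x}}))$, used together with Lemma~\ref{lem-comm}(3) to ensure $\vp(G_j)\Psi(\V{x})$ again lies in the correct pointwise domain, converts the remaining cross terms $-\eh\sum_j\SPn{\vp(G_j)\Phi}{w_j^*\Psi}$ and $\eh\sum_j\SPn{\vp(G_j)\Phi}{\vp(G_j)\Psi}$ into $-\eh\SPn{\Phi}{\vp(\V{G})\cdot\V{w}^*\Psi}$ and $\eh\SPn{\Phi}{\vp(\V{G})^2\Psi}$, respectively. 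The surviving piece $\eh\sum_j\SPn{w_j\Phi}{w_j^*\Psi}$ combines with $\mathfrak{v}_\Lambda^{\pm}$ and $\mathfrak{f}_\Lambda$ to give $\mathfrak{q}_{\Lambda,\mathrm{D}}^{\V{0},\V{A},V}[\Phi,\Psi]=\SPn{\Phi}{H_{\Lambda,\mathrm{D}}^{\V{0},\V{A},V}\Psi}$ by the first representation theorem applied to the free operator, producing the desired identity $\mathfrak{q}_{\Lambda,\mathrm{D}}^{\V{G},\V{A},V}[\Phi,\Psi]=\SPn{\Phi}{\eta_\Psi}$. Since $\Psi$ is a form-limit of functions from $\sD(\Lambda,\fdom(\Id\Gamma(1)))$ with respect to the $\mathfrak{t}_{\Lambda,\mathrm{D}}^{\V{0},\V{A},V_+}$-norm one has $\ol{w}_j\Psi=w_j^*\Psi$ on this domain, which brings the conclusion into exactly the form \eqref{bakithi1}.

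For Part~(2), the bound $(\omega+m)^{-1}\le m^{-1}$ makes the hypotheses $\V{G}\in L^\infty(\Lambda,\HP^\nu)$ and $q\in L^\infty(\Lambda,\HP)$ coincide with $\V{G}\in L^\infty(\Lambda,\fdom((\omega+m)^{-1})^\nu)$ and $q\in L^\infty(\Lambda,\fdom((\omega+m)^{-1}))$, so Part~(1) applied with $\omega$ replaced throughout by $\omega+m$ already delivers $\Psi\in\dom(H_{\Lambda,\mathrm{D},\omega+m}^{\V{G},\V{A},V})$ together with the shifted version of \eqref{bakithi1}; note that the three quantized correction terms involve no $\omega$, so no adjustment is needed there. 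To transfer the conclusion back to $\omega$ I use that $\Id\Gamma(\omega+m)=\Id\Gamma(\omega)+mN$ for $N:=\Id\Gamma(1)$ (a sum of two commuting non-negative self-adjoint operators), so Lemma~\ref{lem-free-ham}(1) gives $H_{\Lambda,\mathrm{D},\omega+m}^{\V{0},\V{A},V}=H_{\Lambda,\mathrm{D}}^{\V{0},\V{A},V}+mN$. Further, $\fdom(\Id\Gamma(1\vee\omega))=\fdom(\Id\Gamma(\omega))\cap\fdom(N)=\fdom(\Id\Gamma(\omega+m))$, so $\sC$ is a common test-function core for $\mathfrak{q}_{\Lambda,\mathrm{D},\omega+m}^{\V{G},\V{A},V}$ and $\mathfrak{q}_{\Lambda,\mathrm{D},\omega}^{\V{G},\V{A},V}$, and consequently $\dom(\mathfrak{q}_{\Lambda,\mathrm{D},\omega+m}^{\V{G},\V{A},V})$ is dense in $\dom(\mathfrak{q}_{\Lambda,\mathrm{D},\omega}^{\V{G},\V{A},V})$ for the $\omega$-form norm. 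Combining the elementary identity $\mathfrak{q}_{\Lambda,\mathrm{D},\omega+m}^{\V{G},\V{A},V}[\Phi,\Psi]=\mathfrak{q}_{\Lambda,\mathrm{D},\omega}^{\V{G},\V{A},V}[\Phi,\Psi]+m\SPn{N^\eh\Phi}{N^\eh\Psi}$ with this density and the first representation theorem yields $\Psi\in\dom(H_{\Lambda,\mathrm{D}}^{\V{G},\V{A},V})$ with $H_{\Lambda,\mathrm{D},\omega+m}^{\V{G},\V{A},V}\Psi=H_{\Lambda,\mathrm{D}}^{\V{G},\V{A},V}\Psi+mN\Psi$; subtracting this from the analogous free-operator identity removes the $mN$ terms and produces \eqref{bakithi1} for the original $\omega$. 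The main technical obstacles are the pointwise domain bookkeeping required when commuting $\vp(G_j)$ past $w_j^*$ in Part~(1)---handled by Lemma~\ref{lem-comm}(3) and Lemma~\ref{lem-free-ham}(3)---and, in Part~(2), the form-norm density step that legitimises promoting the shifted identity back to one involving $H_{\Lambda,\mathrm{D},\omega}^{\V{G},\V{A},V}$.
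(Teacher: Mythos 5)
Your argument follows the paper's strategy almost step for step: show $\Psi\in\dom(\mathfrak{q}^{\V{G}}_{\Lambda,\mathrm{D}})$, expand the sesquilinear form against test vectors $\Phi\in\sD(\Lambda,\fdom(\Id\Gamma(1\vee\omega)))$ using $v_j^*\Psi=w_j^*\Psi-\vp(G_j)\Psi$ and $v_j\Phi=w_j\Phi-\vp(G_j)\Phi$, move the field operators around by the Leibniz rule on one cross term and by pointwise self-adjointness on the others, and conclude by the first representation theorem. The one organizational difference is that the paper proves both parts at once with a single dispersion relation $\omega+m$, $m\ge0$, while you first prove Part~(1) and then bootstrap Part~(2) from the $\omega+m$ case via $\Id\Gamma(\omega+m)=\Id\Gamma(\omega)+m\Id\Gamma(1)$; that is a legitimate alternative.

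Two points need repair, one cosmetic and one substantive. Cosmetically, the rewriting of $-\tfrac12\sum_j\SPn{w_j\Phi}{\vp(G_j)\Psi}$ uses the integration-by-parts identity \eqref{bakithi100div} from Lemma~\ref{lemmagnLeibniz}, which bundles in the classical vector potential $\V{A}$; Lemma~\ref{lem-Leibniz-vp} alone is not enough. Substantively, in Part~(2) the transfer back to $\omega$ is not complete as written. To invoke the first representation theorem for $\mathfrak{q}^{\V{G}}_{\Lambda,\mathrm{D},\omega}$ you must first have $\Psi\in\dom(\mathfrak{q}^{\V{G}}_{\Lambda,\mathrm{D},\omega})$, i.e.\ the \emph{inclusion} $\dom(\mathfrak{q}^{\V{G}}_{\Lambda,\mathrm{D},\omega+m})\subset\dom(\mathfrak{q}^{\V{G}}_{\Lambda,\mathrm{D},\omega})$. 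What you actually assert is that the former domain is dense in the latter, which is automatic (they share the common test core) and carries no information. The inclusion is obtained by noting that on the common test core the $\omega$-form norm is dominated by the $(\omega+m)$-form norm, so any Cauchy sequence for the latter is Cauchy for the former; this is exactly the sentence in the paper's proof ("if $\Psi_n$ converge to $\Psi$ with respect to the form norm of $\mathfrak{q}_{\mathrm{D},\omega+m}^{\V{G}}$, then they form a Cauchy sequence with respect to the form norm of $\mathfrak{q}_{\mathrm{D}}^{\V{G}}$"). With that replacement your Part~(2) argument closes correctly.
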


\begin{proof}
We shall assume without loss of generality that a suitable non-negative constant has been 
added to $V_+$ such that $\mathfrak{s}_{\Lambda,\mathrm{D}}^{\V{A},V}\ge0$. 
From now on we will also drop all sub/superscripts $\Lambda$, $\V{A}$, and $V$ in this proof.

Let $m\ge0$ and $\Psi\in\dom(H_{\mathrm{D},\omega+m}^{\V{0}})$ where 
$H_{\mathrm{D},\omega}^{\V{0}}=H_{\mathrm{D}}^{\V{0}}$. 
If $m=0$, then we assume that the conditions
under (1) are fulfilled. Otherwise we work with the hypotheses of (2).
Firstly, we observe that, in view of \eqref{rbvp}, \eqref{rbvp2}, and Lem.~\ref{lem-free-ham},
the terms $\vp({G}_j){w}_j^*\Psi$, $\vp({G}_j)^2\Psi$, and $\vp(q)\Psi$ are well-defined 
elements of $L^2(\Lambda,\sF)$ in both cases.
Lem.~\ref{lem-free-ham}(3) further ensures that Lem.~\ref{lemmagnLeibniz} applies to $\Psi$,
if we choose $\omega+m$ as dispersion relation in the latter lemma.

Under the conditions of Part~(1), i.e., for $m=0$, Lem.~\ref{lem-ida} implies
$\Psi\in\dom(\mathfrak{q}_{\mathrm{D}}^{\V{0}})=\dom(\mathfrak{q}_{\mathrm{D}}^{\V{G}})$.
If $m>0$, then $\dom(H_{\mathrm{D},\omega+m}^{\V{0}})\subset\dom(H_{\mathrm{D}}^{\V{0}})$ by
Lem.~\ref{lem-free-ham}, so that $\Psi\in\dom(H_{\mathrm{D}}^{\V{0}})
\subset\dom(\mathfrak{q}_{\mathrm{D}}^{\V{0}})$. Then we can further 
apply Lem.~\ref{lem-ida} to the dispersion relation $\omega+m$, which ensures that 
$\Psi\in\dom(\mathfrak{q}_{\mathrm{D},\omega+m}^{\V{0}})
=\dom(\mathfrak{q}_{\mathrm{D},\omega+m}^{\V{G}})$ under 
the conditions of Part~(2). Moreover, if $\Psi_n\in\sD(\Lambda,\fdom(\Id\Gamma(1\vee\omega)))$, 
$n\in\NN$, converge to $\Psi$ with respect to the form norm of 
$\mathfrak{q}_{\mathrm{D},\omega+m}^{\V{G}}$, then they form a Cauchy sequence 
with respect to the form norm of $\mathfrak{q}_{\mathrm{D}}^{\V{G}}$. Thus, 
$\Psi\in\dom(\mathfrak{q}_{\mathrm{D}}^{\V{G}})$ holds in the case $m>0$ as well.

Pick some $\Phi\in\sD(\Lambda,\fdom(\Id\Gamma(1\vee\omega)))$.
Thanks to Step~2 of the proof of Lem.~\ref{lem-ida} (applied to the dispersion relation $m+\omega$)
we already know that 
$\mathfrak{t}_{\mathrm{N}}^{\V{G}}[\Phi,\Psi]=\mathfrak{t}_{\mathrm{N}}^{\V{0}}[\Phi,\Psi]
+\mathfrak{b}^{\V{G}}[\Phi,\Psi]+\mathfrak{c}^{\V{G}}[\Phi,\Psi]$, 
where we use the notation introduced there. Taking also \eqref{egon99}
into account in the first and second equalities, 
 we infer from these remarks and \eqref{bakithi100div} that
\begin{align*}
&\mathfrak{q}_{\mathrm{D}}^{\V{G}}[\Phi,\Psi]
=\mathfrak{t}_{\mathrm{D}}^{\V{G}}[\Phi,\Psi]+\mathfrak{f}[\Phi,\Psi]
\\
&=\mathfrak{q}_{\mathrm{D}}^{\V{0}}[\Phi,\Psi]
-\frac{1}{2}\sum_{j=1}^\nu\big(\SPn{{w}_j^*\Phi}{\vp(G_j)\Psi}+
\SPn{\vp(G_j)\Phi}{{w}_j^*\Psi}-\SPn{\vp(G_j)\Phi}{\vp(G_j)\Psi}\big)
\\
&=\SPB{\Phi}{H_{\mathrm{D}}^{\V{0}}\Psi-\vp(\V{G})\cdot{\V{w}}^*\,\Psi+\frac{i}{2}\vp(q)\Psi
+\frac{1}{2}\vp(\V{G})^2\Psi}.
\end{align*}
Since $\sD(\Lambda,\fdom(\Id\Gamma(1\vee\omega)))$ is a core for 
$\mathfrak{q}_{\mathrm{D}}^{\V{G}}$, this proves $\Psi\in\dom(H_{\mathrm{D}}^{\V{G}})$
and \eqref{bakithi1}; see \cite[Thm.~VI.2.1(iii)]{Kato}.
For we know that $w_j^*\Psi=\ol{w}_j\Psi$ since 
$\dom(\mathfrak{q}_{\mathrm{D},\omega+m}^{\V{0}})\subset\dom(\ol{w}_j)$.
\end{proof}

In combination with Thm.~\ref{thm-dom} below the next remark extends 
\cite[Thm.~7]{HaslerHerbst2008}.

\begin{remark}\label{remaltrep}
Let $\V{A}$ and $V_\pm$ fulfill the hypotheses in Prop.~\ref{proprep}.
Assume that $G_j\in L^\infty(\Lambda,\fdom(\omega^{-1}))$ has a weak partial derivative
with respect to $x_j$ satisfying $\partial_{x_j}G_j\in L^\infty(\Lambda,\fdom(\omega^{-1}))$, for all
$j\in\{1,\ldots,\nu\}$. This easily implies that $\V{G}$ has a weak divergence given by 
$\mathrm{div}\,\V{G}=q:=\sum_{j=1}^\nu\partial_{x_j}G_j\in L^\infty(\Lambda,\fdom(\omega^{-1}))$
and thus strengthens the hypothesis of Prop.~\ref{proprep}(1).
Let $\Psi\in\dom(H_{\Lambda,\mathrm{D}}^{\V{0},\V{A},V})$.
Then the following alternative formula is valid, with the third term on its right hand side
defined in analogy to \eqref{shorty0},
\begin{align}\label{bakithi1alt}
H_{\Lambda,\mathrm{D}}^{\V{G},\V{A},V}\Psi
&=H_{\Lambda,\mathrm{D}}^{\V{0},\V{A},V}\Psi-\frac{1}{2}\vp(\V{G})\cdot\ol{\V{w}}\Psi
-\frac{1}{2}\V{w}^*\cdot\vp(\V{G})\Psi
+\frac{1}{2}\vp(\V{G})^2\Psi.
\end{align}

To see this we apply Lem.~\ref{lemmagnLeibniz} to the vectors
$\V{G}^{(j)}$ with components $G^{(j)}_{\ell}:=\delta_{j,\ell}G_j$, $j,\ell\in\{1,\ldots,\nu\}$,
which reveals that 
$
w_j^*\vp(G_j)\Psi=\vp(G_j)\ol{w}_j\Psi-i\vp(\partial_{x_j}G_j)\Psi
$.
(Recall that $\dom(H_{\Lambda,\mathrm{D}}^{\V{0},\V{A},V})\subset\dom(\theta)
\cap\bigcap_{j=1}^\nu\dom(\theta^\eh w_j^*)$ according to Lem.~\ref{lem-free-ham}(3).) 
Summing these identities over $j$, we see that
$\V{w}^*\cdot\vp(\V{G})\Psi=\vp(\V{G})\cdot\ol{\V{w}}\Psi-i\vp(q)\Psi$, whence \eqref{bakithi1alt}
follows from \eqref{bakithi1}.
\end{remark}

The next lemma already determines the domain of the Dirichlet-Pauli-Fierz operator when the
dispersion relation is sufficiently large compared to $\V{G}$. (Choose $m=0$ in the lemma.)
This is a direct analogue of the well-known weak coupling result \cite{BFS1998b}.

\begin{lemma}\label{lemrbK}
%
%
Let $\V{A}\in L^2_\loc(\Lambda,\RR^\nu)$ and $V_\pm\in L^1_\loc(\Lambda,\RR)$, $V_\pm\ge0$, 
satisfy \eqref{KLMN} with $\diamond=\mathrm{D}$ for some $a\in[0,1)$ and $b>0$. 
Let $\alpha\ge1$, $m\ge0$, and set 
$$
\tilde{\omega}:=\left\{\begin{array}{ll}\omega,& \;\text{if $m=0$,}
\\ 1, &\;\text{if $m>0$.}\end{array}\right.
$$ 
Let $\V{G}\in L^\infty(\Lambda,\fdom(\tilde{\omega}^{-1})^\nu)$ have a weak divergence 
$q:=\Div\V{G}\in L^\infty(\Lambda,\fdom(\tilde{\omega}^{-1}))$, and write 
$$
c_{\V{G}}:=2\|\V{G}\|_{L^\infty(\Lambda,\fdom(\tilde{\omega}^{-1})^\nu)}\quad
\text{and} \quad c_q:=\|q\|_{L^\infty(\Lambda,\fdom(\tilde{\omega}^{-1}))}.
$$
Abbreviate 
$$
K^{\V{0}}:=H_{\Lambda,\mathrm{D},\alpha\omega+m}^{\V{0},\V{A},V},
$$
and define the operator $K^{\V{G}}$ by $\dom(K^{\V{G}}):=\dom(K^{\V{0}})$ and
$$
K^{\V{G}}\Psi:=H_{\Lambda,\mathrm{D}}^{\V{G},\V{A},V}\Psi+\Id\Gamma((\alpha-1)\omega+m)\Psi, 
\quad\Psi\in\dom(K^{\V{G}}).
$$ 
Finally, set $\beta:=\alpha$, if $m=0$, and $\beta=m$, if $m>0$.
Then there exist constants $\gamma_0,\gamma_1>0$ depending only on
$a$, $b$, $c_{\V{G}}$, and $c_q$ such that the following bounds hold, for all
$\Psi\in\dom(K^{\V{0}})$, provided that $\beta\ge\gamma_0$,
%
%
\begin{align}\label{bakithi2}
\|(K^{\V{G}}-K^{\V{0}})\Psi\|&=
\|(H_{\Lambda,\mathrm{D}}^{\V{G},\V{A},V}-H_{\Lambda,\mathrm{D}}^{\V{0},\V{A},V})\Psi\|
\le\frac{1}{2}\|K^{\V{0}}\Psi\|+\gamma_1\|\Psi\|,
\\\label{letitia}
\|\Id\Gamma(\alpha\omega+m)\Psi\|&\le\|K^{\V{0}}\Psi\|+\|\Psi\|
\le2\|K^{\V{G}}\Psi\|+(2\gamma_1+1)\|\Psi\|,
\\\label{letitia2}
\|H_{\Lambda,\mathrm{D}}^{\V{G},\V{A},V}\Psi\|&\le3\|K^{\V{G}}\Psi\|+(2\gamma_1+1)\|\Psi\|.
\end{align}
In particular, $K^{\V{G}}$ is self-adjoint, in fact equal to 
$H^{\V{G},\V{A},V}_{\Lambda,\mathrm{D},\alpha\omega+m}$,
and it has the same operator cores as $K^{\V{0}}$, again provided that $\beta\ge\gamma_0$.
\end{lemma}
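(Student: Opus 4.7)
The plan is to view $K^{\V{G}}$ as a Kato-Rellich perturbation of $K^{\V{0}}$. The first step is to establish that $\dom(K^{\V{0}})\subset\dom(H_{\Lambda,\mathrm{D}}^{\V{G},\V{A},V})$ and derive an explicit formula for $K^{\V{G}}-K^{\V{0}}$ on this domain. Since $\alpha\omega+m\ge\omega$, Lem.~\ref{lem-free-ham}(1) yields $\dom(K^{\V{0}})\subset\dom(H_{\Lambda,\mathrm{D}}^{\V{0},\V{A},V})$, so that Prop.~\ref{proprep}(1) applies in the case $m=0$ while Prop.~\ref{proprep}(2) applies in the case $m>0$. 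Using the identity $K^{\V{0}}=H_{\Lambda,\mathrm{D}}^{\V{0},\V{A},V}+\Id\Gamma((\alpha-1)\omega+m)$ on $\dom(K^{\V{0}})$, the added $\Id\Gamma$-terms in the definitions of $K^{\V{G}}$ and $K^{\V{0}}$ cancel and \eqref{bakithi1} gives
\begin{align*}
(K^{\V{G}}-K^{\V{0}})\Psi=-\vp(\V{G})\cdot\ol{\V{w}}\Psi+\frac{1}{2}\vp(\V{G})^2\Psi+\frac{i}{2}\vp(q)\Psi,\quad\Psi\in\dom(K^{\V{0}}).
\end{align*}

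The heart of the argument is the relative bound \eqref{bakithi2}. I will estimate the three terms on the right-hand side above by repeatedly using the Fock-space bounds \eqref{rbvp} and \eqref{rbvp2} with the generic dispersion relation $\omega$ replaced throughout by $\Omega:=\alpha\omega+m$, combined with Lem.~\ref{lem-free-ham}(3) (applied with $\theta=1+\Id\Gamma(\Omega)$) to control $\|(1+\Id\Gamma(\Omega))^{1/2}\ol{w}_j\Psi\|$. The key quantitative observation is that $\|\Omega^{-1/2}h\|_{\HP}\le\beta^{-1/2}\|h\|_{\fdom(\tilde{\omega}^{-1})}$ for every $h\in\fdom(\tilde{\omega}^{-1})$, since $\Omega\ge\beta\omega$ when $m=0$ and $\Omega\ge\beta$ when $m>0$. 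Applied to $q$ and to the components of $\V{G}$ this introduces a factor $\beta^{-1/2}$ into each of the bounds. Combining with Lem.~\ref{lem-free-ham}(1) to dominate $\|(T_{\Lambda,\mathrm{D}}^{\V{0},\V{A},V}+1)\Psi\|$ and $\|(1+\Id\Gamma(\Omega))\Psi\|$ by $\|K^{\V{0}}\Psi\|+\|\Psi\|$, and invoking Cauchy-Schwarz together with elementary AM-GM to separate the $\|K^{\V{0}}\Psi\|$ and $\|\Psi\|$ contributions, one arrives at an estimate of the form $\|(K^{\V{G}}-K^{\V{0}})\Psi\|\le C\beta^{-1/2}\|K^{\V{0}}\Psi\|+\gamma_1\|\Psi\|$ with $C$ and $\gamma_1$ depending only on $a,b,c_{\V{G}},c_q$. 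Setting $\gamma_0:=(2C)^2$ yields \eqref{bakithi2}.

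Everything else is bookkeeping. Bound \eqref{letitia} combines Lem.~\ref{lem-free-ham}(1) with \eqref{bakithi2} via the triangle inequality, and \eqref{letitia2} follows from $H_{\Lambda,\mathrm{D}}^{\V{G},\V{A},V}=K^{\V{G}}-\Id\Gamma((\alpha-1)\omega+m)$ together with the operator domination $0\le\Id\Gamma((\alpha-1)\omega+m)\le\Id\Gamma(\alpha\omega+m)$ and \eqref{letitia}. The Kato-Rellich theorem, applied in view of \eqref{bakithi2} whose relative bound is $1/2<1$, shows that $K^{\V{G}}$ is self-adjoint on $\dom(K^{\V{0}})$; the graph-norm equivalence implicit in \eqref{bakithi2} and its consequence $\|K^{\V{0}}\Psi\|\le2\|K^{\V{G}}\Psi\|+2\gamma_1\|\Psi\|$ then ensures that $K^{\V{G}}$ and $K^{\V{0}}$ share the same operator cores. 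Finally, to identify $K^{\V{G}}$ with $H^{\V{G},\V{A},V}_{\Lambda,\mathrm{D},\alpha\omega+m}$, I reapply Prop.~\ref{proprep} with $\omega$ replaced by $\alpha\omega$ (while keeping the mass at $m$, so that the effective dispersion is $\alpha\omega+m$): the conclusion is $\dom(K^{\V{0}})\subset\dom(H^{\V{G},\V{A},V}_{\Lambda,\mathrm{D},\alpha\omega+m})$ together with $H^{\V{G},\V{A},V}_{\Lambda,\mathrm{D},\alpha\omega+m}\Psi=K^{\V{G}}\Psi$ on this domain, and since two self-adjoint operators one of which extends the other must agree, the identification follows. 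The main difficulty will be the bookkeeping of the various dispersion relations; once the $\beta^{-1/2}$-scaling is recognized as the source of smallness, the estimates are a careful but essentially mechanical application of the Fock-space bounds from Subsect.~\ref{ssec-Fock}.
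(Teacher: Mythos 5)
Your overall architecture---derive the explicit formula for $K^{\V{G}}-K^{\V{0}}$ from Prop.~\ref{proprep} together with Lem.~\ref{lem-free-ham}(1), establish a relative bound with constant $<1$, then deduce self-adjointness, the identification with $H^{\V{G},\V{A},V}_{\Lambda,\mathrm{D},\alpha\omega+m}$, and the equality of cores via Kato--Rellich---matches the paper's. However, the central estimate \eqref{bakithi2} has a genuine gap.

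You propose to apply \eqref{rbvp} and \eqref{rbvp2} with $\omega$ replaced by $\Omega:=\alpha\omega+m$ and claim that $\|\Omega^{-1/2}h\|\le\beta^{-1/2}\|h\|_{\fdom(\tilde{\omega}^{-1})}$ ``introduces a factor $\beta^{-1/2}$ into each of the bounds.'' It does not: the bounds \eqref{rbvp} and \eqref{rbvp2} contain the factor $\|(\Omega^{-1/2}\vee1)f\|$, not $\|\Omega^{-1/2}f\|$, and the $\vee1$ contribution (coming from the creation-operator part of $\vp(f)$) is bounded below by $\|f\|_{\HP}$, hence does not shrink as $\beta\to\infty$. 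Concretely, $\frac12\vp(\V{G})^2\Psi$ is bounded via \eqref{rbvp2} by a constant times $\|(\Omega^{-1/2}\vee1)\V{G}\|_\infty^2\,\|(1+\Id\Gamma(\Omega))\Psi\|$, and since $\|(1+\Id\Gamma(\Omega))\Psi\|$ is comparable to $\|K^{\V{0}}\Psi\|+\|\Psi\|$ by Lem.~\ref{lem-free-ham}(1) while the prefactor is bounded below by $\|\V{G}\|_{L^\infty(\Lambda,\HP^\nu)}^2$, you end up with a fixed, non-small multiple of $\|K^{\V{0}}\Psi\|$. The mixed term $\vp(\V{G})\cdot\ol{\V{w}}\Psi$ has the same problem. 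So your scheme does not yield a relative bound that becomes small for large $\beta$.

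The paper avoids this by applying \eqref{rbvp}, \eqref{rbvp2} and \eqref{bakithi1b} with the \emph{small, fixed} dispersion $\tilde\omega$ (not $\Omega$). This yields a bound of the schematic form $\|(H^{\V{G}}_{\mathrm{D}}-H^{\V{0}}_{\mathrm{D}})\Psi\|\le C_1\|(T^{\V{0},\V{A},V}_{\Lambda,\mathrm{D}}+1)\Psi\|^{1/2}\|\tilde\theta\Psi\|^{1/2}+C_2\|\tilde\theta\Psi\|$ with $\tilde\theta:=1+\Id\Gamma(\tilde\omega)$ and $C_1,C_2$ depending only on $a,b,c_{\V G},c_q$. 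The $\beta^{-1}$ smallness then enters through the \emph{operator} inequality $\Id\Gamma(\tilde\omega)\le\beta^{-1}\Id\Gamma(\alpha\omega+m)$, implied by $\tilde\omega\le\beta^{-1}(\alpha\omega+m)$, which gives $\|\tilde\theta\Psi\|\le\|\Psi\|+\beta^{-1}\|\Id\Gamma(\alpha\omega+m)\Psi\|\le(1+\beta^{-1})\|\Psi\|+\beta^{-1}\|K^{\V{0}}\Psi\|$; that, combined with $\|(T^{\V{0},\V{A},V}_{\Lambda,\mathrm{D}}+1)\Psi\|\le\|K^{\V{0}}\Psi\|+2\|\Psi\|$, produces the $\beta^{-1/2}$ in front of $\|K^{\V{0}}\Psi\|$. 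You should rework your Step~2 with $\tilde\omega$ rather than $\Omega$ as the dispersion in the Fock-space estimates; the rest of your proof then goes through.
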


\begin{proof}
On account of Lem.~\ref{lem-free-ham}(1) and Prop.~\ref{proprep}, $K^{\V{G}}$ is a well-defined
restriction of the self-adjoint operator $H_{\Lambda,\mathrm{D},\alpha\omega+m}^{\V{G},\V{A},V}$.
Hence, if $K^{\V{G}}$ is self-adjoint, then these two operators agree.
The identity in \eqref{bakithi2} follows from \eqref{bakithi1} and Lem.~\ref{lem-free-ham}(1). 
Due to Lem.~\ref{lem-free-ham}(1) we further have
$\dom(K^{\V{0}})\subset\dom(H_{\Lambda,\mathrm{D},\tilde{\omega}}^{\V{0},\V{A},V})$.
Therefore, \eqref{rbvp}, \eqref{rbvp2}, \eqref{bakithi1b} applied to the dispersion relation 
$\tilde{\omega}$, and the representation in \eqref{bakithi1} entail
\begin{align*}
\|(H_{\Lambda,\mathrm{D}}^{\V{G},\V{A},V}&-H_{\Lambda,\mathrm{D}}^{\V{0},\V{A},V})\Psi\|
\\
&\le 2^\eh c_{\V{G}}\sqrt{\frac{1+b}{1-a}}
\|(T_{\Lambda,\mathrm{D}}^{\V{0},\V{A},V}+1)\Psi\|^\eh\|\tilde{\theta}\Psi\|^\eh
+(c_q+c_{\V{G}}^2)\|\tilde{\theta}\Psi\|,
\end{align*}
with $\tilde{\theta}:=1+\Id\Gamma(\tilde{\omega})$.
In view of the bound in Lem.~\ref{lem-free-ham}(1) and
$\tilde{\omega}\le\beta^{-1}(\alpha\omega+m)$ this yields the inequality in \eqref{bakithi2}. 
Now the last assertion of the lemma follows from the Kato-Rellich theorem.

Furthermore, the first bound in \eqref{letitia} follows from Lem.~\ref{lem-free-ham}(1)
while the second one is a consequence of \eqref{bakithi2}. Finally, \eqref{letitia2}
is implied by \eqref{letitia} and the relation
$H_{\Lambda,\mathrm{D}}^{\V{G},\V{A},V}\Psi=K^{\V{G}}\Psi-\Id\Gamma((\alpha-1)\omega+m)\Psi$,
where $0\le(\alpha-1)\omega\le\alpha\omega$.
\end{proof}

The next theorem extends a result of \cite{Falconi2015}. The idea to use operators like
$K^{\V{G}}$, which are more manifestly self-adjoint thanks to an artificially enlarged dispersion 
relation, as comparison operators in an application of Nelson's commutator theorem goes back to
M.~K\"{o}nenberg \cite[Lem.~3.1]{KMS2013}.

\begin{theorem}\label{thmMM}
Let $\V{A}\in L^2_\loc(\Lambda,\RR^\nu)$ and $V_\pm\in L^1_\loc(\Lambda,\RR)$, $V_\pm\ge0$, 
satisfy \eqref{KLMN} with $\diamond=\mathrm{D}$ for some $a\in[0,1)$ and $b>0$. 
Assume that $\V{G}\in L^\infty(\Lambda,\HP^\nu)$ has a weak divergence 
$q:=\Div\V{G}\in L^\infty(\Lambda,\HP)$.
Then every core for $H_{\Lambda,\mathrm{D},\omega+1}^{\V{0},\V{A},V}$ is a core for
$H_{\Lambda,\mathrm{D}}^{\V{G},\V{A},V}$ as well.
\end{theorem}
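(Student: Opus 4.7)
The plan is to deduce the claim from Nelson's commutator theorem applied to $A:=H_{\Lambda,\mathrm{D}}^{\V{G},\V{A},V}$ with an artificially inflated Pauli--Fierz operator as the comparison, following the idea of \cite[Lem.~3.1]{KMS2013}. After adding a constant to $V$ (which does not affect cores), one may assume $\mathfrak{s}_{\Lambda,\mathrm{D}}^{\V{A},V}\ge 0$, so that Lem.~\ref{lem-free-ham} applies for every choice of dispersion relation.

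I would first fix a mass $m\ge\gamma_0$, where $\gamma_0$ is the constant from Lem.~\ref{lemrbK} used with $\alpha=1$ and $\tilde{\omega}=1$ (so $\beta=m$); note $c_{\V{G}},c_q<\infty$ by the hypotheses. That lemma then provides the self-adjoint operator $K^{\V{G}}:=H_{\Lambda,\mathrm{D},\omega+m}^{\V{G},\V{A},V}$ with $\dom(K^{\V{G}})=\dom(K^{\V{0}})$, writing $K^{\V{0}}:=H_{\Lambda,\mathrm{D},\omega+m}^{\V{0},\V{A},V}$, and guarantees that $K^{\V{G}}$ and $K^{\V{0}}$ share the same operator cores. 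Pick $c_0>0$ with $N:=K^{\V{G}}+c_0\ge 1$. By Lem.~\ref{lem-free-ham}(1) applied to both $\omega+1$ and $\omega+m$, the operators $H_{\Lambda,\mathrm{D},\omega+1}^{\V{0},\V{A},V}$ and $K^{\V{0}}$ equal $T_{\Lambda,\mathrm{D}}^{\V{0},\V{A},V}+\Id\Gamma(\omega+\bullet)$ on the same domain; using that $T_{\Lambda,\mathrm{D}}^{\V{0},\V{A},V}\ge 0$ commutes with the positive mutually commuting operators $\Id\Gamma(\omega+1)\le\Id\Gamma(\omega+m)\le m\,\Id\Gamma(\omega+1)$, their graph norms are equivalent. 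Hence any core $\sC$ of $H_{\Lambda,\mathrm{D},\omega+1}^{\V{0},\V{A},V}$ is a core of $K^{\V{0}}$, thus of $K^{\V{G}}$, and thus of $N$.

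For the hypotheses of Nelson's commutator theorem applied to $A$ and $N$, the relative bound $\|A\psi\|\le c\|N\psi\|$ on $\dom(N)$ follows from Prop.~\ref{proprep}(2) (which yields $\dom(N)\subset\dom(A)$ together with the operator identity $K^{\V{G}}=A+m\,\Id\Gamma(1)$ on $\dom(N)$) combined with \eqref{letitia2}. The commutator bound $|\langle A\psi,N\psi\rangle-\langle N\psi,A\psi\rangle|\le d\langle\psi,N\psi\rangle$ reduces, via this same identity, to controlling
\[
m\bigl(\langle A\psi,\Id\Gamma(1)\psi\rangle-\langle\Id\Gamma(1)\psi,A\psi\rangle\bigr),\quad\psi\in\dom(N).
\]
The classical kinetic/potential part $T_{\Lambda,\mathrm{D}}^{\V{0},\V{A},V}$ (acting on position variables) and $\Id\Gamma(\omega)$ (diagonal in the particle-number grading) both commute with $\Id\Gamma(1)$, so only the three interaction terms of \eqref{bakithi1}, namely $-\vp(\V{G})\cdot\ol{\V{w}}$, $\tfrac{1}{2}\vp(\V{G})^2$, and $\tfrac{i}{2}\vp(q)$, contribute. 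Using \eqref{CRvpdG} with $\omega$ replaced by $1$, each such commutator equals (up to a bounded numerical multiple) a field operator of $i\V{G}$ or $iq$ composed with another field operator or with $\ol{\V{w}}$; the standard bounds \eqref{rbvp}, \eqref{rbvp2}, and \eqref{bakithi1b} then control the resulting quadratic form by $\langle\psi,N\psi\rangle$, noting that $\|\ol{w}_j\psi\|^2$ is dominated by $\langle\psi,(T_{\Lambda,\mathrm{D}}^{\V{0},\V{A},V}+\mathrm{const})\psi\rangle\le\langle\psi,N\psi\rangle+\mathrm{const}\,\|\psi\|^2$.

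The main technical obstacle is to justify these formal commutator manipulations rigorously on the abstract domain $\dom(N)=\dom(K^{\V{0}})$, whose elements lack a priori smoothness in the Fock variable. The remedy is the regularization $\theta_\ve:=1+\ve\,\Id\Gamma(\omega)$ used throughout Section~\ref{sec-ham}: derive each quadratic-form commutator identity first on $\theta_\ve^{-1/2}\psi$ (for which the interaction terms are well-behaved by Lem.~\ref{lem-vstar}) and pass to the limit $\ve\downarrow 0$, with uniform control on all error terms provided by Lem.~\ref{lem-comm}. Once the commutator estimate holds on all of $\dom(N)$, Nelson's theorem yields essential self-adjointness of $A$ on every core of $N$, in particular on $\sC$.
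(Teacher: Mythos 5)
Your overall strategy -- Nelson's commutator theorem with $N:=K^{\V{G}}+\mathrm{const}$ built from Lem.~\ref{lemrbK} (with $\alpha=1$, $m$ large) as the comparison operator, $\dom(N)\subset\dom(H_{\Lambda,\mathrm{D}}^{\V{G},\V{A},V})$ coming from Prop.~\ref{proprep}(2), and the commutator reducing via $K^{\V{G}}=H_{\Lambda,\mathrm{D}}^{\V{G},\V{A},V}+m\,\Id\Gamma(1)$ to the interaction terms alone -- is exactly the route taken in the paper.

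The one genuine deviation is your final paragraph, and it introduces an unnecessary complication. You worry about justifying the commutator manipulations ``on the abstract domain $\dom(N)=\dom(K^{\V{0}})$, whose elements lack a priori smoothness in the Fock variable'' and propose a $\theta_\ve$-regularization and a limit $\ve\downarrow 0$. That machinery is not needed: the hypotheses of Nelson's commutator theorem need only be checked on a \emph{core} of $N$, and the paper simply picks the smooth core $\dom(S_{\Lambda,\mathrm{D}}^{\V{A},V})\otimes\dom(\Id\Gamma(\omega+m)^2)$, which is a core for $K^{\V{G}}$ by Lem.~\ref{lemrbK} and Lem.~\ref{lem-free-ham}(2). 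On this core, $\Psi$, $\Id\Gamma(m)\Psi$, and all the interaction terms live in the right spaces outright, and the estimate $|\Im\SPn{K^{\V{G}}\Psi}{H_{\Lambda,\mathrm{D}}^{\V{G},\V{A},V}\Psi}|\lesssim\SPn{\Psi}{N\Psi}$ is obtained directly from the integration-by-parts identity \eqref{bakithi100div} (extended to $\Phi=\Id\Gamma(m)\Psi\in\dom(\mathfrak{t}_{\Lambda,\mathrm{D}}^{\V{0},\V{A},0})$ via Lem.~\ref{lemstST}(1)), the commutation relation \eqref{CRvpdG}, the equivalence of form norms from Lem.~\ref{lem-ida}, and the bounds \eqref{rbvp}, \eqref{rbvp2}, \eqref{harald3}. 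Your sketch does gesture at these bounds but omits the integration-by-parts step, which is the mechanism that actually produces the quadratic-form commutator $\sim m\,\Re\SPn{w_j^*\Psi}{\vp(iG_j)\Psi}$. Finally, be aware that $\theta_\ve^{-1/2}$ does not commute with $K^{\V{G}}$ (since $\vp(G_j)$ and $\Id\Gamma(\omega)$ do not commute), so the regularization route would require controlling exactly the kind of commutator errors you are trying to avoid, making the ``work on a core'' shortcut the decisive simplification you should adopt.
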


\begin{proof}
To start with we observe that, in view of Lem.~\ref{lem-free-ham}(1), the graph norms
of $H_{\Lambda,\mathrm{D},\omega+1}^{\V{0},\V{A},V}$ and any 
$H_{\Lambda,\mathrm{D},\omega+m}^{\V{0},\V{A},V}$ with $m>0$ are equivalent and both operators
have the same domain and the same operator cores. In what follows we shall employ
Lem.~\ref{lemrbK} with $\alpha=1$ and choose $m>0$ so large that the operator given by
$K^{\V{G}}\Psi=H^{\V{G},\V{A},V}_{\Lambda,\mathrm{D}}\Psi+\Id\Gamma(m)\Psi$,
$\Psi\in\dom(K^{\V{G}})=\dom(H_{\Lambda,\mathrm{D},\omega+1}^{\V{0},\V{A},V})$,
is self-adjoint, in fact equal to $H^{\V{G},\V{A},V}_{\Lambda,\mathrm{D},\omega+m}$,
satisfies \eqref{letitia2} for all $\Psi\in\dom(K^{\V{G}})$, and has the same
operator cores as $H_{\Lambda,\mathrm{D},\omega+1}^{\V{0},\V{A},V}$.

Let $\Psi\in\dom(S_{\Lambda,\mathrm{D}}^{\V{A},V})\otimes\dom(\Id\Gamma(\omega+m)^2)$,
which is a core for $K^{\V{G}}$ according to the previous remarks and Lem.~\ref{lem-free-ham}(2).
By Prop.~\ref{proprep}(2), $\Psi\in\dom(H_{\Lambda,\mathrm{D}}^{\V{G},\V{A},V})$ and we may 
use \eqref{bakithi1} to represent $H_{\Lambda,\mathrm{D}}^{\V{G},\V{A},V}\Psi$. We thus find
\begin{align}\nonumber
-&2\Im\SPn{K^{\V{G}}\Psi}{H_{\Lambda,\mathrm{D}}^{\V{G},\V{A},V}\Psi}
\\\nonumber
&=-2\Im\SPn{\Id\Gamma(m)\Psi}{H_{\Lambda,\mathrm{D}}^{\V{G},\V{A},V}\Psi}
\\\label{vinnie1}
&=\Im\SPn{\Id\Gamma(m)\Psi}{2\vp(\V{G})\cdot\V{w}^*\Psi-i\vp(q)\Psi}
-\Im\SPn{\Id\Gamma(m)\Psi}{\vp(\V{G})^2\Psi}.
\end{align}
Noticing that the integration by parts formula \eqref{bakithi100div} extends to all
$\Phi\in\dom(\mathfrak{t}_{\Lambda,\mathrm{D}}^{\V{0},\V{A},0})$
and that $\Id\Gamma(m)\Psi\in\dom(\mathfrak{t}_{\Lambda,\mathrm{D}}^{\V{0},\V{A},0})$ 
by Lem.~\ref{lemstST}(1), we further obtain
\begin{align*}
\big|&\Im\SPn{\Id\Gamma(m)\Psi}{2\vp(\V{G})\cdot\V{w}^*\Psi-i\vp(q)\Psi}\big|
\\
&=\Big|\sum_{j=1}^\nu\big(\Im\SPn{\Id\Gamma(m)\Psi}{\vp({G_j}){w}_j^*\Psi}
+\Im\SPn{\Id\Gamma(m){w}_j^*\Psi}{\vp({G_j})\Psi}\big)\Big|
\\
&\le m\sum_{j=1}^\nu\big|\SPn{\vp(i{G_j})\Psi}{{w}_j^*\Psi}\big|
\\
&\le2^{\nf{3}{2}}m^\eh\|\V{G}\|_{L^\infty(\Lambda,\HP^\nu)}\|(m+\Id\Gamma(m))^\eh\Psi\|
\mathfrak{t}_{\Lambda,\mathrm{D}}^{\V{0},\V{A},0}[\Psi]^\eh,
\end{align*}
where we wrote out the imaginary parts and applied \eqref{CRvpdG} in the penultimate step.
In the second and penultimate steps we also used Rem.~\ref{remwCCw}(1).
The identity \eqref{CRvpdG} reveals that $\vp(G_{j,\V{x}})$ maps $\dom(\Id\Gamma(1)^{\nf{3}{2}})$
into $\dom(\Id\Gamma(1))$ and in view of this it further implies
\begin{align*}
\Im\SPn{\Id\Gamma(m)\Psi}{\vp(\V{G})^2\Psi}=m\sum_{j=1}^\nu\Im\SPn{i\vp(iG_j)\Psi}{\vp(G_j)\Psi}.
\end{align*}
Putting these remarks together and employing \eqref{harald3} we deduce that
\begin{align*}
|\Im\SPn{K^{\V{G}}\Psi}{H_{\Lambda,\mathrm{D}}^{\V{G},\V{A},V}\Psi}|&\le
c(m+\|\V{G}\|_{L^\infty(\Lambda,\HP^\nu)}^2)\big(
\mathfrak{q}_{\Lambda,\mathrm{D},\omega+m}^{\V{0},\V{A},V_+}[\Psi]+m\|\Psi\|^2\big),
\end{align*}
for some universal constant $c>0$. Since by Lem.~\ref{lem-ida} (applied to the dispersion relation
$\omega+m$) and \eqref{harald3} the form norms of 
$\mathfrak{q}_{\Lambda,\mathrm{D},\omega+m}^{\V{0},\V{A},V_+}$
and $\mathfrak{q}_{\Lambda,\mathrm{D},\omega+m}^{\V{G},\V{A},V}$
are equivalent, and since $K^{\V{G}}=H^{\V{G},\V{A},V}_{\Lambda,\mathrm{D},\omega+m}$,
the assertion now follows from Nelson's commutator theorem.
\end{proof}

The next lemma will be used in the crucial Step~2 of the proof of Thm.~\ref{thm-dom}.

\begin{lemma}\label{lemecho}
Let $\V{A}\in L^2_\loc(\Lambda,\RR^\nu)$, 
$\V{G}\in L^\infty(\Lambda,\fdom(\omega^{-1}+\omega)^\nu)$, 
$\diamond\in\{\mathrm{D},\mathrm{N}\}$, 
and assume that $V_\pm\ge0$ satisfy \eqref{KLMN} for some $a\in[0,1)$ and $b\ge0$.
Pick some $j\in\{1,\ldots,\nu\}$ and 
$\Psi\in\dom(S_{\Lambda,\diamond}^{\V{A},V})\otimes\dom(\theta^{\nf{3}{2}})$.
Then $\theta\Psi\in\dom(\mathfrak{t}_{\Lambda,\diamond}^{\V{G},\V{A},V})
\cap\dom(\mathfrak{f}_\Lambda)$, $\theta^\eh\Psi\in\dom({v}_j^*)$, and
the following bound holds for every $\beta>0$,
\begin{align*}
\Re\SPn{{v}_j^*\theta\Psi}{{v}_j^*\Psi}&\ge(1-\beta)\|{v}_j^*\theta^\eh\Psi\|^2
-4\Big(1+\frac{1}{\beta}\Big)\|\omega^\eh{G}_{j}\|_{L^\infty(\Lambda,\HP)}^2\|\theta^\eh\Psi\|^2.
\end{align*}
\end{lemma}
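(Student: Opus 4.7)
The plan is to reduce the inequality to the commutator calculus of Lem.~\ref{lem-comm} combined with the identity $w_j^*\theta^\eh=\theta^\eh w_j^*$ from Lem.~\ref{lem-free-ham}(3). I abbreviate $\Phi:=\theta^\eh\Psi$.

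Since $\Psi\in\dom(S_{\Lambda,\diamond}^{\V{A},V})\otimes\dom(\theta^{\nf{3}{2}})$, both $\Phi$ and $\theta\Psi$ lie in $\dom(S_{\Lambda,\diamond}^{\V{A},V})\otimes\dom(\theta)\subset\dom(T_{\Lambda,\diamond}^{\V{0},\V{A},V})\cap L^2(\Lambda,\dom(\Id\Gamma(\omega)))=\dom(H_{\Lambda,\diamond}^{\V{0},\V{A},V})$ by Lems.~\ref{lemstST}(4) and \ref{lem-free-ham}(1). Rem.~\ref{remvj}(1) then yields $\Phi,\theta\Psi\in\dom(v_k^*)$ for every $k$; combined with the equivalence of the form norms of $\mathfrak{q}_{\Lambda,\diamond}^{\V{G},\V{A},V}$ and $\mathfrak{q}_{\Lambda,\diamond}^{\V{0},\V{A},V}$ (Lem.~\ref{lem-ida}) and Lem.~\ref{lem-egon} (when $\diamond=\mathrm{D}$), this places $\theta\Psi$ in $\dom(\mathfrak{t}_{\Lambda,\diamond}^{\V{G},\V{A},V})\cap\dom(\mathfrak{f}_\Lambda)$.

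For the inequality I rely on two identities. The decomposition $v_j^*=w_j^*-\vp(G_j)$ and the definition of $T(G_j)$ from Lem.~\ref{lem-comm}(2), combined with $w_j^*\theta^\eh=\theta^\eh w_j^*$, yield, for $\chi\in\{\Psi,\Phi\}$,
\[
v_j^*\theta^\eh\chi=\theta^\eh v_j^*\chi+T(G_j)\theta^\eh\chi,
\]
giving {\rm (A)}~$\theta^\eh v_j^*\Psi=v_j^*\Phi-T(G_j)\Phi$ and {\rm (B)}~$v_j^*\theta\Psi=\theta^\eh v_j^*\Phi+T(G_j)\theta\Psi$. A direct one-line expansion using the formulas for $T(G_j)$ and $T(G_j)^*=C_1(G_j)\theta^\eh$ from Lem.~\ref{lem-comm}(2) produces the algebraic relation
\[
T(G_j)\theta^\eh\chi=-\theta^\eh T(G_j)^*\chi,\qquad\chi\in\dom(\theta),
\]
where $T(G_j)^*\chi\in\dom(\theta^\eh)$ by Lem.~\ref{lem-comm}(3) applied to $\vp(G_j)\chi$. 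Applied at $\chi=\Phi$ this yields $T(G_j)\theta\Psi=-\theta^\eh T(G_j)^*\Phi$. Substituting into (B), taking the inner product with $v_j^*\Psi$, transferring $\theta^\eh$ to the right factor in each summand, and applying (A) produces
\[
\SPn{v_j^*\theta\Psi}{v_j^*\Psi}=\|v_j^*\Phi\|^2-\SPn{v_j^*\Phi}{T(G_j)\Phi}-\SPn{T(G_j)^*\Phi}{v_j^*\Phi-T(G_j)\Phi}.
\]
Taking real parts, invoking Cauchy-Schwarz and the bound $\|T(G_j)\Phi\|\vee\|T(G_j)^*\Phi\|\le 2\|\omega^\eh G_j\|_{L^\infty(\Lambda,\HP)}\|\Phi\|$ (Lem.~\ref{lem-comm}(2)) on the three cross terms, and finally applying Young's inequality $2xy\le\beta x^2+y^2/\beta$ to the combined term linear in $\|v_j^*\Phi\|$ yields the announced lower bound with the correct constants $(1-\beta)$ and $-4(1+1/\beta)\|\omega^\eh G_j\|_{L^\infty(\Lambda,\HP)}^2$.

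The main technical hurdle is the algebraic identity $T(G_j)\theta^\eh\chi=-\theta^\eh T(G_j)^*\chi$: the cancellation of $\theta^\eh\theta^\mh$ in the expansion must be legitimate, which requires $T(G_j)^*\chi\in\dom(\theta^\eh)$ and therefore reduces to $\vp(G_j)\chi\in\dom(\theta^\eh)$. This is precisely where the regularity assumption $\Psi\in\dom(\theta^{\nf{3}{2}})$, forcing $\Phi\in\dom(\theta)$, enters essentially through Lem.~\ref{lem-comm}(3).
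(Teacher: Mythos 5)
The algebraic core of your proof is correct and matches the paper's: both arguments reduce the quantity $\Re\SPn{v_j^*\theta\Psi}{v_j^*\Psi}$ to the pointwise identity
\[
\theta^\mh\big(v_j^*\theta\Psi\big)(\V{x})=\big(v_j^*\theta^\eh\Psi\big)(\V{x})-T(G_{j,\V{x}})^*\theta^\eh\Psi(\V{x}),\qquad
\theta^\eh\big(v_j^*\Psi\big)(\V{x})=\big(v_j^*\theta^\eh\Psi\big)(\V{x})-T(G_{j,\V{x}})\theta^\eh\Psi(\V{x}),
\]
and then use $\|T(G_{j,\V{x}})\|\le2\|\omega^\eh G_{j,\V{x}}\|$ (Lem.~\ref{lem-comm}(2)) plus Cauchy--Schwarz and Young's inequality to get exactly the stated constants. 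Your intermediate step $T(f)\theta^\eh\chi=-\theta^\eh T(f)^*\chi$ on $\dom(\theta)$ is valid (by Lem.~\ref{lem-comm}(3)) and merely recasts the paper's direct use of the definition $T(f)^*\!\!\upharpoonright_{\dom(\theta)}\subset C_1(f)\theta^\eh$.

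There is, however, a gap in your domain-regularity paragraph. From $\Psi\in\dom(S_{\Lambda,\diamond}^{\V{A},V})\otimes\dom(\theta^{\nf{3}{2}})$ you only get $\theta\Psi\in\dom(S_{\Lambda,\diamond}^{\V{A},V})\otimes\dom(\theta^\eh)$, not $\dom(S_{\Lambda,\diamond}^{\V{A},V})\otimes\dom(\theta)$. Consequently $\theta\Psi\notin L^2(\Lambda,\dom(\Id\Gamma(\omega)))$ in general, so it does not lie in $\dom(H_{\Lambda,\diamond}^{\V{0},\V{A},V})$, and the chain you write is false for $\theta\Psi$ (it is fine for $\Phi=\theta^\eh\Psi$, which is what you actually use in identities (A), (B)). This matters most for $\diamond=\mathrm{D}$: establishing merely that $\theta\Psi\in\dom(v_k^*)$ for all $k$, together with $\theta\Psi\in\fdom(V_+)\cap\dom(\mathfrak{f}_\Lambda)$, puts $\theta\Psi$ in the \emph{maximal} domain but does not by itself place it in the \emph{minimal} form domain $\dom(\mathfrak{t}_{\Lambda,\mathrm{D}}^{\V{G},\V{A},V})$, and Lem.~\ref{lem-ida} and Lem.~\ref{lem-egon} alone do not bridge this. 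The correct and shortest route (the one the paper uses) is to apply Cor.~\ref{corformcore} to $\theta\Psi\in\dom(S_{\Lambda,\mathrm{D}}^{\V{A},V})\otimes\fdom(\Id\Gamma(\omega))$, obtaining directly $\theta\Psi\in\dom(\mathfrak{q}_{\Lambda,\mathrm{D}}^{\V{G},\V{A},V})=\dom(\mathfrak{t}_{\Lambda,\mathrm{D}}^{\V{G},\V{A},V})\cap\dom(\mathfrak{f}_\Lambda)\subset\dom(v_j^*)$, after which your argument goes through.
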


\begin{proof}
Since $\theta\Psi\in\dom(S_{\Lambda,\diamond}^{\V{A},V})\otimes\dom(\theta^{\nf{1}{2}})$, 
Cor.~\ref{corformcore} implies
$\theta\Psi\in\dom(\mathfrak{q}_{\Lambda,\diamond}^{\V{G},\V{A},V})\subset\dom({v}_j^*)$.
Furthermore, let $f\in\dom(S_{\Lambda,\diamond}^{\V{A},V})$ and
$\psi\in\dom(\theta^{\nf{1}{2}})$. Then $f\in\dom({\sf w}_j^*)$ and it follows from 
Rem.~\ref{remwCCw}(1) and Rem.~\ref{remvj}(1) that
$f\psi\in\dom({v}_j^*)$ with ${v}_j^*(f\psi)=({\sf w}_j^*f)\psi-\vp(G_j)(f\psi)$. Since
$\vp(G_{j,\V{x}})$ maps $\dom(\theta)$ into $\dom(\theta^\eh)$ by Lem.~\ref{lem-comm}, 
we may thus write
\begin{align*}
\theta^\mh({v}_j^*\theta\Psi)(\V{x})&=({v}_j^*\theta^\eh\Psi)(\V{x})
-(\theta^\mh\vp(G_{j,\V{x}})-\vp(G_{j,\V{x}})\theta^\mh)\theta\Psi(\V{x})
\\
&=({v}_j^*\theta^\eh\Psi)(\V{x})-T(G_{j,\V{x}})^*\theta^\eh\Psi(\V{x}),
\\
\theta^\eh({v}_j^*\Psi)(\V{x})&=({v}_j^*\theta^\eh\Psi)(\V{x})-T(G_{j,\V{x}})\theta^\eh\Psi(\V{x}),
\end{align*}
for a.e. $\V{x}\in\Lambda$, where the bounded operator
$T(G_{j,\V{x}})$ is defined as in Lem.~\ref{lem-comm}(2).
Combining the above identities we obtain
\begin{align*}
\Re\SPn{({v}_j^*&\theta\Psi)(\V{x})}{({v}_j^*\Psi)(\V{x})}
\\
&\ge(1-\beta)\|({v}_j^*\theta^\eh\Psi)(\V{x})\|^2-\Big(1+\frac{1}{\beta}\Big)\
\|T(G_{j,\V{x}})\|^2\|\theta^\eh\Psi(\V{x})\|^2,
\end{align*}
for a.e. $\V{x}\in\Lambda$, and we conclude by applying the bound in Lem.~\ref{lem-comm}(2).
\end{proof}

Finally, we are in a position to prove the main theorem of this paper. In the subsequent
remark we discuss a simple extension to the case where additional spin degrees of freedom are 
taken into account. 

\begin{theorem}\label{thm-dom}
Let $\V{A}\in L^2_\loc(\Lambda,\RR^\nu)$ and
$\V{G}\in L^\infty(\Lambda,\fdom(\omega^{-1}+\omega)^\nu)$.
Assume that $V_\pm\in L^1_\loc(\Lambda,\RR)$, $V_\pm\ge0$, satisfy the form bound
\eqref{KLMN} with $\diamond=\mathrm{D}$ for some $a\in[0,1)$ and $b>0$. Assume further that 
$\V{G}$ has a weak divergence $q:=\mathrm{div}\V{G}\in L^\infty(\Lambda,\fdom(\omega^{-1}))$. Then 
\begin{align*}
\dom(H_{\Lambda,\mathrm{D}}^{\V{G},\V{A},V})
=\dom(H_{\Lambda,\mathrm{D}}^{\V{0},\V{A},V})=
\dom(T_{\Lambda,\mathrm{D}}^{\V{0},\V{A},V})\cap L^2(\Lambda,\dom(\Id\Gamma(\omega))),
\end{align*}
the graph norms of $H_{\Lambda,\mathrm{D}}^{\V{G},\V{A},V}$ and 
$H_{\Lambda,\mathrm{D}}^{\V{0},\V{A},V}$ are equivalent and, in particular,
both operators have the same cores.
\end{theorem}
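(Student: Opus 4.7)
My plan consists of three parts. The second equality in the theorem is Lemma~\ref{lem-free-ham}(1), and one direction of the first equality with the bound $\|H_{\Lambda,\mathrm{D}}^{\V{G},\V{A},V}\Psi\|\le C(\|H_{\Lambda,\mathrm{D}}^{\V{0},\V{A},V}\Psi\|+\|\Psi\|)$ is already available from Proposition~\ref{proprep}(1): the representation \eqref{bakithi1} expresses $H^{\V{G}}\Psi$ as $H^{\V{0}}\Psi$ plus three field-operator corrections that are controlled by $\|H^{\V{0}}\Psi\|+\|\Psi\|$ via the quantitative bounds \eqref{rbvp}, \eqref{rbvp2} and \eqref{bakithi1b}. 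This leaves the converse inclusion and its graph-norm estimate to be proved.

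I would reduce the remaining statement to a single a priori bound by invoking Lemma~\ref{lemrbK} with $m=0$ and $\alpha\ge\gamma_0$ sufficiently large. Because $\Id\Gamma(\alpha\omega)=\alpha\Id\Gamma(\omega)$, Lemma~\ref{lem-free-ham}(1) identifies
\[
\dom(K^{\V{G}})=\dom(K^{\V{0}})=\dom(T_{\Lambda,\mathrm{D}}^{\V{0},\V{A},V})\cap L^2(\Lambda,\dom(\Id\Gamma(\omega)))=\dom(H_{\Lambda,\mathrm{D}}^{\V{0},\V{A},V}),
\]
with all three graph norms mutually equivalent, and the inequalities \eqref{letitia}--\eqref{letitia2} relate $\|H^{\V{G}}\Psi\|$, $\|K^{\V{G}}\Psi\|$ and $\|\Id\Gamma(\alpha\omega)\Psi\|$. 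Since $K^{\V{G}}=H^{\V{G}}+(\alpha-1)\Id\Gamma(\omega)$, the entire theorem follows from the a priori estimate
\begin{equation}\label{eq-goal}
\|\Id\Gamma(\omega)\Psi\|\le C\big(\|H_{\Lambda,\mathrm{D}}^{\V{G},\V{A},V}\Psi\|+\|\Psi\|\big),\qquad \Psi\in\dom(H_{\Lambda,\mathrm{D}}^{\V{G},\V{A},V}),
\end{equation}
for then $\Psi\in\dom(K^{\V{G}})=\dom(H^{\V{0},\V{A},V}_{\Lambda,\mathrm{D}})$ with the required control of $\|H^{\V{0}}\Psi\|$.

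The hard part is \eqref{eq-goal}. I would first establish it on the dense subspace $\sC_0:=\dom(S_{\Lambda,\mathrm{D}}^{\V{A},V})\otimes\dom(\Id\Gamma(\omega+1)^2)$, which is a core for $H_{\Lambda,\mathrm{D}}^{\V{G},\V{A},V}$ by combining Theorem~\ref{thmMM} (applied to the dispersion $\omega+1$) with Lemma~\ref{lem-free-ham}(2), and which is contained in $\dom(S_{\Lambda,\mathrm{D}}^{\V{A},V})\otimes\dom(\theta^{3/2})$ so that Lemma~\ref{lemecho} is applicable. For $\Psi\in\sC_0$ I would sum the inequality of Lemma~\ref{lemecho} with $\beta=\tfrac12$ over $j=1,\ldots,\nu$ and feed it the polarised form identity
\[
\tfrac12\Re\sum_{j=1}^\nu\SPn{v_j^*\theta\Psi}{v_j^*\Psi}=\Re\SPn{\theta\Psi}{H_{\Lambda,\mathrm{D}}^{\V{G},\V{A},V}\Psi}-\mathfrak{f}_\Lambda[\theta^\eh\Psi]-\int_\Lambda V\,\|\theta^\eh\Psi\|_\sF^2\,\Id\V{x},
\]
obtained from $\mathfrak{q}^{\V{G},\V{A},V}_{\Lambda,\mathrm{D}}[\theta\Psi,\Psi]=\SPn{\theta\Psi}{H^{\V{G}}\Psi}$ (valid since $\theta\Psi\in\dom(\mathfrak{q}^{\V{G},\V{A},V}_{\Lambda,\mathrm{D}})$ by Lemma~\ref{lemecho}) together with the commutation of $\theta$ with multiplication by $V$. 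Absorbing the $V_-$-contribution via \eqref{harald1}--\eqref{harald3} produces $\mathfrak{f}_\Lambda[\theta^\eh\Psi]\le C(|\SPn{\theta\Psi}{H^{\V{G}}\Psi}|+\|\Psi\|^2)$; since $\theta$ commutes with $\Id\Gamma(\omega)$ on $\sF$, functional calculus gives $\mathfrak{f}_\Lambda[\theta^\eh\Psi]=\|\Id\Gamma(\omega)^\eh\Psi\|^2+\|\Id\Gamma(\omega)\Psi\|^2\ge\|\Id\Gamma(\omega)\Psi\|^2$, and a Cauchy--Schwarz absorption together with $\|\theta\Psi\|\le\|\Psi\|+\|\Id\Gamma(\omega)\Psi\|$ yields \eqref{eq-goal} on $\sC_0$. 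Finally, the estimate extends from $\sC_0$ to all of $\dom(H^{\V{G}}_{\Lambda,\mathrm{D}})$ by a standard closedness argument: approximate $\Psi$ in the graph norm by $\Psi_n\in\sC_0$, use $\sup_n\|\Id\Gamma(\omega)\Psi_n\|<\infty$ to extract a weakly convergent subsequence, and invoke closedness of $\Id\Gamma(\omega)$ to identify its limit with $\Id\Gamma(\omega)\Psi$. Equality of operator cores is then automatic from the equivalence of the graph norms of $H^{\V{G}}$ and $H^{\V{0}}$.
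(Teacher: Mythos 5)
Your plan is sound and, in fact, takes a genuinely different route to close the argument than the paper does, even though it uses the same three preparatory ingredients (Prop.~\ref{proprep}(1), Lemma~\ref{lemrbK}, Lemma~\ref{lemecho}) and the same form identity in the heart of the estimate. The paper works exclusively on $\dom(H_{\Lambda,\mathrm{D}}^{\V{0},\V{A},V})$: it proves the lower bound on $D[\Psi]$ there, deduces that $H_{\Lambda,\mathrm{D}}^{\V{G},\V{A},V}\!\!\upharpoonright_{\dom(H^{\V{0}})}$ and $K^{\V{G}}$ have equivalent graph norms (so the restriction is closed), and then applies W\"{u}st's theorem along the deformation $\omega_t=(1-t)\alpha\omega+t\omega$ to conclude that the closed symmetric restriction is in fact self-adjoint, hence equal to $H_{\Lambda,\mathrm{D}}^{\V{G},\V{A},V}$; this keeps Thm.~\ref{thm-dom} independent of Thm.~\ref{thmMM}. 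You instead import Thm.~\ref{thmMM} to know in advance that $\sC_0=\dom(S_{\Lambda,\mathrm{D}}^{\V{A},V})\otimes\dom(\Id\Gamma(\omega+1)^2)$ is a core for $H_{\Lambda,\mathrm{D}}^{\V{G},\V{A},V}$ itself, prove the a priori bound \eqref{eq-goal} on this core, extend it to the whole domain by a weak-compactness/closedness argument, and then use the first representation theorem (your form identity with $\Phi\in\dom(\mathfrak{q}_{\Lambda,\mathrm{D}}^{\V{G},\V{A},V})$) to upgrade $\Psi\in\dom(H_{\Lambda,\mathrm{D}}^{\V{G},\V{A},V})\cap L^2(\Lambda,\dom(\Id\Gamma(\omega)))$ to $\Psi\in\dom(K^{\V{G}})=\dom(H_{\Lambda,\mathrm{D}}^{\V{0},\V{A},V})$. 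Both routes work; yours trades W\"{u}st's theorem for Nelson's commutator theorem (hidden in Thm.~\ref{thmMM}) and is perhaps more transparent, but it makes Thm.~\ref{thm-dom} depend on Thm.~\ref{thmMM}, which the paper deliberately avoids.

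One quantitative point you should fix: after summing Lemma~\ref{lemecho} over $j$, you cannot in general take $\beta=\tfrac12$, because the resulting coefficient $\tfrac14$ in front of $\sum_j\|v_j^*\theta^\eh\Psi\|^2$ is not large enough to absorb the contribution $\tfrac{a}{2}\sum_j\|v_j^*\theta^\eh\Psi\|^2$ coming from the KLMN bound on $\mathfrak{v}_\Lambda^-[\theta^\eh\Psi]$ unless $a\le\tfrac12$. The paper's choice $\beta=1-a$ is the one that makes the gradient contributions cancel precisely (leaving $(1-a)\mathfrak{v}_\Lambda^+[\theta^\eh\Psi]\ge0$, which can be discarded) and thus yields $\Re\SPn{\theta\Psi}{H_{\Lambda,\mathrm{D}}^{\V{G},\V{A},V}\Psi}\ge\mathfrak{f}_\Lambda[\theta^\eh\Psi]-C\|\theta^\eh\Psi\|^2$ for the full range $a\in[0,1)$. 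With that replacement the absorption argument, the identity $\mathfrak{f}_\Lambda[\theta^\eh\Psi]=\mathfrak{f}_\Lambda[\Psi]+\|\Id\Gamma(\omega)\Psi\|^2$, and the closedness step all go through as you describe.
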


Recall that Lem.~\ref{lem-free-ham}(2) identifies a natural class of cores for 
$H_{\Lambda,\mathrm{D}}^{\V{0},\V{A},V}$.

\begin{proof}
Without loss of generality we shall assume that $\mathfrak{s}_{\Lambda,\mathrm{D}}^{\V{A},V}\ge0$. 
From now on we will also drop all sub/superscripts $\Lambda$, $\V{A}$, and $V$ in this proof.

{\em Step~1.}
We start by observing that Lem.~\ref{lem-free-ham}(1) implies 
$\dom(H_{\mathrm{D},\alpha\omega}^{\V{0}})=\dom(H_{\mathrm{D}}^{\V{0}})$ 
and $H_{\mathrm{D},\alpha\omega}^{\V{0}}\Psi=T_{\mathrm{D}}^{\V{0}}\Psi
+\alpha\Id\Gamma(\omega)\Psi$, 
$\Psi\in\dom(H_{\mathrm{D}}^{\V{0}})$, for all $\alpha\ge1$.
Furthermore, the graph norms of $H_{\mathrm{D}}^{\V{0}}$ and 
$H_{\mathrm{D},\alpha\omega}^{\V{0}}$ are equivalent and both operators have the same
cores, for all $\alpha\ge1$. 
Thanks to Prop.~\ref{proprep}(1) we already know that
$\dom(H_{\mathrm{D}}^{\V{0}})\subset\dom(H_{\mathrm{D}}^{\V{G}})$.
We employ Lem.~\ref{lemrbK} with $m=0$ and fix $\alpha\ge1$ sufficiently large 
throughout this proof such that the operator given by
$K^{\V{G}}\Psi=H_{\mathrm{D}}^{\V{G}}\Psi+(\alpha-1)\Id\Gamma(\omega)\Psi$,
$\Psi\in\dom(K^{\V{G}})=\dom(H_{\mathrm{D}}^{\V{0}})$, is self-adjoint
and has the same operator cores as $H_{\mathrm{D}}^{\V{0}}$ and such that
\eqref{bakithi2}--\eqref{letitia2} are available.
Then \eqref{bakithi2} and Lem.~\ref{lem-free-ham}(1) imply
\begin{align}\nonumber
\|H_{\mathrm{D}}^{\V{G}}\Psi\|&\le\|H_{\mathrm{D}}^{\V{0}}\Psi\|
+\frac{1}{2}\|H_{\mathrm{D},\alpha\omega}^{\V{0}}\Psi\|+\gamma_1\|\Psi\|
\\\label{bakithi3}
&\le (\alpha+1)\|H_{\mathrm{D}}^{\V{0}}\Psi\|+(\alpha+\gamma_1)\|\Psi\|,
\end{align}
for all $\;\Psi\in\dom(H_{\mathrm{D}}^{\V{0}})$.

{\em Step~2.}
In this step we derive the following key estimate of this proof:
Set $\tilde{c}_{\V{G}}:=2\|\omega^\eh\V{G}\|_{L^\infty(\Lambda,\HP^\nu)}$.
Then there exists some $(a,b,\tilde{c}_{\V{G}})$-dependent constant $c_0>0$ such that, 
for all $\Psi\in\dom(H_{\mathrm{D}}^{\V{0}})$ and $\delta\in(0,1]$,
\begin{align}\nonumber
D[\Psi]&:=\|(H_{\mathrm{D}}^{\V{G}}+1)\Psi\|^2-\|\theta\Psi\|^2
-\|(H_{\mathrm{D}}^{\V{G}}-\Id\Gamma(\omega))\Psi\|^2
\\\label{bakithi4}
&\ge-\delta\|\theta\Psi\|^2-(c_0/\delta)\|\Psi\|^2.
\end{align}

In view of the already proven inclusion 
$\dom(H_{\mathrm{D}}^{\V{0}})\subset\dom(H_{\mathrm{D}}^{\V{G}})$, \eqref{bakithi3}, and the 
closedness of $H_{\mathrm{D}}^{\V{G}}$, it suffices to prove \eqref{bakithi4} for every element 
$\Psi$ in a core of $H_{\mathrm{D}}^{\V{0}}$. A suitable core is
$\dom(S_{\mathrm{D}})\otimes\dom(\theta^{\nf{3}{2}})$; recall Lem.~\ref{lem-free-ham}(2).

So, let $\Psi\in\dom(S_{\mathrm{D}})\otimes\dom(\theta^{\nf{3}{2}})$. Then 
$\Psi\in\dom(H_{\mathrm{D}}^{\V{G}})\cap L^2(\Lambda,\dom(\Id\Gamma(\omega)))$
and Lem.~\ref{lemecho} ensures that
$\theta\Psi\in\dom(\mathfrak{q}_{\mathrm{D}}^{\V{G}})=\dom(\mathfrak{t}_{\mathrm{D}}^{\V{G}})
\cap\dom(\mathfrak{f})$.
Taking also Lem.~\ref{lem-egon} into account we thus obtain
\begin{align}\nonumber
\SPn{\theta\Psi}{(H_{\mathrm{D}}^{\V{G}}-\Id\Gamma(\omega))\Psi}
&=\mathfrak{q}_{\mathrm{D}}^{\V{G}}[\theta\Psi,\Psi]-\mathfrak{f}[\theta\Psi,\Psi]
=\mathfrak{t}_{\mathrm{D}}^{\V{G}}[\theta\Psi,\Psi]
\\\label{echo1}
&=\frac{1}{2}\sum_{j=1}^\nu\SPb{{v}_j^*\theta\Psi}{{v}_j^*\Psi}
+\mathfrak{v}^+[\theta^\eh\Psi]-\mathfrak{v}^-[\theta^\eh\Psi].
\end{align}
By our assumptions on $V_\pm$  and the diamagnetic inequality \eqref{dia1},
\begin{align*}
\frac{a}{2}\sum_{j=1}^\nu\|{v}_j^*\theta^\eh\Psi\|^2+a\mathfrak{v}^+[\theta^\eh\Psi]
-\mathfrak{v}^-[\theta^\eh\Psi]\ge-b\|\theta^\eh\Psi\|^2,
\end{align*}
for all $\Psi\in\dom(S_{\mathrm{D}})\otimes\dom(\theta^{\nf{3}{2}})$.
Combining \eqref{echo1} and Lem.~\ref{lemecho} with $\beta:=1-a>0$ we then arrive at
\begin{align*}
D[\Psi]&=2\Re\SPn{\theta\Psi}{(H_{\mathrm{D}}^{\V{G}}-\Id\Gamma(\omega))\Psi}
\\
&\ge-\Big(\frac{2-a}{1-a}\tilde{c}_{\V{G}}^2+2b\Big)\|\theta^\eh\Psi\|^2
\ge-\delta\|\theta\Psi\|^2-\frac{1}{\delta}\Big(\frac{2-a}{1-a}\frac{\tilde{c}_{\V{G}}^2}{2}+b\Big)^2
\|\Psi\|^2,
\end{align*}
for all $\Psi\in\dom(S_{\mathrm{D}})\otimes\dom(\theta^{\nf{3}{2}})$ and $\delta\in(0,1]$. 

{\em Step 3.} Next, we show that the graph norms of 
$H_{\mathrm{D}}^{\V{G}}\!\!\upharpoonright_{\dom(H_{\mathrm{D}}^{\V{0}})}$ and $K^{\V{G}}$ 
are equivalent. This will imply that 
$H_{\mathrm{D}}^{\V{G}}\!\!\upharpoonright_{\dom(H_{\mathrm{D}}^{\V{0}})}$ 
is closed and that every core for $K^{\V{G}}$ is a core for 
$H_{\mathrm{D}}^{\V{G}}\!\!\upharpoonright_{\dom(H_{\mathrm{D}}^{\V{0}})}$ and vice versa;
by Step~1 of this proof we then also know that every core for $H_{\mathrm{D}}^{\V{0}}$ is a core for 
$H_{\mathrm{D}}^{\V{G}}\!\!\upharpoonright_{\dom(H_{\mathrm{D}}^{\V{0}})}$ and vice versa.

In view of \eqref{letitia2} it only remains to dominate the graph norm of $K^{\V{G}}$ by the
one of $H_{\mathrm{D}}^{\V{G}}\!\!\upharpoonright_{\dom(H_{\mathrm{D}}^{\V{0}})}$. 
To this end let $\Psi\in\dom(H_{\mathrm{D}}^{\V{0}})$.
Using $1\le\alpha$ in the first step and \eqref{bakithi4} in the second one, we obtain
\begin{align*}
\|K^{\V{G}}\Psi\|^2&\le2\alpha^2\big(\|(H_{\mathrm{D}}^{\V{G}}-\Id\Gamma(\omega))\Psi\|^2
+\|\theta\Psi\|^2\big)
\\
&\le2\alpha^2\big(\|(H_{\mathrm{D}}^{\V{G}}+1)\Psi\|^2
+\delta\|\theta\Psi\|^2+(c_0/\delta)\|\Psi\|^2\big),
\end{align*}
which in combination with $\|\theta\Psi\|\le2\|K^{\V{G}}\Psi\|+2(1+\gamma_1)\|\Psi\|$ 
(recall \eqref{letitia}) and for sufficiently small $\delta\in(0,1]$ readily leads to the bound 
$$
\|K^{\V{G}}\Psi\|\le c\alpha\|H_{\mathrm{D}}^{\V{G}}\Psi\|+c'\|\Psi\|,
$$
with a universal constant $c>0$ and a $(\alpha,\gamma_1,c_0)$-dependent constant $c'>0$.

{\em Step 4.} Finally, we show that 
$H_{\mathrm{D}}^{\V{G}}\!\!\upharpoonright_{\dom(H_{\mathrm{D}}^{\V{0}})}$ is a self-adjoint
restriction of the self-adjoint operator $H_{\mathrm{D}}^{\V{G}}$, which implies that
$H_{\mathrm{D}}^{\V{G}}\!\!\upharpoonright_{\dom(H_{\mathrm{D}}^{\V{0}})}
=H_{\mathrm{D}}^{\V{G}}$. To this end we apply the following theorem for operators
in a Hilbert space (\cite{Wuest1972}; see also \cite[Thm.~5.29]{Weidmann}):
\begin{enumerate}
\item[] {\sl If $A$ is self-adjoint, $B$ is symmetric and $A$-bounded, and $A+tB$ is closed,
for all $t\in[0,1]$, then $A+B$ is self-adjoint.}
\end{enumerate}
We apply this theorem with $A=K^{\V{G}}$ and $B=(1-\alpha)\Id\Gamma(\omega)$, so that
$B$ is $A$-bounded by \eqref{letitia}.
Furthermore, with these choices, $A+tB$ is equal to the operator that we
obtained, if we replaced $\omega$ by 
$\omega_t:=(1-t)\alpha\omega+t\omega$ in the construction of
$H_{\mathrm{D}}^{\V{G}}\!\!\upharpoonright_{\dom(H_{\mathrm{D}}^{\V{0}})}$, i.e., $A+tB=
H_{\mathrm{D},\omega_t}^{\V{G}}\!\!\upharpoonright_{\dom(H_{\mathrm{D},\omega_t}^{\V{0}})}$.
In particular, $A+B=H_{\mathrm{D}}^{\V{G}}\!\!\upharpoonright_{\dom(H_{\mathrm{D}}^{\V{0}})}$.
Since the pair $(\omega_t,\V{G})$ satisfies the assumptions of the theorem, for all $t\in[0,1]$,
every $A+tB$, $t\in[0,1]$, is closed according to Step~3.
\end{proof}

\begin{remark}\label{remZeeman}
{\em Linearly coupled fields and Zeeman terms.}

\smallskip

\noindent
Assume that $0\le V_+\in L^1_\loc(\Lambda)$ and that $\V{A}$ and $\V{G}$ 
fulfill all hypotheses of Thm.~\ref{thm-dom}. We
explain how to extend the theorem so as to cover additional linearly coupled fields 
or Zeeman terms accounting for additional spin degrees of freedom. To this end let 
$\tilde{\nu},s\in\NN$ and let $\sigma_1,\ldots,\sigma_{\tilde{\nu}}$ be 
Hermitian $s$\texttimes$s$-matrices with norm equal to one. 
We extend the configuration space as $\Lambda_*:=\Lambda\times\{1,\ldots,s\}$
and consider each matrix $\sigma_j$ as a self-adjoint operator on $L^2(\Lambda_*,\sF)$ 
by setting $(\sigma_j\Psi)(\V{x},\vs):=\sum_{\vs'=1}^s(\sigma_j)_{\vs,\vs'}\Psi(\V{x},\vs')$, for
a.e. $(\V{x},\vs)\in\Lambda_*$ and all $\Psi\in L^2(\Lambda_*,\sF)$.

An easy way to include possibly singular classical Zeeman terms $\vsigma\cdot\V{B}$ is to
generalize the constructions in Sect.~\ref{sec-ham} to (not necessarily non-negative)
matrix-valued  $V_-\in L^1_\loc(\Lambda,\LO(\CC^s))$. Then each $V_-(\V{x})$, $\V{x}\in\Lambda$, 
is assumed to be a Hermitian $s$\texttimes$s$-matrix and the condition \eqref{KLMN} is replaced by
\begin{align*}
\int_\Lambda\|V_-(\V{x})\|_{\LO(\CC^s)}|f(\V{x})|^2\Id\V{x}&\le 
a\mathfrak{s}_{\Lambda,\diamond}^{\V{0},V_+}[f]+b\|f\|^2,\quad 
f\in\dom(\mathfrak{s}_{\Lambda,\diamond}^{\V{0},V_+}),
\end{align*}
for some $a\in[0,1)$ and $b>0$.
The definition of the Schr\"{o}dinger forms with matrix-valued $V_-$ reads
\begin{align*}
\mathfrak{s}_{\Lambda_*,\diamond}^{\V{A},V}[f]
&:=\sum_{\vs=1}^s\mathfrak{s}_{\Lambda,\diamond}^{\V{A},V_+}[f(\cdot,\vs)]-
\sum_{\vs,\vs'=1}^s\int_\Lambda V_-(\V{x})_{\vs,\vs'}\ol{f}(\V{x},\vs)f(\V{x},\vs')\Id\V{x},
\end{align*}
for all $f\in\dom(\mathfrak{s}_{\Lambda_*,\diamond}^{\V{A},V_+})$,
where $\dom(\mathfrak{s}_{\Lambda_*,\diamond}^{\V{A},V_+})$ is the set of functions
$f:\Lambda_*\to\CC$ with $f(\cdot,\vs)\in\dom(\mathfrak{s}_{\Lambda,\diamond}^{\V{A},V_+})$,
for all $\vs\in\{1,\ldots,s\}$. Likewise, we replace the old definition of $\mathfrak{v}_\Lambda^-$ by
\begin{align*}
\mathfrak{v}_{\Lambda_*}^-[\Psi]&:=\sum_{\vs,\vs'=1}^s\int_{\Lambda}
V_-(\V{x})_{\vs,\vs'}\SPn{\Psi(\V{x},\vs)}{\Psi(\V{x},\vs')}_{\sF}\Id\V{x},
\end{align*}
where $\Psi(\cdot,\vs)\in\fdom(\|V_-\|_{\LO(\CC^s)}\id_{\sF})$, for all $\vs\in\{1,\ldots,s\}$.
All other forms and operators can be 
extended to forms and operators in $L^2(\Lambda_*,\sF)$ in an obvious fashion. The so-obtained
Pauli-Fierz operators will be indicated by the subscript $\Lambda_*$ in what follows. It is then clear 
that all results of Sect.~\ref{sec-ham} and all previous results of the present section hold 
{\em mutatis mutandis} for matrix-valued $V_-$ as well.

Having implemented the generalization just described, we can further add a quantized Zeeman term: 
Let $\V{F}=(F_1,\ldots,F_{\tilde{\nu}})\in L^\infty(\Lambda,\fdom(\omega^{-1})^{\tilde{\nu}})$ 
and abbreviate
\begin{align*}
\V{\sigma}\cdot\vp(\V{F})\Psi&:=\sum_{j=1}^{\tilde{\nu}}\sigma_j\vp(F_j)\Psi,\quad
\Psi\in \dom(\V{\sigma}\cdot\vp(\V{F})):=
L^2(\Lambda_*,\fdom(\Id\Gamma(\omega))).
\end{align*}
On account of \eqref{rbvp} and $\|\sigma_j\|=1$ the previous expressions are well-defined and
\begin{align}\label{rbZeeman}
\|\V{\sigma}\cdot\vp(\V{F})\Psi\|&\le c_{\V{F}}\|\Psi\|_{\fdom(\Id\Gamma(\omega))}
\le \ve\|\Id\Gamma(\omega)\Psi\|+\Big(c_{\V{F}}+\frac{c_{\V{F}}^2}{4\ve}\Big)\|\Psi\|,
\end{align}
for all $\ve>0$ and $\Psi\in L^2(\Lambda_*,\dom(\Id\Gamma(\omega)))$, with 
$$
c_{\V{F}}:=2\sum_{j=1}^{\tilde{\nu}}\|F_j\|_{L^\infty(\Lambda,\fdom(\omega^{-1}))}.
$$
In view of Lem.~\ref{lem-free-ham}(1) and \eqref{rbZeeman}, the symmetric operator
$\V{\sigma}\cdot\vp(\V{F})$ is infinitesimally $H^{\V{0},\V{A},V}_{\Lambda_*,\mathrm{D}}$-bounded.
By the above extension of Thm.~\ref{thm-dom} it is infinitesimally 
$H^{\V{G},\V{A},V}_{\Lambda_*,\mathrm{D}}$-bounded as well.
The Kato-Rellich theorem and the extended Thm.~\ref{thm-dom} thus imply that 
$H^{\V{G},\V{A},V}_{\Lambda_*,\mathrm{D}}-\V{\sigma}\cdot\vp(\V{F})$ is self-adjoint on the
domain $\dom(H^{\V{0},\V{A},V}_{\Lambda_*,\mathrm{D}})$ and that is has the same operator
cores as $H^{\V{0},\V{A},V}_{\Lambda_*,\mathrm{D}}$. 
\end{remark}



\section{Examples}\label{appex}


\subsection{Examples of coupling functions}\label{appcoupling}

\noindent
In what follows we give several examples for physically relevant choices of $\V{G}$ and $\V{F}$
with $\Lambda\subset\RR^3$. The given formulas are suitable for Pauli-Fierz operators for one 
electron; in Subsect.~\ref{ssecPauli} we shall explain how to deal with several electrons.
In all cases $\V{G}$ and $\V{F}$ are determined by an expansion into proper or generalized 
eigenfunctions of the Maxwell operator on divergence free vector fields satisfying perfect electric 
conductor boundary conditions, if the boundary $\partial\Lambda$ is non-empty. 
The latter boundary conditions require the tangential components of 
the electric field and the normal component of the magnetic field to vanish on $\partial\Lambda$. We 
also have to artificially introduce ultra-violet cut-offs damping the interaction with very high frequency 
modes. The measure space $(\cM,\fA,\mu)$ will always be the mode space in the generalized 
eigenfunction expansion.

For later reference, we first note a very simple observation:

\begin{remark}\label{remapp}
Let $\cM\times\Lambda\ni(k,\V{x})\mapsto\V{G}_{\V{x}}(k)\in\CC^\nu$ be measurable such that,
for every $k\in\cM$, the map $\Lambda\ni\V{x}\mapsto\V{G}_{\V{x}}(k)$ is locally integrable and
has a weak divergence denoted by $\Lambda\ni\V{x}\mapsto q_{\V{x}}(k)\in\CC$. 
Assume in addition that the components of $\V{G}$ and $q$ belong to
$L_\loc^1(\Lambda,\fdom(\omega^{-1}))$.
Then $\V{G}$ has a weak divergence computed in $\fdom(\omega^{-1})$ given by $\Div\V{G}=q$.

This is a direct consequence of the relevant definitions and Fubini's theorem.
\end{remark}

\begin{example}\label{exsmoothbd}
Let $\Lambda\subset\RR^3$ be a bounded, simply connected domain with smooth boundary 
$\partial\Lambda$ and exterior normal field $\V{n}\in C^\infty(\partial\Lambda,\RR^3)$. 
We shall consider a self-adjoint realization of the Maxwell operator in 
$L^2(\Lambda,\CC^{6})$ corresponding to perfect electric conductor (ec) boundary conditions.
To this end we shall briefly summarize some constructions and results of \cite{Schmidt1968}.
(The article \cite{Schmidt1968} deals with exterior domains, but the results quoted below hold
for bounded domains as well; cf. the survey article \cite{BirmanSolomyak1987} and the references
given there for a more general discussion of the Hilbert space theory of the Maxwell operator
on Lipschitz domains.) We start by setting
\begin{align*}
M&:=\begin{pmatrix}0&i\rot\\-i\rot&0\end{pmatrix}\!\!\upharpoonright_{C_0^\infty(\Lambda,\CC^6)},
\\
\sC_{\mathrm{ec}}&:=\{(\V{E},\V{B})\in C^2(\ol{\Lambda},\CC^6)\,|\;\V{n}\times\V{E}=\V{0}\,\text{on $
\partial\Lambda$}\},
\end{align*}
and defining $M_{\mathrm{ec}}:=\ol{M^*\!\!\upharpoonright_{\sC_{\mathrm{ec}}}}$.
It turns out that $M_{\mathrm{ec}}$ is self-adjoint, and so is its restriction, call it 
$M_{\mathrm{ec}}^\perp$, to the orthogonal complement of its kernel $\ker(M_{\mathrm{ec}})$.
The elements in $\dom(M_{\mathrm{ec}}^\perp)$ have a vanishing weak divergence.
It further turns out that $\dom(M_{\mathrm{ec}}^\perp)\subset W^{1,2}(\Lambda,\CC^6)$, so that
$(\V{E},\V{B})\in\dom(M_{\mathrm{ec}}^\perp)$ has a well-defined trace on $\partial\Lambda$.
Denoting this trace by $\upharpoonright_{\partial\Lambda}$, 
every $(\V{E},\V{B})\in\dom(M_{\mathrm{ec}}^\perp)$ indeed 
satisfies the full set of perfect electric conductor boundary conditions in the sense that
$\V{n}\times\V{E}\!\!\upharpoonright_{\partial\Lambda}=\V{0}$ and 
$\V{n}\cdot\V{B}\!\!\upharpoonright_{\partial\Lambda}=0$.
Thm.~2.2.9 in \cite{Schmidt1968} (see also \cite{BirmanSolomyak1987})
implies that $M_{\mathrm{ec}}^\perp$ has a compact resolvent.
We also observe that, if $(\V{E},\V{B})\in\dom(M_{\mathrm{ec}}^\perp)\setminus\{0\}$ is an 
eigenvector for the eigenvalue $\vo\in\RR\setminus\{0\}$ of $M_{\mathrm{ec}}$, then
$(\V{E},-\V{B})$ is an eigenvector for the eigenvalue $-\vo$ of $M_{\mathrm{ec}}$.
In particular, $(\V{E},-\V{B})\perp(\V{E},\V{B})$, thus $\|\V{E}\|=\|\V{B}\|$.
Putting these remarks together, we find a non-decreasing sequence $\{\omega_n\}_{n\in\NN}$ of 
strictly positive eigenvalues (counting multiplicities) and corresponding eigenvectors 
$\{(\V{E}_n,\V{B}_n)\}_{n\in\NN}$ of $M_{\mathrm{ec}}^\perp$, normalized 
such that $\|\V{E}_n\|=\|\V{B}_n\|=1$, with the property that 
$\{(\V{E}_n,\diamond\V{B}_n):n\in\NN,\,\diamond\in\{+,-\}\}$ is a complete orthogonal
system in $\ker(M_{\mathrm{ec}})^\perp$.
Since these eigenvectors are divergence free, we find
\begin{align*}
\omega_n^2\V{E}_n=i\omega_n\rot\V{B}_n=\rot\,\rot\V{E}_n=-\Delta\V{E}_n,\quad n\in\NN,
\end{align*}
and likewise $(\Delta+\omega_n^2)\V{B}_n=\V{0}$.
By elliptic regularity, $\V{E}_n,\V{B}_n\in C^\infty(\Lambda,\CC^3)$,
$n\in\NN$. The following asymptotics, which are {\em uniform in $\V{x}\in\Lambda$}, 
are proven in \cite[Satz~12]{MuellerNiemeyer1961},
\begin{align}\label{EnBnasymp}
\sum_{0<\omega_n<\tau}\frac{|\V{E}_n(\V{x})|^2}{\omega_n^2}\sim\frac{\tau}{\pi^2},
\quad
\;\;\sum_{0<\omega_n<\tau}{|\V{E}_n(\V{x})|^2}\sim\frac{\tau^2}{3\pi^2},\qquad\tau\to\infty.
\end{align}
The same asymptotic relations are valid with $\V{E}_n$ replaced by $\V{B}_n$.

In the above situation the
measure space $(\cM,\fA,\mu)$ equals $(\NN,\fP(\NN),\cZ)$, with $\cZ$ denoting
the counting measure on the power set $\fP(\NN)$. The dispersion relation is given by 
$\omega(n):=\omega_n$, $n\in\NN$.
Furthermore, we pick some measurable ultra-violet cut-off function $\chi:[0,\infty)\to[0,1]$ such that
$\chi(\tau)=\cO(\tau^{-\alpha})$, $\tau\to\infty$, for some $\alpha>2$. 
Then the recipe for quantizing the electromagnetic
radiation field described, e.g., in \cite[\textsection2.4.1]{Dutra2005} or 
\cite{PowerThirunamachandran1982} amounts to defining
\begin{align*}
\V{G}^\chi_{\V{x}}(n)&:=\ee{\chi(\omega_n)}{\omega_n^\mh}\V{E}_n(\V{x}),\quad
\V{F}^\chi_{\V{x}}(n):=\frac{\ee}{2}\chi(\omega_n)\omega_n^\eh\V{B}_n(\V{x}),
\end{align*}
for all $\V{x}\in{\Lambda}$ and $n\in\NN$.
Here $\ee\in\RR$ accounts for some combination of physical constants. Then we have the relation 
$-(i/2)\rot\V{G}^\chi_{\V{x}}(n)=\V{F}^\chi_{\V{x}}(n)$, $\V{x}\in\Lambda$, $n\in\NN$. By the choice
of $\chi$ and by virtue of \eqref{EnBnasymp} and its analogue for the $\V{B}_n$,
\begin{align*}
\|\V{G}^\chi\|_{L^\infty(\Lambda,\fdom(\omega^{-1}+\omega)^3)}^2&=\ee^2
\sup_{\V{x}\in\Lambda}\sum_{n\in\NN}\Big(\frac{1}{\omega_n^2}+1\Big)
\chi(\omega_n)^2|\V{E}_n(\V{x})|^2<\infty,
\\
\|\V{F}^\chi\|_{L^\infty(\Lambda,\fdom(\omega^{-1})^3)}^2&=\frac{\ee^2}{4}
\sup_{\V{x}\in\Lambda}\sum_{n\in\NN}(1+\omega_n)\chi(\omega_n)^2|\V{B}_n(\V{x})|^2<\infty.
\end{align*}
Furthermore, the fact that each $\V{E}_n$ is divergence free and Rem.~\ref{remapp} ensure that
$\V{G}^\chi$ has the weak $\fdom(\omega^{-1})$-valued divergence $\Div\V{G}^\chi=0$.
\end{example}

\begin{example} Next, we recall the two perhaps most common cases where the
Maxwell equations with perfect electric conductor boundary conditions are explicitly solvable.
In both items below, $\ee\in\RR$, the cut-off $\chi:[0,\infty)\to[0,1]$ is measurable and satisfies
$\chi(\tau)=\cO(\tau^{-\alpha})$, $\tau\to\infty$, for some $\alpha>2$, and 
$\omega(\V{k},\lambda):=\omega(\V{k}):=|\V{k}|$, $\V{k}\in\RR^3$, $\lambda\in\{1,2\}$.
Moreover, $\V{\ve}_\lambda:S^2\to S^2$, $\lambda\in\{1,2\}$, are measurable such that
$\{{\V{a}},\V{\ve}_1({\V{a}}),\V{\ve}_2({\V{a}})\}$ is an oriented orthonormal
basis of $\RR^3$, for every ${\V{a}}\in S^2$. Finally, $\mr{\V{k}}:=|\V{k}|^{-1}\V{k}$, 
if $\V{k}\in\RR^3\setminus\{\V{0}\}$. In both cases
$\V{G}^\chi$ fulfills the hypothesis in Thm.~\ref{thm-dom} with $\Div\V{G}^\chi=0$ 
(due to Rem.~\ref{remapp}) and $\V{F}_{\V{x}}^\chi:=-(i/2)\rot\V{G}_{\V{x}}^\chi$ then
fulfills the hypothesis in Rem.~\ref{remZeeman}.
\begin{enumerate}[leftmargin=0.67cm]
\item[(1)] Let $\V{\ell}\in(0,\infty)^3$ and consider the parallelepiped
$\Lambda(\V{\ell}):=(0,\ell_1)\times(0,\ell_2)\times(0,\ell_3)$; see, e.g., 
\cite[\textsection2.7]{Milonni1994}.
Put $\mathbb{L}:=\{(\pi n_1/\ell_1,\pi n_2/\ell_2,\pi n_3/\ell_3)\in\RR^3:n_1,n_2,n_3\in\NN_0\}$ and
let $\mathbb{L}_*$ be the set of all $\V{k}\in\mathbb{L}$ having at most one vanishing component.
Set $\cM:=\mathbb{L}_*\times\{1,2\}$ and suppose that the measure $\mu$ gives weight
$1$ to $(\V{k},\lambda)\in\cM$, if no component of $\V{k}$ vanishes, and weight $1/2$
otherwise. In this situation, analogues of the modes $\V{E}_n$ appearing in Ex.~\ref{exsmoothbd} 
are indexed by $(\V{k},\lambda)\in\cM$. They read
\begin{align*}
\V{E}_{(\V{k},\lambda)}(\V{x})&:=\frac{\sqrt{8}}{\sqrt{\ell_1\ell_2\ell_3}}
\begin{pmatrix}
\ve_{\lambda,1}(\mr{\V{k}})\cos(k_1x_1)\sin(k_2x_2)\sin(k_3x_3)\\
\ve_{\lambda,2}(\mr{\V{k}})\sin(k_1x_1)\cos(k_2x_2)\sin(k_3x_3)\\
\ve_{\lambda,3}(\mr{\V{k}})\sin(k_1x_1)\sin(k_2x_2)\cos(k_3x_3)
\end{pmatrix},
\end{align*}
for all $\V{x}\in\ol{\Lambda(\V{\ell})}$,
if no component of $\V{k}$ vanishes. If precisely one component of $\V{k}$ vanishes, then we have to
replace the components $\ve_{\lambda,\ell}$ of the polarization vectors by a complex number
of modulus $1$ in the above
formula, which yields the same mode for $\lambda=1$ and $\lambda=2$ and explains the choice
of $\mu$. Then every $\V{E}_{(\V{k},\lambda)}$ is normalized, divergence free, and satisfies 
$\V{n}\times\V{E}_{(\V{k},\lambda)}\!\!\upharpoonright_{\partial\Lambda(\V{\ell})}=\V{0}$
and $\Delta\V{E}_{(\V{k},\lambda)}=-\omega(\V{k})^2\V{E}_{(\V{k},\lambda)}$.
Hence, we define $\V{G}_{\V{x}}^\chi(\V{k},\lambda):=\ee{\chi(\omega(\V{k}))}
{\omega(\V{k})^\mh}\V{E}_{(\V{k},\lambda)}(\V{x})$, for all $(\V{k},\lambda)\in\cM$.
\item[(2)] In the case $\Lambda=\RR^3$ we choose $\cM=\RR^3\times\{1,2\}$ and $\mu$ is the
product of the Lebesgue-Borel measure on $\RR^3$ with the counting measure on $\{1,2\}$.
It is common to choose
$\V{G}_{\V{x}}^\chi(\V{k},\lambda):=\ee(2\pi)^{-\nf{3}{2}}(2\omega(\V{k}))^\mh
\chi(\omega(\V{k}))e^{-i\V{k}\cdot\V{x}}\V{\ve}_\lambda(\mr{\V{k}})$.
\end{enumerate}
\end{example}


\subsection{More on the entire Euclidean space}\label{appLS}

\begin{example}
Consider the case $\Lambda=\RR^\nu$ where $\V{A}\in L^4_\loc(\RR^\nu,\RR^\nu)$ has a
weak divergence $\Div\V{A}\in L^2_\loc(\RR^\nu)$. Let 
$\V{G}\in L^\infty(\RR^\nu,\fdom(\omega^{-1}+\omega)^\nu)$ have a weak divergence
$\Div\V{G}\in L^\infty(\RR^\nu,\fdom(\omega^{-1}))$. Finally, let $V_\pm\in L^2_\loc(\RR^\nu)$,
$V_\pm\ge0$, such that $V_-$ is relatively $-\frac{1}{2}\Delta$-bounded (in the operator sense)
with relative bound $<1$. Then $\mathfrak{s}_{\RR^\nu,\mathrm{D}}^{\V{A},V}
=\mathfrak{s}_{\RR^\nu,\mathrm{N}}^{\V{A},V}$ and 
$\mathfrak{q}_{\RR^\nu,\mathrm{D}}^{\V{G},\V{A},V}
=\mathfrak{q}_{\RR^\nu,\mathrm{N}}^{\V{G},\V{A},V}$ by \cite{SimonJOT1979} and
Cor.~\ref{corminmax}, respectively. In particular, the Dirichlet and Neumann realizations of
the Schr\"{o}dinger operator agree, and the same holds for the Pauli-Fierz operator.
Hence, we may drop all subscripts $\mathrm{D}$ and $\mathrm{N}$ in what follows.

Under the above conditions the Leinfelder-Simader theorem \cite[Thm.~3]{LeinfelderSimader1981}
says that $S^{\V{A},V}_{\RR^\nu}$ is essentially self-adjoint on $\sD(\RR^\nu)$.
Hence, by Thm.~\ref{thm-dom}, $H^{\V{G},\V{A},V}_{\RR^\nu}$ is essentially self-adjoint
on $\sD(\RR^\nu)\otimes\sE$, for every core $\sE$ of the field energy
$\Id\Gamma(\omega)$.
\end{example}


\subsection{${N}$-particle Hamiltonians and Pauli principle}\label{ssecPauli}

\noindent
The next example clarifies in particular how the examples of Subsect.~\ref{appcoupling} are extended 
to several electrons.

\begin{example}
Let $\Lambda\subset\RR^3$ be open, $N\in\NN$, $N>1$, and 
$\Lambda_*^N:=\Lambda^N\times\{-1,1\}^N$, i.e., the Cartesian products
$\Lambda^N$ and $\{-1,1\}^N$ will play the roles of the position and spin spaces, respectively. 
The corresponding position and spin variables will be denoted as 
$\ul{\V{x}}=(\V{x}_1,\ldots,\V{x}_N)$ and $\ul{\vs}:=(\vs_1,\ldots,\vs_N)$.
%
%
If $\hat{\sigma}_1$, $\hat{\sigma}_2$, and $\hat{\sigma}_3$ are the standard Pauli matrices, whose
entries we label by $(\vs,\vs')\in\{-1,1\}^2$, then the $3N$ components of $\V{\sigma}$
are the $2^N$\texttimes$2^N$-matrices given by
\begin{align*}
(\sigma_{3(\ell-1)+j})_{\ul{\vs},\ul{\vs}'}:=(\hat{\sigma}_j)_{\vs_\ell,\vs_\ell'}\cdot\left\{
\begin{array}{ll}
1,&\;\;\text{if $\vs_i=\vs_i'$, for $i\in\{1,\ldots,N\}\setminus\{\ell\}$,}
\\
0,&\;\;\text{otherwise,}
\end{array}
\right.
\end{align*}
for $\ul{\vs},\ul{\vs}'\in\{-1,1\}^N$, $\ell\in\{1,\ldots,N\}$, and $j\in\{1,2,3\}$.
%
%
Assuming that $\V{G}\in L^\infty(\Lambda,\fdom(\omega^{-1}+\omega)^3)$ has a weak divergence
$\Div\V{G}\in L^\infty(\Lambda,\fdom(\omega^{-1}))$ and 
$\V{F}\in L^\infty(\Lambda,\fdom(\omega^{-1})^3)$,
we introduce $\V{G}^N\in L^\infty(\Lambda,\fdom(\omega^{-1}+\omega)^{3N})$ 
with a weak divergence $\Div\V{G}^N\in L^\infty(\Lambda,\fdom(\omega^{-1}))$ and
$\V{F}^N\in L^\infty(\Lambda,\fdom(\omega^{-1})^{3N})$ by setting
$$
\V{G}^N_{\ul{\V{x}}}:=(\V{G}_{\V{x}_1},\ldots,\V{G}_{\V{x}_N}),\quad \ul{\V{x}}\in\Lambda^N,
$$
and defining $\V{F}^N$ analogously. We suppose that $\V{A}\in L^2_\loc(\Lambda,\RR^3)$ and put
$$
\V{A}^N(\ul{\V{x}}):=(\V{A}(\V{x}_1),\ldots,\V{A}(\V{x}_N)),\quad \ul{\V{x}}\in\Lambda^N.
$$
Finally, we pick $V_+\in L^1_\loc(\Lambda^N)$, $V_+\ge0$, and 
$V_-\in L_\loc^1(\Lambda^N,\LO(\CC^{2^N}))$
such that $\|V_-\|_{\LO(\CC^{2^N})}$ is relatively form bounded with respect to $-1/2$ times the
Dirichlet-Laplacian on $\Lambda^N$ with relative form bound $<1$. 
Let $\fS_N$ be the set of permutations of $\{1,\ldots,N\}$, put
$\pi^*\ul{\V{x}}:=(\V{x}_{\pi(1)},\ldots,\V{x}_{\pi(N)})$, for all $\pi\in\fS_N$, and define
$\pi^*\ul{\vs}$ analogously. Then we further assume that $V_+(\pi^*\ul{\V{x}})=V_+(\ul{\V{x}})$ and
$$
V_-(\pi^*\ul{\V{x}})_{\pi^*\ul{\vs},\pi^*\ul{\vs}'}
=V_-(\ul{\V{x}})_{\ul{\vs},\ul{\vs}'},\quad\ul{\V{x}}\in\Lambda^N,\;\ul{\vs},\ul{\vs}'\in\{-1,1\}^N,\;\pi\in\fS_N.
$$
Finally, let $\cA_N$ be the orthogonal projection onto the space of functions 
obeying the Pauli principle, i.e.,
$$
(\cA_N\Psi)(\ul{\V{x}},\ul{\vs}):=\frac{1}{N!}\sum_{\pi\in\fS_N}\sgn(\pi)
\Psi(\pi^*\ul{\V{x}},\pi^*\ul{\vs}),
\quad\text{a.e. $(\ul{\V{x}},\ul{\vs})\in\Lambda_*^N$,}
$$
for every $\Psi\in L^2(\Lambda_*^N,\sF)$.
Then it is straightforward to show that $\Ran(\cA_N)$ is a reducing subspace for 
$$
H_N:=H_{\Lambda_*^N,\mathrm{D}}^{\V{G}^N,\V{A}^N,V}-\vsigma\cdot\vp(\V{F}^N).
$$
It now follows from Rem.~\ref{remZeeman} that the restriction
$H_N\!\!\upharpoonright_{\Ran(\cA_N)}$ is self-adjoint on the domain
$\cA_N\dom(H_{\Lambda_*^N,\mathrm{D}}^{\V{0},\V{A}^N,V})$. Furthermore, if
$\sC$ is a core for the Dirichlet-Schr\"{o}dinger operator 
$S_{\Lambda_*^N,\mathrm{D}}^{\V{A}^N,V}$ and $\sE$ is a core for
$\Id\Gamma(\omega)$, then $\cA_N(\sC\otimes\sE)$ is a core for 
$H_N\!\!\upharpoonright_{\Ran(\cA_N)}$.
\end{example}

The potential $V$ in the previous example could for instance be a sum of classical Zeeman terms
and a multi-particle Coulomb potential
for a molecule in a half space bounded by a perfectly conducting wall. In this case the multi-particle 
Coulomb potential contains the electrostatic interactions between all charged particles (electrons
and nuclei) {\em and} their image charges behind the wall; see, e.g., 
\cite{PowerThirunamachandran1982} for explicit formulas.


\section{The Neumann case}\label{appNeumann}

\noindent
For mathematical curiosity we derive a version of Thm.~\ref{thm-dom} for Neumann boundary
conditions in this final section. While the behavior of $\V{G}$ at the boundary $\partial\Lambda$
and the regularity of  $\partial\Lambda$ did not play any role for the validity of Thm.~\ref{thm-dom}, 
in the Neumann case $\V{G}$ and $\partial\Lambda$ have to satisfy suitable boundary and regularity 
conditions, respectively. It turns out that perfect {\em magnetic} conductor 
boundary conditions permit to derive an analogue of the integration by parts formula 
\eqref{bakithi100div}. These are boundary conditions imposed on the Maxwell operator requiring
the tangential components of the magnetic field and the normal component of the electric field 
(and hence of $\V{G}$) to vanish on $\partial\Lambda$. In other words, the roles of the electric and
magnetic fields are switched in comparison to perfect electric conductor boundary conditions.

Throughout the whole section we shall assume that $\nu\ge2$
and the boundary $\partial\Lambda$ is Lipschitz with exterior normal field $\V{n}$; 
see, e.g., \cite[\textsection4.2.1]{EvansGariepy} for a definition of Lipschitz boundaries and their 
normal fields. The symbol $\cH^{\nu-1}$ will denote the $(\nu-1)$-dimensional Hausdorff measure.
The classical vector potential is assumed to have a locally square-integrable extension to the whole
Euclidean space, $\V{A}\in L^2_\loc(\RR^\nu,\RR^\nu)$. The coupling function
$\V{G}\in L^\infty(\Lambda,\fdom(\omega^{-1}+\omega)^\nu)$ has weak partial derivatives with respect
to all variables such that $\partial_{x_j}\V{G}\in L^\infty(\Lambda,\fdom(\omega^{-1})^\nu)$, for
all $j\in\{1,\ldots,\nu\}$. As in the scalar case we then see that (a unique representative of) 
$\V{G}:\Lambda\to\fdom(\omega^{-1})^\nu$ is locally Lipschitz continuous. Since we can
always choose the $L^\infty(\Lambda,\fdom(\omega^{-1})^{\nu\times\nu})$-norm of $\nabla\V{G}$ 
as a local Lipschitz constant, $\V{G}$ has a unique continuous $\fdom(\omega^{-1})^\nu$-valued 
extension to $\ol{\Lambda}$. We postulate that
\begin{align}\label{tracehypG}
\V{n}\cdot\V{G}=0,\quad
\text{$\cH^{\nu-1}$-a.e. on $\partial\Lambda$.}
\end{align}


\begin{lemma}\label{lemNeumannBC}
Under the assumptions on $\V{G}$ described in the preceding paragraph,
let $f\in C_0^\infty(\ol{\Lambda})$, $\phi\in\fdom(\Id\Gamma(\omega))$, and let
$\Psi\in L^1_\loc(\Lambda,\sF)$ have weak partial derivatives with respect to all variables.
Then $\SPn{f\vp(\V{G})\phi}{\Psi}_{\sF}\in W^{1,1}(\Lambda,\CC^\nu)$ and
$\V{n}\cdot\SPn{f\vp(\V{G})\phi}{\Psi}_{\sF}\!\!\upharpoonright_{\partial\Lambda}=0$, where
$\upharpoonright_{\partial\Lambda}$ denotes the trace of Sobolev functions.
\end{lemma}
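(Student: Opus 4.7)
The plan is to compute the weak partial derivatives of the $\CC^\nu$-valued function $h(\V{x}):=f(\V{x})\bigl(\SPn{\vp(G_{\ell,\V{x}})\phi}{\Psi(\V{x})}_\sF\bigr)_{\ell=1}^\nu$ via a Leibniz-type formula, check that $h$ and its derivatives are globally integrable, and then deduce the vanishing of the normal trace from the Gauss-Green identity on Lipschitz domains combined with a cutoff argument that exploits the boundary condition $\V{n}\cdot\V{G}=0$.

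First, I would apply Lemma~\ref{LeibnizL1div} (or rather an analogue giving each weak partial derivative, obtained by choosing all but one component of $\V{G}$ equal to zero) with $\sK_1:=\fdom(\omega^{-1})$, $\sK_2:=\sF$, $\sK_3:=\CC$, and the real-bilinear continuous pairing $b(G,\psi):=\SPn{\vp(G)\phi}{\psi}_\sF$; its boundedness follows from $\phi\in\fdom(\Id\Gamma(\omega))$ via \eqref{rbvp}. Combined with the ordinary product rule for the scalar factor $f\in C_0^\infty(\ol{\Lambda})$, this yields
$$
\partial_{x_j}\bigl[f\SPn{\vp(G_\ell)\phi}{\Psi}_\sF\bigr]
=(\partial_{x_j}f)\SPn{\vp(G_\ell)\phi}{\Psi}_\sF
+f\SPn{\vp(\partial_{x_j}G_\ell)\phi}{\Psi}_\sF
+f\SPn{\vp(G_\ell)\phi}{\partial_{x_j}\Psi}_\sF
$$
as an identity in $L^1_\loc(\Lambda,\CC)$. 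The essential boundedness of $\V{G}$ and $\partial_{x_j}\V{G}$ in $\fdom(\omega^{-1})^\nu$, combined with \eqref{rbvp} and the compact support of $f$ in $\ol{\Lambda}$ weighted against the local integrability of $\Psi$ and $\partial_{x_j}\Psi$, then provides global $L^1(\Lambda)$-bounds for $h$ and for each of its weak partial derivatives, so $h\in W^{1,2}$… indeed, $h\in W^{1,1}(\Lambda,\CC^\nu)$.

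For the trace statement I would invoke the Gauss-Green formula on Lipschitz domains: it suffices to prove
$$
\int_\Lambda\bigl[\nabla F\cdot h+F\,\mathrm{div}\,h\bigr]\Id\V{x}=0
\qquad\text{for every }F\in C^\infty(\ol{\Lambda}).
$$
Inserting the Leibniz formula derived above and absorbing $F\nabla f+f\nabla F=\nabla(Ff)$ reduces the claim to an integration-by-parts identity for the $L^1_\loc$-field $\V{x}\mapsto\SPn{\vp(\V{G}_{\V{x}})\phi}{\Psi(\V{x})}_\sF$ (whose weak divergence is provided by Lemma~\ref{LeibnizL1div}) tested against the compactly supported function $Ff\in C_0^\infty(\ol{\Lambda})$. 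For test functions supported strictly inside $\Lambda$ this is exactly the defining identity of a weak divergence, and the extension to $C_0^\infty(\ol{\Lambda})$ is obtained by inserting a smooth cutoff $\chi_n\in C_0^\infty(\Lambda)$ that equals $1$ outside a $2/n$-neighbourhood of $\partial\Lambda$ and vanishes in a $1/n$-neighbourhood, and then letting $n\to\infty$.

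The main obstacle, and the only place where the boundary condition on $\V{G}$ enters, is controlling the error term $\int_\Lambda(Ff)\nabla\chi_n\cdot\SPn{\vp(\V{G})\phi}{\Psi}_\sF\Id\V{x}$ generated by the cutoff. Taking $\chi_n$ as a smoothed function of $\dist(\cdot,\partial\Lambda)$, the gradient $\nabla\chi_n$ concentrates in a layer of width $O(1/n)$, has magnitude $O(n)$, and points approximately along the inward normal. The hypothesis $\V{n}\cdot\V{G}=0$ on $\partial\Lambda$, together with the Lipschitz continuity of the continuous extension $\V{G}:\ol{\Lambda}\to\fdom(\omega^{-1})^\nu$, forces the integrand to carry a compensating factor of order $\dist(\V{x},\partial\Lambda)$ in the relevant direction, so that the $O(n)\cdot O(1/n)$-scaling against the $O(1/n)$-thin layer gives a vanishing contribution in the limit $n\to\infty$. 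The delicate part of the argument is to make this cancellation rigorous uniformly near the Lipschitz boundary — for example, by first constructing an auxiliary Lipschitz vector field $\tilde{\V{n}}$ on $\ol{\Lambda}$ extending $\V{n}$ and rewriting the error term so that the integrand is pointwise bounded by $\dist(\V{x},\partial\Lambda)\cdot\|\nabla\chi_n\|\cdot\|\Psi\|_\sF$, after which dominated convergence closes the estimate.
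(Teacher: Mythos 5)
Your derivation of the Leibniz formula and the conclusion $h\in W^{1,1}(\Lambda,\CC^\nu)$ are correct and essentially the same as the paper's first step (which cites Lem.~\ref{lem-Leibniz-vp}, itself a consequence of Lem.~\ref{LeibnizL1div}), so the first half of the proposal is fine.

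The trace half, however, has a genuine gap. Your plan reduces the claim to the vanishing of the error term $\int_\Lambda F f\,\nabla\chi_n\cdot\SPn{\vp(\V{G})\phi}{\Psi}_\sF\,\Id\V{x}$, and you propose to exhibit a pointwise compensating factor $\dist(\V{x},\partial\Lambda)$ by ``constructing an auxiliary Lipschitz vector field $\tilde{\V{n}}$ on $\ol{\Lambda}$ extending $\V{n}$.'' For a Lipschitz boundary the exterior normal $\V{n}$ is only an $L^\infty$ vector field on $\partial\Lambda$ defined $\cH^{\nu-1}$-a.e.; it is in general discontinuous (think of a corner), so no continuous --- let alone Lipschitz --- extension to $\ol{\Lambda}$ exists. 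Without such an extension, the chain ``$\nabla\chi_n$ is approximately normal $\Rightarrow$ $\nabla\chi_n\cdot\V{G}_{\V{x}}$ is $O(\dist(\V{x},\partial\Lambda))$'' is unsubstantiated: $\nabla\dist(\cdot,\partial\Lambda)$ need not align with the normal at the nearest boundary point uniformly near a Lipschitz boundary, and the Lipschitz modulus of $\V{G}$ alone does not give the needed pointwise bound once $\V{n}$ is allowed to oscillate. So the crucial cancellation is not closed.

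The paper avoids this entirely. It keeps the cutoff function $\vr$ away from the boundary issue and instead fixes a bounded Lipschitz domain $\Lambda'\subset\Lambda$ containing $\supp f\cup\supp\vr$, expands $\vr\Psi$ against an orthonormal basis $\{e_n\}$ of $\sF$, and writes $\SPn{f\vp(\V{G})\phi}{\vr\Psi}_\sF$ as a $W^{1,1}(\Lambda')$-limit of the partial sums $\V{X}_m=\sum_{n\le m}\SPn{f\vp(\V{G})\phi}{e_n}_\sF\SPn{e_n}{\vr\Psi}_\sF$. Each summand is a product of the scalar $W^{1,1}(\Lambda')$ factor $\SPn{e_n}{\vr\Psi}_\sF$ with the $\CC^\nu$-valued field $\SPn{f\vp(\V{G})\phi}{e_n}_\sF$, which by \eqref{contvp} is continuous on $\ol{\Lambda'}$ and Lipschitz, so its trace is the pointwise restriction. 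Hypothesis \eqref{tracehypG} then gives $\V{n}\cdot\SPn{f\vp(\V{G})\phi}{e_n}_\sF=0$ $\cH^{\nu-1}$-a.e. directly, and the spherical part of $\partial\Lambda'$ contributes nothing because $f$ vanishes there. Hence $\V{n}\cdot\V{X}_m\!\!\upharpoonright_{\partial\Lambda'}=0$, and passing to the limit $m\to\infty$ is justified by continuity of the trace operator $W^{1,1}(\Lambda')\to L^1(\partial\Lambda',\cH^{\nu-1})$. This replaces your boundary-layer cancellation by an off-the-shelf continuity argument and avoids any need to relate $\nabla\chi_n$ to the normal.
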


\begin{proof}
Pick some $\psi\in\sF$. Applying Lem.~\ref{lem-Leibniz-vp} to the vectors
$\V{G}^{(j,\ell)}$ with components $G^{(j,\ell)}_{k}:=\delta_{j,k}G_\ell$, for all
$j,\ell,k\in\{1,\ldots,\nu\}$, we convince ourselves
that $\SPn{f\vp(\V{G})\phi}{\psi}_{\sF}\in W^{1,\infty}(\Lambda,\CC^\nu)$ and 
$\SPn{f\vp(\V{G})\phi}{\Psi}_{\sF}\in W^{1,1}(\Lambda,\CC^\nu)$ and that their
weak partial derivatives can be computed by formally applying Leibniz rules.
In view of \eqref{contvp} we further know that 
$\SPn{f\vp(\V{G})\phi}{\psi}_{\sF}\in C(\ol{\Lambda},\CC^\nu)$.


Pick some $\vr\in C_0^\infty(\RR^\nu,\RR)$ with $\vr=1$ on $\supp(f)$ and
let $\{e_n:n\in\NN\}$ be an orthonormal basis of $\sF$. Let $\Lambda'$ be the intersection of
$\Lambda$ with a sufficiently large open ball containing the supports of $f$ and $\vr$.
Then $\SPn{e_n}{\vr\Psi}_{\sF}\in W^{1,1}(\Lambda')$, $n\in\NN$, have a well-defined trace on 
$\partial\Lambda'$ and we infer from the above remarks that the functions
$\V{X}_m:=\sum_{n=1}^m\SPn{f\vp(\V{G})\phi}{e_n}_{\sF}\SPn{e_n}{\vr\Psi}_{\sF}
\in W^{1,1}(\Lambda',\CC^\nu)$
satisfy $\V{n}\cdot\V{X}_m\!\!\upharpoonright_{\partial\Lambda'}=0$, for all $m\in\NN$.
The trace $\upharpoonright_{\partial\Lambda'}:W^{1,1}(\Lambda')\to L^1(\partial\Lambda',\cH^{\nu-1})$ 
on the bounded domain $\Lambda'$ is continuous. Hence, we may conclude by observing that 
$\V{X}_m\to\SPn{f\vp(\V{G})\phi}{\Psi}_{\sF}$, $m\to\infty$, in $W^{1,1}(\Lambda',\CC^\nu)$.
\end{proof}

We finally assume that $V_\pm\in L^1_\loc(\Lambda)$, $V_\pm\ge0$, are such 
that \eqref{KLMN} is satisfied with $\diamond=\mathrm{N}$ and for some $a\in[0,1)$ and $b\ge0$.

\begin{theorem}\label{thmdomN}
In the situation described above, $\dom(H_{\Lambda,\mathrm{N}}^{\V{G},\V{A},V})
=\dom(H_{\Lambda,\mathrm{N}}^{\V{0},\V{A},V})$ and
\begin{align}\label{bakithi1Neumann}
H_{\Lambda,\mathrm{N}}^{\V{G},\V{A},V}\Psi&=
H_{\Lambda,\mathrm{N}}^{\V{0},\V{A},V}\Psi-\vp(\V{G})\cdot{\V{w}^*}\Psi
+\frac{1}{2}\vp(\V{G})^2\Psi+\frac{i}{2}\vp(q)\Psi,
\end{align}
for all $\Psi\in\dom(H_{\Lambda,\mathrm{N}}^{\V{G},\V{A},V})$. 
The graph norms of $H_{\Lambda,\mathrm{N}}^{\V{G},\V{A},V}$ and 
$H_{\Lambda,\mathrm{N}}^{\V{0},\V{A},V}$ are equivalent and, in particular,
both operators have the same cores.
\end{theorem}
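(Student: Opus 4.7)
The strategy is to mimic the proof of Theorem~\ref{thm-dom} step by step, checking that each of its ingredients admits a Neumann counterpart. Most of them do: Lemmas~\ref{lem-free-ham}, \ref{lem-ida}, \ref{lemecho} and \ref{lemrbK} are already formulated for both $\diamond\in\{\mathrm{D},\mathrm{N}\}$, while the r\^ole of Lemma~\ref{lem-egon} is played in the Neumann case by the defining identity $\mathfrak{q}_{\Lambda,\mathrm{N}}^{\V{G},\V{A},V}=\mathfrak{t}_{\Lambda,\mathrm{N}}^{\V{G},\V{A},V}+\mathfrak{f}_\Lambda$ on the common domain $\dom(\mathfrak{t}_{\Lambda,\mathrm{N}}^{\V{G},\V{A},V_+})\cap\dom(\mathfrak{f}_\Lambda)$. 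The real work is in transporting Proposition~\ref{proprep}(1), the proof of which relies on the integration by parts formula~\eqref{bakithi100div} from Lemma~\ref{lemmagnLeibniz}, derived with test functions $\Phi\in\sD(\Lambda,\sF)$ that vanish near $\partial\Lambda$.

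The heart of the proof is therefore a \emph{Neumann integration by parts}: for $\Psi\in L^2(\Lambda,\fdom(\Id\Gamma(\omega)))\cap\bigcap_{j}\dom(w_j^*)$ with $w_j^*\Psi\in L^2(\Lambda,\fdom(\Id\Gamma(\omega)))$, I would establish
\[
\sum_{j=1}^\nu\SPn{w_j^*(f\phi)}{\vp(G_j)\Psi}=\SPn{f\phi}{\vp(\V{G})\cdot\V{w}^*\Psi-i\vp(q)\Psi}
\]
for every $f\in C_0^\infty(\ol{\Lambda})$ and $\phi\in\fdom(\Id\Gamma(\omega))$. Mimicking the proof of Lemma~\ref{lemmagnLeibniz} via Lemma~\ref{lem-Leibniz-vp} shows that the scalar-valued weak divergence of $\SPn{f\vp(\V{G})\phi}{\Psi}_{\sF}$ in $L^1_\loc(\Lambda)$ equals the difference of the two sides above; since $f$ has compact support and the vector field $\SPn{f\vp(\V{G})\phi}{\Psi}_{\sF}$ belongs to $W^{1,1}(\Lambda,\CC^\nu)$ with vanishing normal trace on $\partial\Lambda$ by Lemma~\ref{lemNeumannBC}, the divergence theorem on Lipschitz domains makes the surface term drop, and the identity follows. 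This is where the boundary hypothesis \eqref{tracehypG} is used.

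With this in hand, the Neumann analogue of Proposition~\ref{proprep}(1) is obtained as in the Dirichlet proof: for $\Psi\in\dom(H_{\Lambda,\mathrm{N}}^{\V{0},\V{A},V})$, Lemma~\ref{lem-ida} places $\Psi$ in $\dom(\mathfrak{q}_{\Lambda,\mathrm{N}}^{\V{G},\V{A},V})$, and testing against $\Phi=f\phi$ with $f\in C_0^\infty(\ol{\Lambda})\cap\dom(\mathfrak{s}_{\Lambda,\mathrm{N}}^{\V{A},V_+})$ and $\phi\in\fdom(\Id\Gamma(1\vee\omega))$ one splits $\mathfrak{t}_{\mathrm{N}}^{\V{G}}[\Phi,\Psi]$ according to the decomposition $\mathfrak{t}_{\mathrm{N}}^{\V{G}}=\mathfrak{t}_{\mathrm{N}}^{\V{0}}+\mathfrak{b}^{\V{G}}+\mathfrak{c}^{\V{G}}$ from Step~2 of the proof of Lemma~\ref{lem-ida} and converts $\mathfrak{c}^{\V{G}}[\Phi,\Psi]$ into an operator acting on $\Psi$ via the Neumann integration by parts above. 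A standard mollification-and-cutoff argument adapted to Lipschitz boundaries shows that such $\Phi$'s span a form core of $\mathfrak{q}_{\Lambda,\mathrm{N}}^{\V{G},\V{A},V}$ (using Corollary~\ref{corformcore} and $\V{A}\in L^2_\loc(\RR^\nu,\RR^\nu)$), and then the first representation theorem yields $\Psi\in\dom(H_{\Lambda,\mathrm{N}}^{\V{G},\V{A},V})$ together with \eqref{bakithi1Neumann}.

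From this point on the four-step argument proving Theorem~\ref{thm-dom} transfers verbatim with $\mathrm{D}$ replaced by $\mathrm{N}$: Lemma~\ref{lemrbK} applied to the Neumann realization (its proof only invokes the representation~\eqref{bakithi1Neumann}, Lemma~\ref{lem-free-ham}(1), \eqref{rbvp}, \eqref{rbvp2} and \eqref{bakithi1b}, all available in the Neumann case) gives the comparison operator $K^{\V{G}}_{\mathrm{N}}$ with bounds \eqref{bakithi2}--\eqref{letitia2}; the key estimate \eqref{bakithi4} is re-derived using Lemma~\ref{lemecho} (valid for $\diamond=\mathrm{N}$), the trivial identity replacing Lemma~\ref{lem-egon}, and the diamagnetic inequality via Corollary~\ref{cor-dia} with $\diamond=\mathrm{N}$; the graph-norm equivalence in Step~3 and the application of W\"ust's theorem in Step~4 are purely operator-theoretic and unchanged. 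The main obstacle, as indicated above, is the Neumann integration by parts itself: it is the only step in which the Lipschitz regularity of $\partial\Lambda$ and the boundary condition \eqref{tracehypG} enter the argument, everything else being a bookkeeping exercise on top of results already established in the paper.
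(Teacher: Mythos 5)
Your proposal mirrors the paper's own proof essentially step by step: the key Neumann integration by parts via Lemma~\ref{lemNeumannBC} and the divergence theorem (where the boundary hypothesis \eqref{tracehypG} enters), the observation that $C_0^\infty(\ol{\Lambda})\otimes\fdom(\Id\Gamma(1\vee\omega))$ is a form core, and the verbatim transfer of the Dirichlet machinery from Lemma~\ref{lemrbK} and Theorem~\ref{thm-dom} with $\mathrm{D}$ replaced by $\mathrm{N}$. The only small inaccuracy is the opening claim that Lemma~\ref{lemrbK} is ``already formulated for both $\diamond\in\{\mathrm{D},\mathrm{N}\}$'' --- it is stated for Dirichlet only --- but you later correctly observe that its proof carries over unchanged, which is exactly what the paper does.
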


Recall that a natural class of operator cores for $H_{\Lambda,\mathrm{N}}^{\V{0},\V{A},V}$ has
been identified in Lem.~\ref{lem-free-ham}(2). Zeeman terms accounting for spin degrees of freedom
can be added in the previous theorem by the same arguments as in Rem.~\ref{remZeeman}.

\begin{proof}
Essentially, we only have to extend \eqref{bakithi100div} to test functions $\Phi$ that might be
non-vanishing on the boundary. This is done in the first step below, which is the only one where the 
boundary condition on $\V{G}$ is used explicitly.

{\em Step~1.} Fix $f\in C_0^\infty(\ol{\Lambda})$, $\phi\in\sF$, and 
$\Psi\in\dom(H_{\Lambda,\mathrm{N}}^{\V{0},\V{A},V})$. 
On account of Lem.~\ref{lem-free-ham}(3), $\Psi$ satisfies the
assumptions in Lem.~\ref{lemmagnLeibniz}, and
$\SPn{f\phi}{\vp(\V{G})\Psi}_{\sF}\in W^{1,1}(\Lambda,\CC^\nu)$ by Lem.~\ref{lemNeumannBC}. 
Since the identity \eqref{bakithi99b} in the proof of Lem.~\ref{lemmagnLeibniz} is available,
we further conclude that
\begin{align*}
\Div\SPn{f\phi}{\vp(\V{G})\Psi}_{\sF}&=\SPn{f\phi}{i\vp(\V{G})\cdot\V{w}^*\Psi+\vp(q)\Psi}_{\sF}
\\
&\quad+\sum_{j=1}^\nu\SPn{(\partial_{x_j}-iA_j)f\phi}{\vp({G}_j)\Psi}_{\sF},
\end{align*}
where $q:=\Div\V{G}$. It is easy to see that $f\phi\in\dom(w_j^*)$
with ${w}_j^*(f\phi)=(-i\partial_{x_j}f-A_jf)\phi$. 
Combining Thm.~3 on page~127 and Thm.~1 on page 133 of \cite{EvansGariepy} we observe that
the divergence theorem applies to $\SPn{f\phi}{\vp(\V{G})\Psi}_{\sF}$ and the intersection of
$\Lambda$ with some large open ball, which together with Lem.~\ref{lemNeumannBC} and the
above remarks implies
\begin{align}\label{livinia1}
\SPn{f\phi}{&\vp(\V{G})\cdot\V{w}^*\Psi-i\vp(q)\Psi}
-\sum_{j=1}^\nu\SPn{w_j^*(f\phi)}{\vp({G}_j)\Psi}=0.
\end{align}

{\em Step~2.} Next, we observe that $C_0^\infty(\ol{\Lambda})$ is a core for the form
$\mathfrak{s}_{\Lambda,\mathrm{N}}^{\V{A},V_+}$. In fact, the condition 
$\V{A}\in L_\loc^2(\RR^\nu,\RR^\nu)$ ensures that
$C_0^\infty(\ol{\Lambda})\subset\dom(\mathfrak{s}_{\Lambda,\mathrm{N}}^{\V{A},V_+})$.
Furthermore, that
$\mathfrak{s}_{\Lambda,\mathrm{N}}^{\V{A},V_+}\cap L^\infty(\Lambda)$ is a core for
$\mathfrak{s}_{\Lambda,\mathrm{N}}^{\V{A},V_+}$ follows from the argument in
\cite[Step~1 on p.~125]{HundertmarkSimon}. If 
$f\in\dom(\mathfrak{s}_{\Lambda,\mathrm{N}}^{\V{A},V_+})\cap L^\infty(\Lambda)$, then we can pick
$\vt_n\in C_0^\infty(\RR^\nu)$ with $\vt_{n+1}=1$ on $\supp(\vt_n)$, $n\in\NN$,
$\bigcup_{n\in\NN}\supp(\vt_n)=\RR^\nu$, and $\sup_n\|\nabla\vt_n\|_\infty<\infty$.
Observing $\sum_{n=1}^\infty(\nabla\vt_n)f\in L^2(\Lambda,\RR^\nu)$ we can then follow
the reasoning in \cite[Step~2 on p.~125]{HundertmarkSimon} to see that
$\vt_nf\to f$ with respect to the form norm of $\mathfrak{s}_{\Lambda,\mathrm{N}}^{\V{A},V_+}$.
Finally, if $g\in\dom(\mathfrak{s}_{\Lambda,\mathrm{N}}^{\V{A},V_+})\cap L^\infty(\Lambda)$ has a
compact support, then $A_jg\in L^2(\Lambda)$, which implies $\partial_{x_j}g\in L^2(\Lambda)$,
as we a priori know that $\partial_{x_j}g=iA_jg+{\sf w}_j^*g$ in $L^1_\loc(\Lambda)$.
Invoking \cite[Thm.~3 on p.~127]{EvansGariepy}, 
we find $g_n\in C^\infty(\ol{\Lambda})$, $n\in\NN$, such that
$g_n\to g$, $n\to\infty$, in $W^{1,2}(\Lambda)$ and pointwise a.e. on $\Lambda$.
A glance at the proof of \cite[Thm.~3 on p.~127]{EvansGariepy} reveals that we may further
assume that $\|g_n\|_\infty\le\|g\|_\infty$, $n\in\NN$, and that all $g_n$ and $g$ have their
supports contained in some fixed compact set. Employing the dominated
convergence theorem we then conclude that
${\sf w}_j^*g_n=-i\partial_{x_j}g_n-A_jg_n\to -i\partial_{x_j}g-A_jg={\sf w}_j^*g$ and
$V_+^\eh g_n\to V_+^\eh g$ in $L^2(\Lambda)$.

Combining this result with Cor.~\ref{corformcore} we see that
$C_0^\infty(\ol{\Lambda})\otimes\fdom(\Id\Gamma(1\vee\omega))$ is a core for the form
$\mathfrak{q}_{\Lambda,\mathrm{N}}^{\V{G},\V{A},V}$.

{\em Step~3.} If we employ \eqref{zerlqN} and \eqref{livinia1} instead of
\eqref{egon99} and Lem.~\ref{lemmagnLeibniz}, respectively, and choose
$C_0^\infty(\ol{\Lambda})\otimes\fdom(\Id\Gamma(1\vee\omega))$ instead of
$\sD(\Lambda,\fdom(\Id\Gamma(1\vee\omega)))$ as a core, then, apart from the very last sentence,
all arguments in the proof of Prop.~\ref{thm-dom}(1) remain valid after the subscript
$\mathrm{D}$ has been replaced by $\mathrm{N}$ everywhere. This proves
\eqref{bakithi1Neumann} for all $\Psi\in\dom(H_{\Lambda,\mathrm{N}}^{\V{0},\V{A},V})$.

{\em Step~4.} Employing \eqref{bakithi1Neumann} for 
$\Psi\in\dom(H_{\Lambda,\mathrm{N}}^{\V{0},\V{A},V})$ instead of Prop.~\ref{proprep}(1) and
replacing the subscript $\mathrm{D}$ by $\mathrm{N}$ everywhere, we can now 
literally follow the proofs of Lem.~\ref{lemrbK} and Thm.~\ref{thm-dom} to arrive at the full assertion.
(We employ \eqref{zerlqN} instead of Lem.~\ref{lem-egon} in the analogue of \eqref{echo1};
notice that Lem.~\ref{lem-free-ham} and Lem.~\ref{lemecho} cover the Neumann case.)
\end{proof}


\bigskip

\noindent{\bf Acknowledgement.}
The author is grateful for support by the VILLUM foundation via the project grant 
``Spectral Analysis of Large Particle Systems'', and for support by the International Network 
Programme grant ``Exciting Polarons'' from the Danish Ministry for Research and Education.


\end{document}